\documentclass[11pt]{lmcs}
	\makeatletter
	\def\endfront@text{}
	\makeatother
\title{Denotational cost semantics for functional languages with 
inductive types}

\usepackage{adjustbox}
\usepackage{amsmath}
\usepackage{amsfonts}
\usepackage{bussproofs}
	
\usepackage[labelfont=bf]{caption}
\usepackage{doi}
\usepackage{flushend}
\usepackage[T1]{fontenc}
\usepackage{graphicx}
\usepackage{listings}
	\lstset{style=mlcode}
	\lstset{keywords={}}
	\lstMakeShortInline|
\usepackage{microtype}
\usepackage[]{natbib}
\usepackage{paralist}
\usepackage{proof}
\usepackage{stmaryrd}
\usepackage{suffix}
\usepackage[normalem]{ulem}
	\setlength{\marginparwidth}{1in}
	\newcommand{\comment}[3][\relax]{%
	\ifx#1\relax\else\uwave{#1}\fi\marginpar{\tiny {#2}: {#3}}}
	\newcommand{\ndComment}[2][\relax]{%
	\ifx#1\relax\comment{ND}{#2}\else\comment[#1]{ND}{#2}\fi}
	\newenvironment{longComment}[1]{\begin{quotation}\footnotesize {#1}:\enspace}{\end{quotation}}
	
\usepackage{url}

\usepackage{hyperref}

\newcommand{\proofcase}[1]{\smallbreak\noindent\textsc{Case:\enspace {#1}.}}



%
%

\let\union=\cup

\let\cross=\times

\let\isom=\cong
\let\cong=\equiv

\let\equiv=\sim
\def\isomto{\mathrel{\hbox{$\to$\kern-.85em\raise1ex\hbox{{$\scriptstyle \isom$}}}}\;}
\def\ndiv{{\not \kern -.05em |\ }}

%
%
\let\comp=\circ

\def\dom{\mathrm{Dom}\;}


%
%
\expandafter\ifx\csname Bbb\endcsname\relax

\else

\fi

\newcommand{\seq}[2][\relax]{
  \ifx#1\relax
    \langle#2\rangle
  \else
    \ifx#1\left
      \left\langle#2\right\rangle
    \else
      \csname #1l\endcsname\langle#2\csname #1r\endcsname\rangle
    \fi
  \fi
}

\newcommand{\set}[2][\relax]{
  \ifx#1\relax
    \{#2\}
  \else
    \ifx#1\left
      \left\{#2\right\}
    \else
      \csname #1l\endcsname\{#2\csname #1r\endcsname\}
    \fi
  \fi
}

\def\arrow{\mathbin{\rightarrow}}

\def\llambda{{\lambda\hskip-.45em\lambda}}	

\def\fv{\mathop{\mathrm{fv}}\nolimits}

\def\oftype{\mathbin{:}}

\def\subst#1#2#3{#1[#2/#3]}

\let\den\tmden

\def\trans#1{\lVert{#1}\rVert}

\def\letexp#1#2#3{\lstinline!let $\;#1\;$ = $\;#2\;$ in $\;#3$!}
\def\ifexp#1#2#3{\lstinline!if $\;#1\;$ then $\;#2\;$ else $\;#3$!}
\def\caseexp#1#2#3#4#5{\lstinline!case $\;#1\;$ of ($#2, [#3, #4]#5$)!}
\def\foldexp#1#2#3#4#5#6{\lstinline!fold $\;#1\;$ of ($#2, [#3, #4, #5]#6$)!}
\def\foldexpv#1#2#3#4#5#6{\lstinline!fold$^v$ $\;#1\;$ of ($#2, [#3, #4, #5]#6$)!}

\let\evalto\downarrow
\def\oftype{\mathbin{:}}
\let\proves\vdash


\EnableBpAbbreviations
\def\doRule#1{\ifx#1\relax\else\RightLabel{#1}\fi}
\newcommand{\ndAXC}[2][\relax]{\AXC{$\mathstrut$}\doRule{#1}\UIC{#2}}
\newcommand{\ndUIC}[2][\relax]{\doRule{#1}\UIC{#2}}
\newcommand{\ndBIC}[2][\relax]{\doRule{#1}\BIC{#2}}
\newcommand{\ndTIC}[2][\relax]{\doRule{#1}\TIC{#2}}

\def\evalin#1#2#3{{#1}\evalto^{#3}{#2}}

\def\tmoftype#1#2{{#1}\oftype{#2}}
\newcommand{\typepf}[3][{}]{\proves_{#1}\tmoftype{#2}{#3}}

\newcommand{\typejudge}[4][{}]{{#2}\typepf[{#1}]{#3}{#4}}

\newcommand{\typejudgeM}[3][{}]{\typejudge[#1]\mctx{#2}{#3}}
\newcommand{\typejudgeS}[3][{}]{\typejudge[#1]\sctx{#2}{#3}}

\newcommand{\tmle}[3][{}]{{#2}\leq_{#1}{#3}}
\newcommand{\lepf}[4][{}]{\proves_{#1}{\tmle[#2]{#3}{#4}}}
\newcommand{\lejudge}[5][{}]{{#2}\lepf[{#1}]{#3}{#4}{#5}}

\newcommand{\lejudgeMP}[4][{}]{\lejudge[]\mctx{#2}{#3}{#4}}
\newcommand{\lejudgekw}[0]{\leq}

\newcommand{\typing}[3][\relax]{\ifx#1\relax\tmoftype{#2}{#3}\else\typejudge{#1}{#2}{#3}\fi}
\newcommand{\typingG}[3][{}]{\typejudge[#1]\Gamma{#2}{#3}}

\newcommand{\isder}[2]{{#1}\mathrel{::}{#2}}

\def\OK{\mathord{\srckeyw{ok}}}
\def\ok#1#2{{#1}\proves{#2}\,\OK}

\def\TYPE{\mathord{\srckeyw{type}}}
\newcommand\istype[2]{{#1}\proves{#2}\,\TYPE}

\newcommand\cl[2]{{#1}[{#2}]}
\WithSuffix\def\cl*#1#2{\cl{(#1)}{#2}}

\def\bindenv#1#2{{#1}\mapsto{#2}}

\def\unitenv#1#2{\{\bindenv{#1}{#2}\}}
\def\extend#1#2#3{#1,#2/#3}
\def\extendenv#1#2#3{#1{\{#2\mapsto#3\}}}

\def\typot#1{\langle\!\langle#1\rangle\!\rangle}
\let\cpxy\trans

\def\N{\mathbf{N}}
\def\Z{\mathbf{Z}}

\let\bmax\vee
\let\bigmax\bigvee
\def\pcaseexp#1#2#3#4#5{\lstinline!pcase $\;#1\;$ of ($#2, [#3, #4]#5$)!}
\def\pfoldexp#1#2#3#4#5#6{\lstinline!pfold $\;#1\;$ of ($#2, [#3, #4, #5]#6$)!}

\let\bounded\sqsubseteq
\def\vbounded{\bounded^{\mathrm{val}}}

\def\srckeyw#1{\lstinline!#1!}

\def\sprod#1#2{{#1}\times{#2}}

\def\DC{C}
\def\sconstrdecl#1#2{{#1}~\lstinline!of!~{#2}}
\def\sconstrsep{\mid}
\def\sdatatypekw{\srckeyw{datatype}}
\def\sdatadecl#1#2#3{\srckeyw{datatype}\:{#1} = \overline{\sconstrdecl{#2}{#3}}}
\WithSuffix\def\sdatadecl*#1#2{\srckeyw{datatype}\:{#1} = {#2}}

\def\srcconstr#1{\mathop{\srckeyw{#1}}\nolimits}
\def\scase{\srcconstr{case}}
\def\sC{C}
\def\sdelaykw{\srcconstr{delay}}
\newcommand\sdelay[1]{\sdelaykw(#1)}
\def\sforcekw{\srcconstr{force}}
\newcommand\sforce[1]{\sforcekw(#1)}
\def\sproj{\pi}
\def\ssplitkw{\srcconstr{split}}
\newcommand{\ssplit}[4]{\ensuremath{\ssplitkw(#1,\bind{{#2}.{#3}}{#4})}}
\def\sletkw{\srcconstr{let}}
\newcommand{\slet}[3]{\sletkw({#1}, \bind{#2}{#3})}
\def\smapkw{\srcconstr{map}}
\newcommand{\smap}[3]{\smapkw^{{#1}}({#2}, {#3})}
\def\sreckw{\srcconstr{rec}}
\newcommand{\srec}[4][{}]{\sreckw^{#1}(#2, \overline{{#3}\mapsto{#4}})}
\WithSuffix\def\srec*#1#2#3#4#5{\srec{#1}{#2}{\bind{#3}{#5}}}
\newcommand{\srecm}[3][{}]{\sreckw^{#1}(#2, {#3})}

\def\stuple#1{\ensuremath{\langle #1\rangle}}
\def\spair#1#2{\stuple{#1,#2}}
\def\ssusp#1{\srckeyw{susp}\:#1}
\def\striv{\langle\,\rangle}

\def\cpykeyw#1{\mathop{\mathsf{#1}}\nolimits}

\def\cunit{\cpykeyw{unit}}
\let\ctriv\striv
\def\cnat{\cpykeyw{nat}}
\def\clist{\cpykeyw{list}}
\def\cbool{\cpykeyw{bool}}
\def\cint{\cpykeyw{int}}
\def\ctree{\cpykeyw{tree}}

\def\cemp{\cpykeyw{Emp}}
\def\cnode{\cpykeyw{Node}}
\def\cnil{\cpykeyw{Nil}}
\def\ccons{\cpykeyw{Cons}}
\def\cZero{\cpykeyw{Zero}}
\def\cSucc{\cpykeyw{Succ}}

\def\cC{C}

\def\cproj{\pi}

\def\cmapkw{\cpykeyw{map}}
\newcommand{\cmap}[3]{\cmapkw^{{#1}}({#2}, {#3})}
\def\creckw{\cpykeyw{rec}}
\newcommand{\crec}[4][{}]{\creckw^{#1}(#2, \overline{{#3}\mapsto{#4}})}
\WithSuffix\def\crec*#1#2#3#4#5{\crec{#1}{#2}{\bind{#3}{#5}}}
\newcommand{\crecm}[3][{}]{\creckw^{#1}(#2, {#3})}

\newcommand\clistrec[4]{\creckw(#1,\cnil \mapsto #2, \ccons \mapsto \bind{#3}{#4})}

\def\ctuple#1{\ensuremath{\langle {#1} \rangle}}
\def\cpair#1#2{\ctuple{{#1},{#2}}}
\def\ctriv{\langle\,\rangle}

\def\cconstrdecl#1#2{{#1}\:\cpykeyw{of}\:{#2}}
\def\cdatatypekw{\cpykeyw{datatype}}
\def\cdatadecl#1#2#3{\cpykeyw{datatype}\:{#1} = \overline{\cconstrdecl{#2}{#3}}}
\WithSuffix\def\cdatadecl*#1#2{\cpykeyw{datatype}\:{#1} = {#2}}

\def\sdatatype{\srckeyw{datatype}}


\newcommand{\crecdecl}[4][{}]{\cpykeyw{rec}^{#1}(#2, \overline{{#3}\mapsto{#4}})}
\newcommand{\crecdeclm}[3][{}]{\cpykeyw{rec}^{#1}(#2, {#3})}

\def\D{\ensuremath\delta}
\def\DD{\ensuremath\Delta}
\def\sbool{\srckeyw{bool}}
\def\sint{\srckeyw{int}}
\def\slist{\srckeyw{list}}
\def\snat{\srckeyw{nat}}
\def\stree{\srckeyw{tree}}
\def\sunit{\srckeyw{unit}}

\def\snil{\srckeyw{Nil}}
\def\scons{\srckeyw{Cons}}
\def\sZero{\srckeyw{Zero}}
\def\sSucc{\srckeyw{Succ}}
\def\strue{\srckeyw{True}}
\def\sfalse{\srckeyw{False}}
\def\semp{\srckeyw{Emp}}
\def\snode{\srckeyw{Node}}

\def\smem{\srckeyw{mem}}
\def\slistmap{\srckeyw{listmap}}
\def\streemap{\srckeyw{treemap}}

\def\ctrue{\cpykeyw{True}}
\def\cfalse{\cpykeyw{False}}

\def\plusc{+_c}
\def\costpluscpy#1#2{{#1}\plusc{#2}}
\def\costplusone#1{\costpluscpy{1}{#1}}

\def\subst#1#2#3{\ensuremath{{#1}[{#2}/{#3}]}}
\WithSuffix\def\subst*#1#2{\ensuremath{{#1}[#2]}}






\newcommand{\dcd}[1]{\ensuremath{\mathit{#1}}}
\newcommand{\dsd}[1]{\ensuremath{\mathsf{#1}}}


\newcommand{\tm}[1]{\dcd{#1}}


\let\tptmns\tmoftype

\newcommand{\oftps}[4]{\ensuremath{\typejudge[{#2}]{#1}{#3}{#4}}}

\newcommand\arr[2]{\ensuremath{{#1}\to{#2}}}
\newcommand\prd[2]{{#1}\cross{#2}}

\newcommand\softps\oftps
\newcommand\sprd\prd
\newcommand\sarr\arr
\newcommand\slam\ulam
\newcommand\sapp\app
\newcommand\sdcon[2]{\ensuremath{{#1} \: #2}}

\newcommand\seval[3]{\ensuremath{\evalin{#1}{#3}{#2}}}
\newcommand\sctx\gamma  
\newcommand\ssig\psi    
\newcommand{\emptysig}{\langle\,\rangle}
\newcommand\sspf\phi 
\newcommand\stm{e} 
\newcommand\sv{v} 
\newcommand\stp{\tau} 
\renewcommand\sc{n} 
\newcommand\ssub\theta 

\newcommand\sarg[2]{\ensuremath{\subst*{#1}{#2}}}

\newcommand\swfsig[1]{\ensuremath{#1 \: \srckeyw{sig}}}

\newcommand\sdconbind[3]{\ensuremath{{#1} : (#2 \to {#3})}}


\newcommand\datadecl\sdatadecl

\newcommand{\mle}[1][{}]{\le_{#1}}
\newcommand{\mge}[1][{}]{\ge_{#1}}
\newcommand\homj[3]{\ensuremath{#2 \le_{#1} #3}}

\newcommand\dcon[2]{\ensuremath{{#1} \: #2}}
\renewcommand{\prec}[5]{\ensuremath{\dsd{rec}(#1, #2 \mapsto #3.#5)}}

\newcommand\C[0]{\ensuremath{\mathbf{C}}}

\newcommand\mmap[3]{\ensuremath{\dsd{map}^{#1}(#2,#3)}}

\newcommand\mctx\Gamma 
\newcommand\msig\Psi 
\newcommand\mspf\Phi 
\newcommand\mtm{E} 
\newcommand\mtp{T} 
\newcommand\mdtp{\Delta} 
\newcommand\msub\Theta 

\newcommand\mprod[2]{{#1}\cross{#2}}
\let\mpair\spair
\newcommand\mproj{\mathop\pi\nolimits}

\newcommand\marg[2]{\ensuremath{#1[#2]}}

\newcommand{\semkw}[1]{\mathit{#1}}
\newcommand{\seminj}{\semkw{inj}}
\newcommand{\semcase}{\semkw{case}}
\newcommand{\semmapkw}{\semkw{map}}
\newcommand{\semsize}{\mathop{\semkw{size}}\nolimits}
\newcommand{\semsz}{\semkw{sz}}
\newcommand{\semlam}[2]{\llambda{#1}.{#2}}
\newcommand{\sempair}[2]{({#1}, {#2})}
\newcommand{\semrec}[1][{}]{\semkw{rec}^{{#1}}}

\newcommand\pottr[1]{\ensuremath{\langle\!\langle #1 \rangle\!\rangle}}
\newcommand\comptr[1]{\ensuremath{\trans{#1}}}

\newcommand\cst[1]{\ensuremath{{#1_{c}}}}
\WithSuffix\def\cst*#1{\cst{(#1)}}
\newcommand\pot[1]{\ensuremath{{#1_{p}}}}
\WithSuffix\def\pot*#1{\pot{(#1)}}

\newcommand\subbound[3]{\ensuremath{ #1 \sqsubseteq^{\textit{sub}}_{#3} #2}}

\newcommand{\expbound}[3]{\ensuremath{{#1}\bounded_{#3}{#2}}}
\newcommand{\valbound}[3]{\ensuremath{{#1}\vbounded_{#3}{#2}}}
\newcommand{\spfexpbound}[4]{\ensuremath{{#1}\bounded_{#4,#3}{#2}}}
\newcommand{\spfvalbound}[4]{\ensuremath{{#1}\vbounded_{#4,#3}{#2}}}

\newcommand\reccall{t}
\newcommand\costzero{\ensuremath{0}}
\newcommand\costone{\ensuremath{1}}
\newcommand\costplus{\ensuremath{+}}
\newcommand\costplusfst{\ensuremath{+}_\dsd{c}}

\let\pair\spair
\def\fst{\ensuremath{\mathop{\pi_0}}}
\def\snd{\ensuremath{\mathop{\pi_1}}}

\newcommand{\ulam}[2]{\tm{\lambda{#1}.{#2}}}


\newcommand{\fn}[3]{\tm{\fn\,\tptmns{#1}{#2}.\,#3}}


\newcommand{\app}[2]{\dcd{#1 \: #2}}

\newcommand{\tsubst}[2]{\subst*{#1}{#2}}

\newcommand{\congctx}{\mathcal{C}}
\newcommand{\hole}{[\,]}

\def\den#1#2{\llbracket{#1}\rrbracket{#2}}

\def\size{\mathop{\lstinline!size!}}

\def\code#1{{\lstinline!#1!}}

\def\bind#1#2{{{#1}.{#2}}}

\renewcommand{\typingG}[3][{}]{\typejudge[#1]\sctx{#2}{#3}}

\usepackage{hyperref}

\clubpenalty = 10000
\widowpenalty = 10000
\displaywidowpenalty = 10000

\begin{document}


\author{Norman Danner}
\address{Wesleyan University, USA}
\email{ndanner@wesleyan.edu}
\thanks{Norman Danner's research is supported by 
the National Science Foundation under grant no.\ 1318864.}

\author{Daniel R. Licata}
\address{Wesleyan University, USA}
\email{dlicata@wesleyan.edu}
\thanks{Daniel R. Licata's research is sponsored by The United States Air
Force Research Laboratory under agreement number FA9550-15-1-0053. The
U.S. Government is authorized to reproduce and distribute reprints for
Governmental purposes notwithstanding any copyright notation thereon.
The views and conclusions contained herein are those of the authors and
should not be interpreted as necessarily representing the official
policies or endorsements, either expressed or implied, of the United
States Air Force Research Laboratory, the U.S. Government or Carnegie
Mellon University.}

\author{Ramyaa Ramyaa}
\address{Wesleyan University, USA}
\curraddr{New Mexico Tech, USA}
\email{ramyaa@wesleyan.edu}

\subjclass{%
F.3.1 [Logics and meanings of programs]: Specifying and verifying 
and reasoning about programs;
F.3.2 [Logics and meanings of programs]: Semantics of programming languages}

\keywords{Semi-automatic complexity analysis.}

\titlecomment{A version of this paper appears in \emph{ICFP 2015}.}

\begin{abstract}
A central method for analyzing the asymptotic complexity of a functional
program is to extract and then solve a recurrence that expresses
evaluation cost in terms of input size. The relevant notion of input
size is often specific to a datatype, with measures including the length
of a list, the maximum element in a list, and the height of a tree.  In
this work, we give a formal account of the extraction of cost and size
recurrences from higher-order functional programs over inductive
datatypes. Our approach allows a wide range of programmer-specified
notions of size, and ensures that the extracted recurrences correctly
predict evaluation cost.  To extract a recurrence from a program, we
first make costs explicit by applying a monadic translation from the
source language to a complexity language, and then abstract datatype
values as sizes.  Size abstraction can be done semantically, working in
models of the complexity language, or syntactically, by adding rules to
a preorder judgement.  We give several different models of the
complexity language, which support different notions of size.
Additionally, we prove by a logical relations argument that recurrences
extracted by this process are upper bounds for evaluation cost; the
proof is entirely syntactic and therefore applies to all of the models
we consider.
\end{abstract}

\maketitle

\section{Introduction}
\label{sec:intro}

The typical method for analyzing the asymptotic complexity of a
functional program is to extract a recurrence that relates the
function's running time to the size of the function's input, and then solve the
recurrence to obtain a closed form and big-$O$ bound.  Automated
complexity analysis (see the related work in
Section~\ref{sec:related_work}) provides helpful information to
programmers, and could be particularly useful for giving feedback to
students.  In a setting with higher-order functions and
programmer-defined datatypes, automating the extract-and-solve method
requires a generalization of the standard theory of recurrences.  This
generalization must include a notion of recurrence for higher-order
functions such as $\mathtt{map}$ and $\mathtt{fold}$, as well as a
general theory of what constitutes ``the size of the input'' for
programmer-defined datatypes.

One notion of recurrence for higher-order functions was developed in
previous work by \citet{danner-royer:two-algs} and
\citet{danner-et-al:plpv13}.  Because the output of one function is the
input to another, it is necessary to extract from a function not only a
recurrence for the running time, but also a recurrence for the size of
the output.  These can be packaged together as a single recurrence that,
given the size of the input, produces a pair consisting of the running
time (called the \emph{cost}) and the size of the output (called the
\emph{potential}).  Whereas the former is the cost of executing the
program to a value, the latter determines the cost of using that value.
This generalizes naturally to higher-order functions: a recurrence for a
higher-order function is itself a higher-order function, which expresses
the cost and potential of the result in terms of a given recurrence for
the cost and potential of the argument function.  The process of
extracting recurrences can thus be seen as a denotational semantics of
the program, where a function is interpreted as a function from input
potential to cost and output potential.

Building on this work, we give a formal account of the extraction of
recurrences from higher-order functional programs over inductive
datatypes, focusing how to soundly allow programmer-specified sizes of
datatypes.  We show that under some mild conditions on sizes, the cost
predicted by an extracted recurrence is in fact an upper bound on the
number of steps the program takes to evaluate.  The size of a value
can be taken to be (essentially) the value itself, in which case one
gets exact bounds but must reason about all the details of program
evaluation, or the size of a value can forget information
(e.g. abstracting a list as its length), in which case one gets weaker
bounds with more traditional reasoning.

We start from a call-by-value source language, defined in
Section~\ref{sec:src_lang}, with strictly positive inductive datatype
definitions (which include lists and finitely branching trees, as well
as infinitely branching trees).  Datatypes are used via case-analysis
and structural recursion (so the language is terminating), 
but unlike in~\citet{danner-et-al:plpv13},
recursive calls are only evaluated if necessary---for example, recurring
on one branch of a tree has different cost than recurring on both
branches.  The cost of a program is defined by a standard operational
cost semantics, an evaluation relation annotated with costs.  For
simplicity, the cost semantics measures only the number of
function applications and recursive calls made during evaluation, but
our approach to extracting recurrences generalizes to other cost models.

We extract a recurrence from such a program in two steps.  First, in
Section~\ref{sec:complexity_lang}, we make the cost of evaluating a
program explicit, by translating a source program~$e$ to a
program~$\trans e$ in a complexity language.  The complexity language
has an additional type \C\/ for costs, and the translation to the
complexity language is a call-by-value monadic translation into the
writer monad $\C \times -$~\citep{moggi:ic91:monads,wadler:popl92:essence}.  The translated
program $\trans e$ returns an additional result, which is the cost of
running the original program $e$.

Second, we abstract values to sizes; we study both semantic and
syntactic approaches.  In Section~\ref{sec:size_semantics}, we give a
size-based semantics of the complexity language, which relies on
programmer-specified size functions mapping each datatype to the natural
numbers (or some other preorder).  Typical size functions include the
length of a list and the size or depth of a tree.  The semantics
satisfies a \emph{bounding theorem} (Theorem~\ref{thm:bounding}), which
implies that the denotational cost given by composing the
source-to-complexity translation with the size-based semantics is in
fact an upper bound on the operational cost.  We show on some examples
that the recurrence or cost extracted by this process is the expected
one; we also will later show that all examples in
\citet{danner-et-al:plpv13} carry over.

Alternatively, the abstraction of values to sizes can be done
syntactically in the complexity language, by imposing a preorder
structure on the values of the datatype themselves.  For example, rather
than mapping lists to numbers representing their lengths, we can order
the list values by rules including $\tmle {xs} {(x \mathtt{::} xs)}$ and
$\tmle {(x \mathtt{::} xs)} {(y \mathtt{::} xs)}$.  The second rule says
that the elements of the list are irrelevant, quotienting the lists down
to natural numbers, and the first generates the usual order on natural
numbers.  Formally, we equip the complexity language with a judgement
$\tmle {\mtm}{\mtm'}$ that can be used to make such abstractions.  In
Section~\ref{sec:monotonic}, we identify properties of this judgement
that are sufficient to prove a syntactic bounding theorem
(Theorem~\ref{thm:mon-bounding}), which states that the operational cost
is bounded by the cost component of the complexity translation.  The key
technical notion is a logical relation between the source and complexity
languages that extends the bounding relation of
\citet{danner-et-al:plpv13} to inductive types.  This proof gives a
bounding theorem for any model of the complexity language that validates
the rules for $\le$.  In Section~\ref{sec:mon-interp-examples}, we show
that these rules are valid in the size-based semantics of
Section~\ref{sec:size_semantics} (thereby proving
Theorem~\ref{thm:bounding}), and we discuss several other models of the
complexity language.

This gives a formal account of what it means to extract a recurrence
from higher-order programs on inductive data.  We leave an investigation
of what it means to solve these higher-order recurrences to future work.


\section{Source Language with Inductive Data Types}
\label{sec:src_lang}

The source language is a simply-typed $\lambda$-calculus with product
types, function types, suspensions, and strictly positive inductive
datatypes.  Its syntax, typing, and operational semantics are given in
Figure~\ref{fig:source_lang}.  We bundle sums and inductive types
together as datatypes, rather than using separate $+$ and $\mu$ types,
because below we do not want to consider sizes for the sum part
separately.  

We assume a top-level signature $\psi$ consisting of
datatype declarations of the form
\[
\sdatadecl*{\D}{\sconstrdecl{\sC_0^\D}{\subst*{\phi_{\sC_0}}{\D}}\sconstrsep\dots\sconstrsep\sconstrdecl{\sC^\D_{n-1}}{\subst*{\phi_{\sC_{n-1}}}{\D}}}
\]
Each constructor's argument type is specified by a strictly positive
functor $\phi$.  These include the identity functor ($t$), representing
a recursive occurrence of the datatype; constant functors ($\tau$),
representing a non-recursive argument; product functors ($\phi_1 \times
\phi_2$), representing a pair of arguments; and constant exponentials
($\tau \to \phi$), representing an argument of function type.  
We write
$\subst\sspf\stp t$ or just $\subst*\sspf\stp$ for substitution of the
type~$\stp$ for the single
free type variable $t$ in $\phi$.  
We frequently
drop the indexing superscripts, write
$\sdatadecl{\D}{\sC}{\phi_\sC}$, and write $\sC$ rather than~$\sC_i$
to refer to one of the constructors of the declaration.  
In the
signature, each $\phi_{C}$ in each $\sdatatypekw$ declaration must refer
only to datatypes that are declared earlier in the sequence, to avoid
introducing general recursive datatypes.  
We write $\sC\oftype(\sspf\to\D)\in\ssig$
to mean that the signature~$\ssig$ contains a datatype declaration of
the form
$\sdatadecl*{\D}{\dots\sconstrsep\sconstrdecl{\sC}{\subst*\sspf\D}\sconstrsep\dots}$.
The formal definitions
of signatures, types, and constructor arguments are given
in Figure~\ref{fig:src_types_sigs}.  

\begin{figure}[t]
\captionsetup{singlelinecheck=off}
\caption*{Signatures: $\swfsig{\ssig}$.}
\begin{gather*}
\ndAXC{$\swfsig{\emptysig}$}
\DisplayProof
\qquad
  \AXC{$\delta \notin \psi$}
  \AXC{$\forall C \: (\ok{\ssig}{\phi_C})$}
\ndBIC{$\swfsig{\ssig,\sdatadecl{\D}{\sC}{\subst*{\sspf_{\sC}}{\D}}}$}
\DisplayProof
\end{gather*}

\medbreak
\caption*{Types:\enspace $\istype\ssig\stp$.}
\begin{gather*}
\ndAXC{$\istype\ssig\sunit$}
\DisplayProof
\qquad
  \AXC{$\istype\ssig{\stp_0}$}
  \AXC{$\istype\ssig{\stp_1}$}
\ndBIC{$\istype\ssig{\sprod{\stp_0}{\stp_1}}$}
\DisplayProof
\qquad
  \AXC{$\istype\ssig{\stp_0}$}
  \AXC{$\istype\ssig{\stp_1}$}
\ndBIC{$\istype\ssig{\sarr{\stp_0}{\stp_1}}$}
\DisplayProof
\\[0.5\baselineskip]
  \AXC{$\istype\ssig\stp$}
\ndUIC{$\istype\ssig{\ssusp\stp}$}
\DisplayProof
\qquad
  \AXC{$\delta \in \psi$}
\ndUIC{$\istype{\psi}{\D}$}
\DisplayProof
\end{gather*}

\medbreak
\caption*{Constructor arguments:\enspace $\ok\ssig\sspf$.}
\begin{gather*}
\ndAXC{$\ok\ssig\reccall$}
\DisplayProof
\qquad
  \AXC{$\istype\ssig\stp$}
\ndUIC{$\ok\ssig\stp$}
\DisplayProof
\qquad
  \AXC{$\ok\ssig{\sspf_0}$}
  \AXC{$\ok\ssig{\sspf_1}$}
\ndBIC{$\ok\ssig{\sprod{\sspf_0}{\sspf_1}}$}
\DisplayProof
\qquad
  \AXC{$\istype\ssig{\stp}$}
  \AXC{$\ok\ssig{\sspf}$}
\ndBIC{$\ok\ssig{\sarr{\stp}{\sspf}}$}
\DisplayProof
\end{gather*}

\captionsetup{singlelinecheck=on}
\caption{Valid signatures, types, and constructor arguments.}
\label{fig:src_types_sigs}
\hrule
\end{figure}

We define the
expressions~$\stm$ and typing judgment $\typejudgeS\stm\stp$ in
Figure~\ref{fig:source_lang}.  As we will do in most of the rest
of the paper, here we
elide reference to the signature and just refer to types and
constructor arguments.  On the occasion when precision is crucial,
we notate the typing judgment
with the signature, as in $\typejudgeS[\ssig]\stm\stp$.

\begin{figure}
\captionsetup{singlelinecheck=off}
\caption*{Types:}
\[
\begin{array}{rcl}
\tau &::= & \sunit \mid \tau\cross\tau \mid \tau\arrow\tau \mid \ssusp\tau \mid \D  \\
\phi &::= &t \mid \tau \mid \phi\cross\phi \mid \tau\to\phi \\
\sdatatype\:\D &=
&\sconstrdecl{{\sC}^\D_0}{\subst*{\phi_{\sC_0}}\D}\sconstrsep\dots\sconstrsep\sconstrdecl{{\sC}^\D_{n-1}}{\subst*{\phi_{\sC_{n-1}}}\D}
\end{array}
\]

\medbreak
\caption*{Expressions:}
\[
\begin{array}{rcl}
v &::= &x \mid\striv \mid \spair v v \mid \lambda x.e \mid \sdelaykw(e) \mid \sC\,v \\
e &::= 
  &x \mid \striv \mid \spair e e \mid \ssplit e x x e \mid \slam x e \mid e\,e  \\
& &\mid  \sdelaykw(e) \mid \sforcekw(e) \\
& &\mid  \sC^\D\, e
    \mid 
	\srec[\D]{e}{\sC}{\bind{x}{e_{\sC}}} \\
& &\mid  \smapkw^\phi(\bind x v, v) \mid \sletkw(e, \bind x e)\\
n &::= & 0 \mid 1 \mid n + n
\end{array}
\]

\medbreak
\caption*{Typing: $\typingG{\stm}{\stp}$.}
\begin{gather*}
\ndAXC{$\typing[\sctx,x\oftype\sigma]x\sigma$}
\DisplayProof
\\
\ndAXC{$\typingG{\striv}{\sunit}$}
\DisplayProof
\qquad
  \AXC{$\typingG{e_0}{\tau_0}$}
  \AXC{$\typingG{e_1}{\tau_1}$}
\ndBIC{$\typingG{\spair{e_0}{e_1}}{\tau_0\cross\tau_1}$}
\DisplayProof
\qquad
  \AXC{$\typejudge\sctx{\stm_0}{\stp_0\cross\stp_1}$}
  \AXC{$\typejudge{\sctx,x_0\oftype\stp_0,x_1\oftype\stp_1}{\stm_1}{\stp}$}
\ndBIC{$\typejudge\sctx{\ssplit{\stm_0}{x_0}{x_1}{\stm_1}}{\stp}$}
\DisplayProof
\\
  \AXC{$\typing[\sctx,x\oftype\sigma]{e}{\tau}$}
\ndUIC{$\typingG{\lambda x.e}{\sigma\arrow\tau}$}
\DisplayProof
\qquad
  \AXC{$\typingG{e_0}{\sigma\arrow\tau}$}
  \AXC{$\typingG{e_1}{\sigma}$}
\ndBIC{$\typingG{e_0\,e_1}\tau$}
\DisplayProof
\\
  \AXC{$\typingG{e}{\tau}$}
\ndUIC{$\typingG{\sdelaykw(e)}{\ssusp\tau}$}
\DisplayProof
\qquad
  \AXC{$\typingG e {\ssusp\tau}$}
\ndUIC{$\typingG{\sforcekw(e)}{\tau}$}
\DisplayProof
\\
  \AXC{$\typingG{e}{\subst*{\phi_\sC}{\D}}$}
\ndUIC{$\typingG{\sC^\D\,e}{\D}$}
\DisplayProof
\qquad
  \AXC{$\typingG e \D$}
  \AXC{$\forall \sC\left(\typing[\sctx,x\oftype{\subst*{\phi_\sC}{\D\cross\ssusp\tau}}] {e_\sC} {\tau}\right)$}
\BIC{$\typingG{\srec[\D] e \sC {\bind x {e_\sC}}}{\tau}$}
\DisplayProof
\\
  \AXC{$\typing[\sctx,x\oftype\tau_0]{v_1}{\tau_1}$}
  \AXC{$\typingG {v_0} {\subst*\phi{\tau_0} }$}
\ndBIC{$\typingG{\smapkw^\phi(\bind x {v_1}, {v_0})}{\subst*{\phi}{\tau_1}}$}
\DisplayProof
\qquad
  \AXC{$\typingG{e_0}{\sigma}$}
  \AXC{$\typing[\sctx,x\oftype\sigma]{e_1}{\tau}$}
\ndBIC{$\typingG{\sletkw(e_0, \bind x {e_1})}{\tau}$}
\DisplayProof
\end{gather*}

\captionsetup{singlelinecheck=on}
\caption{Source language syntax and typing.}
\label{fig:source_lang}
\hrule
\end{figure}

\begin{figure}
\begin{minipage}{\textwidth}
Operational semantics: $\evalin e v n$.
\begin{gather*}
  \AXC{$\evalin{e_0}{v_0}{n_0}$}
  \AXC{$\evalin{e_1}{v_1}{n_1}$}
\BIC{$\evalin{\spair{e_0}{e_1}}{\spair{v_0}{v_1}}{n_0+n_1}$}
\DisplayProof
\qquad
  \AXC{$\evalin{e_0}{\spair{v_0}{v_1}}{n_0}$}
  \AXC{$\evalin{\cl{e_1}{v_0/x_0,v_1/x_1}}{v}{n_1}$}
\BIC{$\evalin{\ssplit{e_0}{x_0}{x_1}{e_1}}{v}{n_0+n_1}$}
\DisplayProof
\\[.5\baselineskip]
%
  \AXC{$\evalin{e_0}{\slam x {e_0'}}{n_0}$}
  \AXC{$\evalin{e_1}{v_1}{n_1}$}
  \AXC{$\evalin{\subst{e_0'}{v_1}{x}}{v}{n}$}
\TIC{$\evalin{e_0\,e_1}{v}{n_0+n_1+n}$}
\DisplayProof
\\[.5\baselineskip]
%
\ndAXC{$\evalin{\sdelaykw(e)}{\sdelaykw(e)}{0}$} 
\DisplayProof
\quad
  \AXC{$\evalin{e}{\sdelaykw(e_0)}{n_0}$}
  \AXC{$\evalin{e_0}{v}{n_1}$}
\ndBIC{$\evalin{\sforcekw(e)}{v}{n_0+n_1}$}
\DisplayProof
\\[.5\baselineskip]
%
  \AXC{$\evalin{e}{v}{n}$}
\UIC{$\evalin{\sC e}{\sC v}{n}$}
\DisplayProof
\qquad
%
  \AXC{$\evalin{e}{\sC\,v_0}{n_0}$}
  \AXC{$\evalin{\smapkw^{\phi_\sC}(\bind y{\spair{y}{\sdelaykw(\srec{y}{\sC}{\bind{x}{e_\sC}})}}, v_0)}{v_1}{n_1}$}
  \AXC{$\evalin{\subst{e_\sC}{v_1}{x}}{v}{n_2}$}
\TIC{$\evalin{\srec{e}{\sC}{\bind{x}{e_\sC}}}{v}{1+n_0+n_1+n_2}$}
\DisplayProof
\\[.5\baselineskip]
%
\ndAXC{$\evalin{\smapkw^t(\bind x {v}, v_0)}{\subst {v} {v_0} x}{0}$}
\DisplayProof
\qquad
\RightLabel{($t$ not free in~$\tau$)}
\ndAXC{$\evalin{\smapkw^\tau(\bind x {v}, v_0)}{v_0}{0}$}
\DisplayProof
\\
  \AXC{$\evalin{\smapkw^{\phi_0}(\bind x v, v_0)}{v_0'}{n_0}$}
  \AXC{$\evalin{\smapkw^{\phi_1}(\bind x v, v_1)}{v_1'}{n_1}$}
\ndBIC{$\evalin{\smapkw^{\phi_0\cross\phi_1}(\bind x v,\spair{v_0}{v_1})}{\spair{v_0'}{v_1'}}{n_0+n_1}$}
\DisplayProof
\\[-.25\baselineskip]
\ndAXC{$\evalin{\smapkw^{\tau\to\phi}(\bind x v,\lambda y.e)}{\lambda y.\sletkw(e, \bind z {\smapkw^\phi(\bind x v, z)})}{0}$}
\DisplayProof
\\[.5\baselineskip]
%
  \AXC{$\evalin{e_0}{v_0}{n_0}$}
  \AXC{$\evalin{\subst{e_1}{v_0}{x}}{v}{n_1}$}
\ndBIC{$\evalin{\sletkw(e_0, \bind x {e_1})}{v}{n_0+n_1}$}
\DisplayProof
\end{gather*}
\end{minipage}
\caption{Source language operational semantics.}
\label{fig:source_semantics}
\hrule
\end{figure}

Evaluation (defined in Figure~\ref{fig:source_semantics})
is call-by-value and products and datatypes are strict.
However, unfolding datatype recursors requires substituting expressions
(the recursor applied to the components of the value) for the variables
standing for the recursive calls---running the recursive call first and
substituting its value would require a function to make all possible
recursive calls.  We handle this using suspensions: when computing a
$\tau$ by recursion, the result of a recursive call is given the type
$\ssusp\tau$.  The values of type $\ssusp \tau$ are $\sdelaykw(e)$ where
$e$ is an expression of type $\tau$; the elimination form $\sforcekw$
forces evaluation.  In general, when defining a recursive computation of
result type~$\tau$, the branch for a constructor $C$, $e_C$, has access
to a variable of type $\subst*{\phi_C}{\D\cross\ssusp\tau}$, which gives
access both to the ``predecessor'' values of type $\delta$ and to the
recursive results.  This recursor supports both case-analysis and
structural recursion, and recursive calls are only computed if they are
used.

For any strictly positive functor $\phi$, the $\smapkw^\phi$ expression
witnesses functoriality, essentially lifting a function $\tau_0 \to
\tau_1$ to a function $\subst*{\phi}{\tau_0} \to \subst*{\phi}{\tau_1}$.
It is used in the operational semantics for the recursor to insert
recursive calls at the right places in $\phi$ (\citet{harper:pfpl}
provides an exposition).  We will only need to lift maps $x:\tau_0.v:\tau_1$
whose bodies are syntactic values (or variables), and apply them to
syntactic values (or variables), and we restrict $\smapkw$ to this
special case to simplify its cost semantics.

A couple of examples may
be more edifying than the formalism.  In these examples and the
future, we use a sugared syntax of pattern variables for the constructor
arguments.  So in our first example we write
$\sreckw(\dots,\sconstrdecl{N}{\bind{\langle n,\spair{t_0}{r_0},\spair{t_1}{r_1}\rangle}{e_N}})$, where
$e_N = e_N(n,t_0,r_0,t_1,r_1)$ as syntactic sugar for
$\sreckw(\dots,\sconstrdecl{N}{\bind x {e_N'}})$, where
\[
e_N' =
\ssplit{x}{n}{y}{\ssplit{y}{u}{v}{\ssplit{u}{t_0}{r_0}{\ssplit{v}{t_1}{r_1}{e_N}}}}.
\]

As a first example, consider the type of $\sint$-labeled binary trees:
\[
\sdatadecl*{\stree}{\sconstrdecl{E}{\sunit}\sconstrsep\sconstrdecl{N}{\sint\cross\stree\cross\stree}}
\]
Now consider a recursive definition
$\srecm{N(n, t_0, t_1)}{E\mapsto\bind x{e_E},N\mapsto\bind x{e_N}}$.
For the $N$-clause,
$x\oftype\sint\cross(\stree\cross\ssusp\tau)\cross(\stree\cross\ssusp\tau)$.
Thus the evaluation must
substitute $(n, (t_0, r_0), (t_1, r_1))$ for $x$ in~$e$, where
$r_i$ is the result of the recursive call on the subtree~$t_i$.
In this case, to evaluate
$\srecm{N(n, t_0, t_1)}{\ldots}$, we set
$e = \sdelaykw(\srecm{x}{\ldots})$ and compute
\begin{align*}
\smapkw^{\sint\cross t\cross t}(\bind x{\spair{x}{e}}, (n, t_0, t_1))
&= \langle\smapkw^{\sint}(\bind {x}{\spair x e}, n), \smapkw^t(\bind x {\spair x e}, t_0), \smapkw^t(\bind x {\spair x e}, t_1)\rangle \\
&= \langle n, \spair{t_0}{\subst e{t_0}x}, \spair{t_1}{\subst e{t_1}x}\rangle
\end{align*}
and substitute the result for~$x$ in $e_N$.

As a second example, consider the type of infinite, infinitely-branching
$\sint$-labeled trees:
\[
\sdatadecl*{\stree'}{\sconstrdecl{N}{\sint\cross(\snat\to\stree')}}.
\]
Now consider the evaluation of
$\srecm{N(n, \lambda y.e_0)}{N\mapsto \bind {\spair z f}{e_N}}$ where
$\spair z f\oftype \sint\cross(\snat\to(\stree'\cross\ssusp\tau))$.  Set
$e = \sdelaykw(\srecm{x}{N\mapsto \bind {\spair z f}{e_N}})$, compute
\begin{align*}
\smapkw^{\sint\cross(\snat\to t)}(\bind x{\spair x e}, \spair{n}{\lambda y.e_0})
&= \spair{\smapkw^{\sint}(\bind x{\spair x e}, n)}{\smapkw^{\snat\to t}(\bind x{\spair x e}, \lambda y.e_0)} \\
&= \spair{n}{\lambda y.\sletkw(e_0, \bind z {\smapkw^t(\bind x{\spair x e}, z)})} \\
&= \spair{n}{\lambda y.\sletkw(e_0, \bind z {\spair{z}{\subst e z x}})} \\
&= \spair{n}{\lambda y.\spair{e_0}{\subst e{e_0}{x}}}.
\end{align*}
Now subsitute the result for $\spair z f$ in $e_N$.  Presumably $e_N$
has a subexpression of the form $f\,q\oftype\stree'\cross\ssusp\tau$.  This
last substitution has the moral effect of replacing $f\,q$ with
$\spair{e_0(q)}{\subst{e}{e_0(q)}{x}}$; the first component is
the subtree, and the second is the result of the recursive call
at that subtree.

The cost semantics in Figure~\ref{fig:source_semantics} defines the
relation $\evalin e v n$, which means that the expression~$e$ evaluates
to the value $v$ in $n$ steps.  Our cost model charges only for the
number of function applications and recursive calls made by datatype
recursors.  This prevents constant-time overheads from the encoding of
datatypes using product and suspension types
from showing up in the extracted recurrences.  It is simple
to adapt the denotational cost semantics below to other operational cost
semantics, such as one that charges for these steps, or assigns
different costs to different constructs.

Substitutions are defined as usual:

\begin{defi}
We write $\ssub$ for substitutions $v_1/x_1,\ldots,v_n/x_n$, 
and $\ssub\oftype\sctx$ to mean 
that $\dom\ssub\subseteq\dom\sctx$ and
$\emptyset \vdash {\ssub(x)} : {\sctx(x)}$ for all $x\in\dom\ssub$.
We define the application of a substitution~$\ssub$ to an expression~$e$
as usual and denote it $\tsubst\stm\ssub$.
\end{defi}

\begin{lemma} \label{lem:source-subst-composition}
If $x$ does not occur in $\ssub$, then 
$\subst{\tsubst{\stm}{\ssub,x/x}}{\stm_1}{x} = \tsubst{\stm}{\ssub,\stm_1/x}$.
\end{lemma}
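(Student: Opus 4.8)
The plan is to proceed by structural induction on the expression~$\stm$, using the fact that simultaneous and single substitution are both defined by recursion on expression structure. The hypothesis that $x$ does not occur in~$\ssub$---which I read as $x \notin \dom\ssub$ together with $x$ not appearing free in any $\ssub(x_i)$---is precisely what reconciles the two sides. Intuitively, an occurrence of $x$ in~$\stm$ is sent to~$x$ by $(\ssub, x/x)$ and then to~$\stm_1$ by $\subst{(-)}{\stm_1}{x}$, matching the direct action of $(\ssub,\stm_1/x)$; an occurrence of some $x_i \in \dom\ssub$ is sent to $\ssub(x_i)$ by both substitutions, and on the left the trailing $\subst{(-)}{\stm_1}{x}$ is a no-op on $\ssub(x_i)$ exactly because $x$ does not occur in~$\ssub$.

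For the base cases, when $\stm = x$ both sides reduce to~$\stm_1$, since $(\ssub, x/x)$ acts as the identity on~$x$ and then $\subst{x}{\stm_1}{x} = \stm_1$, while $(\ssub,\stm_1/x)$ maps $x$ directly to~$\stm_1$. When $\stm = y$ with $y \neq x$, both sides reduce to $\ssub(y)$ if $y \in \dom\ssub$ and to~$y$ otherwise; on the left the outer substitution leaves $\ssub(y)$ unchanged because $x$ is not free in~$\ssub$. The case $\stm = \striv$ is immediate.

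For the inductive step, the compound forms with no binders---$\spair{e_0}{e_1}$, application $e_0\,e_1$, $\sforcekw(e)$, and constructor applications $\sC\,e$---follow directly: each substitution distributes over subexpressions, so the claim reduces to the induction hypothesis on each subterm. The real content is in the binding forms $\slam y e$, $\ssplit{e_0}{y_0}{y_1}{e_1}$, $\srec{e}{\sC}{\bind y {e_\sC}}$, $\sletkw(e_0, \bind y {e_1})$, $\smapkw^\phi(\bind y v, v_0)$, and the $\sdelaykw$-bound subexpression. Here I invoke the variable convention to assume the bound variable~$y$ is chosen fresh, i.e.\ distinct from~$x$, outside $\dom\ssub$, and not free in~$\stm_1$ or in any $\ssub(x_i)$; under this freshness assumption both $(\ssub, x/x)$ and $(\ssub, \stm_1/x)$ commute with descent under the binder without further renaming, and the body is handled by the induction hypothesis. \textbf{The main obstacle} is exactly this bookkeeping around binders---justifying that capture-avoiding substitution legitimately supports the $\alpha$-renaming behind the freshness assumption, and checking that the identity component $x/x$ together with the freshness of~$y$ keeps the extended substitution well behaved on the bound-variable context---rather than any genuine mathematical difficulty, as every case is routine once the freshness convention is in force.
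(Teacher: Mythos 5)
Your proof is correct: the paper states this lemma without proof, treating it as the standard composition property of capture-avoiding substitution, and your structural induction — with the two base cases split on whether the variable is $x$ or some $y \in \dom\ssub$, and the variable convention handling the binding forms — is exactly the routine argument being elided. The one point worth making explicit is that your reading of ``$x$ does not occur in $\ssub$'' (both $x \notin \dom\ssub$ and $x$ not free in the range of $\ssub$) is indeed the intended hypothesis, since the $y \neq x$, $y \in \dom\ssub$ case genuinely needs $x$ absent from $\ssub(y)$ for the outer substitution to be a no-op.
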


For source cost expressions $\sc$, we write \homj{}{\sc}{\sc'} for the
order given by interpreting these cost expressions as natural numbers
(i.e. the free precongruence generated by the monoid equations for
$(+,0)$ and $\homj{}{0}{1}$).  
We have the following syntactic
properties of evaluation:

\begin{lemma}[Value Evaluation] \label{lem:value-eval-inv}~
\begin{itemize}
\item If \seval{\sv}{\sc}{\sv'} then \homj{}{\sc}{\costzero} and $\sv =
  \sv'$.  
\item For all $\sv$, \seval{\sv}{\costzero}{\sv}.
\end{itemize}
\end{lemma}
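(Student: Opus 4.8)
The plan is to prove both clauses by structural induction on the value $\sv$, exploiting that the operational semantics of Figure~\ref{fig:source_semantics} is syntax-directed: the conclusion of each rule concerns a subject term with a fixed head form, and the head forms of the value grammar are pairwise distinct, so for a value subject at most one rule can be the last rule of a derivation. This single observation drives both the existence claim (second bullet) and the inversion used for the uniqueness-and-cost claim (first bullet).

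For the second clause I induct on $\sv$. The non-compound values $x$, $\striv$, $\lambda x.e$, and $\sdelaykw(e)$ each evaluate to themselves at cost $\costzero$ by the corresponding reflexivity axiom (for $\sdelaykw(e)$ this is the delay axiom of Figure~\ref{fig:source_semantics}; the others are analogous). For $\sv = \spair{\sv_0}{\sv_1}$ the induction hypotheses give \seval{\sv_0}{\costzero}{\sv_0} and \seval{\sv_1}{\costzero}{\sv_1}, and the pair congruence rule then yields \seval{\spair{\sv_0}{\sv_1}}{\costzero+\costzero}{\spair{\sv_0}{\sv_1}}; since $\costzero+\costzero=\costzero$ by the monoid equations defining the order, this is exactly \seval{\sv}{\costzero}{\sv}. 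The case $\sv = \sC\,\sv_0$ is the same, using the constructor rule and the single hypothesis for $\sv_0$.

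For the first clause I again induct on $\sv$ and invert the given derivation of \seval{\sv}{\sc}{\sv'}, the head form of $\sv$ pinning down the last rule. For the base values the unique applicable rule is the relevant axiom, forcing $\sv' = \sv$ and $\sc = \costzero$, whence \homj{}{\sc}{\costzero} by reflexivity. For $\sv = \spair{\sv_0}{\sv_1}$ the only rule whose subject is a pair is the pair congruence rule, so the derivation decomposes as \seval{\sv_0}{\sc_0}{\sv_0'} and \seval{\sv_1}{\sc_1}{\sv_1'} with $\sv' = \spair{\sv_0'}{\sv_1'}$ and $\sc = \sc_0 + \sc_1$. The components $\sv_0,\sv_1$ are structurally smaller values, so the induction hypothesis gives $\sv_i' = \sv_i$ and \homj{}{\sc_i}{\costzero}; hence $\sv' = \sv$, and since the order is a precongruence (so $+$ is monotone) we obtain \homj{}{\sc_0+\sc_1}{\costzero+\costzero}, which with $\costzero+\costzero = \costzero$ gives \homj{}{\sc}{\costzero}. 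The case $\sv = \sC\,\sv_0$ is identical using the constructor rule.

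The proof is routine; the only point demanding care is the exhaustiveness and uniqueness of the inversion in the first clause. Concretely, I must confirm that the rules whose conclusions can produce a value from a non-value subject cannot fire here: the recursor rule has a $\sreckw$ expression as subject and the $\smapkw$ rules have $\smapkw$ expressions as subjects, none of which is a value form. This same fact is what keeps the cost bounded by $\costzero$: the recursor rule is the only rule contributing the unit cost $1$, and since it never applies within the evaluation of a value, the accumulated cost of any value evaluation is a sum of $\costzero$'s, hence \homj{}{\sc}{\costzero}.
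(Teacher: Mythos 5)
Your proof is correct. The paper states this lemma without proof, and your argument---structural induction on the value combined with inversion on the evaluation derivation, using that only the pair, constructor, and delay rules (plus the elided reflexivity axioms for $\lambda$-abstractions, $\striv$, and variables) can have a value as subject, and that the cost order is a precongruence so that a sum of terms each $\le \costzero$ is $\le \costzero$---is exactly the standard argument one would supply. Your explicit check that the recursor and $\smapkw$ rules cannot fire on a value subject is the right point to flag, since the recursor rule is the only source of nonzero cost among the rules producing constructor-headed results.
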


\begin{lemma}[Totality of $\smapkw$]
\label{lem:map_no_cost}
If $\typingG{\smapkw^\phi(\bind x {v_1}, v_0)}{\subst*{\phi}{\tau_1}}$ then
$\evalin{\smapkw^{\phi}(\bind x {v_1}, v_0)}{v}{0}$ for some~$v$.
\end{lemma}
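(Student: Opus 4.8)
The plan is to induct on the derivation of $\ok\ssig\phi$ (equivalently, on the structure of the constructor-argument functor $\phi$), matching each case against the corresponding rule of the operational semantics for $\smapkw$ in Figure~\ref{fig:source_semantics}. In every case the cost annotation is either literally $0$ or a sum of recursively obtained zero costs, so the cost bookkeeping is immediate once a value is produced; the real content is exhibiting the value and checking that the subterms remain well-typed so that the induction hypothesis applies.

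First I would record two auxiliary facts. \emph{(1)} Substituting a value for a variable in a value again yields a value: this is a routine structural induction on the value grammar, in which the variable case produces the substituted value and every other form is closed under the operation. \emph{(2)} A canonical-forms inversion principle for values: a value of type $\sprod{\sigma_0}{\sigma_1}$ has the form $\spair{w_0}{w_1}$ with $w_i\oftype\sigma_i$, and a value of type $\sarr\sigma\tau$ has the form $\slam y e$. Both follow by inversion on the value typing rules.

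With these in hand the cases are as follows. For $\phi = t$ the map rule returns $\subst{v_1}{v_0}x$, which is a value by fact~(1), at cost $0$. For $\phi = \tau$ with $t$ not free in $\tau$ we have $\subst*\tau{\tau_0} = \tau = \subst*\tau{\tau_1}$, and the rule returns $v_0$ itself at cost $0$. For $\phi = \sprod{\phi_0}{\phi_1}$ the typing premise gives $v_0\oftype\sprod{\subst*{\phi_0}{\tau_0}}{\subst*{\phi_1}{\tau_0}}$, so by fact~(2) $v_0 = \spair{w_0}{w_1}$ with $w_i\oftype\subst*{\phi_i}{\tau_0}$; each $\smapkw^{\phi_i}(\bind x{v_1}, w_i)$ is then well-typed, so the induction hypothesis yields values $w_i'$ at cost $0$, and the product rule returns $\spair{w_0'}{w_1'}$ at cost $0+0=0$. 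For $\phi = \sarr\tau{\phi'}$ the typing premise gives $v_0\oftype\sarr\tau{\subst*{\phi'}{\tau_0}}$, so by fact~(2) $v_0 = \slam y e$, and the map rule directly returns the value $\slam y{\sletkw(e, \bind z{\smapkw^{\phi'}(\bind x{v_1}, z)})}$ at cost $0$; no induction hypothesis is needed here, since the inner $\smapkw$ is left suspended inside the $\sletkw$ body rather than evaluated.

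The crux is exactly this last observation combined with the product case: because the $\smapkw$ rules are value-to-value and never reduce under the suspended $\sletkw$, I expect no difficulty from nontermination or from accumulated evaluation cost, and the one place where well-typedness of the whole $\smapkw$ term is genuinely used is in applying fact~(2) to supply subcomponents of precisely the types that the induction hypothesis demands in the $\sprod{\phi_0}{\phi_1}$ case.
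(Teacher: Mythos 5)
Your proof is correct: the paper states Lemma~\ref{lem:map_no_cost} without proof, and your induction on $\phi$ --- using canonical forms to match $v_0$ against the applicable $\smapkw$ evaluation rule, closure of values under value-for-variable substitution in the $t$ case, and the induction hypothesis only in the product case --- is exactly the routine argument the authors are eliding. The only implicit assumption worth flagging is that the canonical-forms step presupposes the term is closed (an open value could be a variable), which is the standard convention for the evaluation judgement and is how the lemma is actually invoked in the proof of Lemma~\ref{lem:lr-map}.
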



\section{Making Costs Explicit}
\label{sec:complexity_lang}

\subsection{The Complexity Language}

\begin{figure*}
\captionsetup{singlelinecheck=off}
\caption*{Types:}
\[
\begin{array}{rcl}
\mtp &::= & \C \mid \cunit \mid \mdtp\mid\mtp\cross\mtp \mid \mtp\arrow\mtp \\
\mspf &::= &t \mid \mtp \mid \mspf\cross\mspf \mid \mtp\to\mspf \\
\cdatatypekw\:\mdtp &=
&\cconstrdecl{{\cC}^\mdtp_0}{\subst*{\mspf_{\cC_0}}\mdtp}\sconstrsep\dots\sconstrsep  \sconstrdecl{{\cC}^\mdtp_{n-1}}{\subst*{\mspf_{\cC_{n-1}}}\mdtp}
\\
\\\end{array}
\]
\medbreak

\caption*{Expressions:}
\[
\begin{array}{rcl}
\mtm &::= 
  &x \mid \costzero\mid\costone\mid\mtm\costplus{\mtm}\mid  \\
  & & \striv \mid \spair \mtm \mtm \mid \fst\mtm \mid \snd\mtm \mid \slam x \mtm \mid \mtm\,\mtm  \\
& &\mid  \cC^\mdtp\, \mtm
    \mid 
	\crec[\mdtp]{\mtm}{\sC}{\bind{x}{\mtm_{\sC}}} 
\end{array}
\]
\medbreak

\caption*{Typing:  $\typejudgeM\mtm\mtp$.}
\begin{gather*}
\ndAXC{$\typing[\mctx,x\oftype\mtp]x\mtp$}
\DisplayProof
\\
\ndAXC{$\typejudgeM\costzero\C$}
\DisplayProof
\qquad
\ndAXC{$\typejudgeM\costone\C$}
\DisplayProof
\qquad
  \AXC{$\typejudgeM{\mtm_0}\C$}
  \AXC{$\typejudgeM{\mtm_1}\C$}
\ndBIC{$\typejudgeM{\mtm_0\costplus\mtm_1}{\C}$}
\DisplayProof
\\
\ndAXC{$\typejudgeM{\striv}{\cunit}$}
\DisplayProof
\qquad
  \AXC{$\typejudgeM{\mtm_0}{\tau_0}$}
  \AXC{$\typejudgeM{\mtm_1}{\tau_1}$}
\ndBIC{$\typejudgeM{\spair{\mtm_0}{\mtm_1}}{\tau_0\cross\tau_1}$}
\DisplayProof
\qquad
  \AXC{$\typejudgeM\mtm{\mprod{\mtp_0}{\mtp_1}}$}
\ndUIC{$\typejudgeM{\mproj_i\mtm}{\mtp_i}$}
\DisplayProof
\\
  \AXC{$\typing[\mctx,x\oftype\mtp_0]{\mtm}{\mtp_1}$}
\ndUIC{$\typejudgeM{\lambda x.\mtm}{\mtp_0\arrow\mtp_1}$}
\DisplayProof
\qquad
  \AXC{$\typejudgeM{\mtm_0}{\mtp_0\arrow\mtp_1}$}
  \AXC{$\typejudgeM{\mtm_1}{\mtp_0}$}
\ndBIC{$\typejudgeM{\mtm_0\,\mtm_1}\mtp_1$}
\DisplayProof
\\
  \AXC{$\typejudgeM{\mtm}{\subst*{\mspf_\cC}{\mdtp}}$}
\ndUIC{$\typejudgeM{\cC^\mdtp\,\mtm}{\mdtp}$}
\DisplayProof
\\
  \AXC{$\typejudgeM \mtm \mdtp$}
  \AXC{$\forall \cC\left(\typing[\mctx,x\oftype{\subst*{\mspf_\cC}{\mdtp\cross\mtp}}] {\mtm_\cC} {\mtp}\right)$}
\BIC{$\typejudgeM{\crec[\mdtp] \mtm \cC {\bind x {\mtm_\cC}}}{\mtp}$}
\DisplayProof
\\
\end{gather*}

\captionsetup{singlelinecheck=on}
\caption{Complexity language types, expressions, and typing.}
\label{fig:cpy_lang}
\hrule
\end{figure*}


The complexity language will serve as a monadic
metalanguage \citep{moggi:ic91:monads} in which we make evaluation cost
explicit.  The syntax and typing are given in Figure~\ref{fig:cpy_lang}.
The preorder judgement defined in Section~\ref{sec:monotonic} will play
a role analogous to an operational or equational semantics for the complexity
language.

Because we are not concerned with the evaluation steps of the complexity
language itself, we remove features of the source language
that were used to control evaluation costs.  Product types are
eliminated by projections, rather than $\ssplitkw$.  We allow
substitution of arbitrary expressions for variables, which is used in
recursors for datatypes.  Consequently, suspensions are not necessary.
We treat $\cmap{\Phi}{x.\mtm}{\mtm_1}$ as an admissible rule (macro),
defined by induction on $\Phi$:
\[
  \AXC{$\typing[\mctx,x\oftype\mtp_0]{\mtm_1}{\mtp_1}$}
  \AXC{$\typejudgeM {\mtm_0} {\subst*\mspf{\mtp_0} }$}
\ndBIC{$\typejudgeM{\cmapkw^\mspf(\bind x {\mtm_1}, {\mtm_0})}{\subst*{\mspf}{\mtp_1}}$}
\DisplayProof
\]
\[
\begin{aligned}
\cmap t {\bind x {\mtm}}{\mtm_0} &:= \subst{\mtm}{\mtm_0}{x} \\
\cmap T {\bind x {\mtm}}{\mtm_0} &:= \mtm_0 \\
\cmap {\mprod{\mspf_0}{\mspf_1}} {\bind x {\mtm}} {\mtm_0} &:=
  \mpair{\cmap{\mspf_0}{\bind x \mtm}{\cproj_0\mtm_0}}{\cmap{\mspf_1}{\bind x \mtm}{\cproj_1\mtm_0}} \\
\cmap {\arr\mtp\mspf} {\bind x\mtm}{\mtm_1}
  &:= \lambda y.\cmap\mspf{\bind x\mtm}{\mtm_1\,y}
\end{aligned}
\]

The type~$\C$ represents some domain of costs. The term constructors for
$\C$ say only that it is a monoid $(\costplus,\costzero)$ with a value
$\costone$ representing the cost of
a single step.  Costs can be interpreted in a variety of
ways---e.g. as natural numbers and as natural numbers with infinity
(Section~\ref{sec:size_semantics}).

Substitutions~$\msub$ in the complexity language are defined as usual,
and satisfy standard composition properties:

\begin{lemma} \label{lem:mono-subst-composition}~
\begin{itemize}
\item If $x$ does not occur in $\msub$, then 
$\subst{\tsubst{\mtm}{\msub,x/x}}{\mtm_1}{x} = \tsubst{\mtm}{\msub,\mtm_1/x}$.
\item 
If $x_1,x_2$ do not occur in $\msub$, then 
$\tsubst{\subst{\subst{\mtm}{\mtm_1}{x_1}}{\mtm_2}{x_2}}{\msub} = \tsubst{\mtm}{\msub,\tsubst{\mtm_1}{\msub}/x_1,\tsubst{\mtm_2}{\msub}/x_2}$.
\end{itemize}
\end{lemma}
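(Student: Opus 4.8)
The plan is to prove both statements by structural induction on the complexity term $\mtm$, using the compositional definition of substitution---both the simultaneous form $\tsubst{\mtm}{\msub}$ and the single-variable form $\subst{\mtm}{\mtm_1}{x}$ commute with every term former---together with the Barendregt convention that bound variables are chosen fresh for all of $\msub$, $x$, $x_1$, $x_2$, $\mtm_1$, and $\mtm_2$. Under this convention the non-variable, non-binding cases ($\costzero$, $\costone$, $\costplus$, $\striv$, pairing, the projections, application, and the constructor $\cC$) are immediate: one pushes both sides through the constructor and appeals to the induction hypotheses on the immediate subterms. So all of the real content lives in the variable cases, and all of the real care lives in the binder cases.

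For the first statement I would check the variable cases directly. If $\mtm$ is $x$ itself, then $\msub, x/x$ leaves $x$ fixed and the outer substitution of $\mtm_1$ for $x$ then replaces it by $\mtm_1$, while $\tsubst{x}{\msub,\mtm_1/x}$ sends $x$ directly to $\mtm_1$; both give $\mtm_1$. If $\mtm$ is a variable $y \neq x$, the left-hand side produces $\subst{\msub(y)}{\mtm_1}{x}$ (reading $\msub(y) = y$ when $y \notin \dom\msub$), and here the hypothesis that $x$ does not occur in $\msub$ forces $x$ not to occur in $\msub(y)$, so this outer substitution is vacuous and equals $\msub(y) = \tsubst{y}{\msub,\mtm_1/x}$. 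This bullet is precisely the complexity-language analogue of Lemma~\ref{lem:source-subst-composition}, and the argument is identical up to the change of grammar.

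For the second statement I would again induct on $\mtm$, with the variable cases as the crux. If $\mtm$ is $x_2$ (and $x_1 \neq x_2$), the left-hand side computes $\subst{\subst{x_2}{\mtm_1}{x_1}}{\mtm_2}{x_2} = \mtm_2$ and then applies $\msub$, yielding $\tsubst{\mtm_2}{\msub}$, exactly what the extended substitution assigns to $x_2$. If $\mtm$ is $x_1$, the left-hand side yields $\tsubst{\subst{\mtm_1}{\mtm_2}{x_2}}{\msub}$ whereas the extended substitution assigns $\tsubst{\mtm_1}{\msub}$; these agree because the freshness convention guarantees that $x_2$ does not occur in $\mtm_1$, so the inner substitution of $\mtm_2$ for $x_2$ is vacuous. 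A variable $y \notin \{x_1,x_2\}$ is handled as in the first bullet, using $x_1,x_2 \notin \msub$. Equivalently, one can view this bullet as the instance $\tsubst{\tsubst{\mtm}{\sigma}}{\msub} = \tsubst{\mtm}{\msub \circ \sigma}$ of the general composition law, where $\sigma$ is the simultaneous substitution $\mtm_1/x_1, \mtm_2/x_2$ and $(\msub \circ \sigma)(z) = \tsubst{\sigma(z)}{\msub}$; the freshness of $x_2$ in $\mtm_1$ is what lets us replace the simultaneous $\tsubst{\mtm}{\sigma}$ by the iterated $\subst{\subst{\mtm}{\mtm_1}{x_1}}{\mtm_2}{x_2}$ displayed in the statement.

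The main obstacle is not any single equational case but the disciplined handling of binders and freshness. When a substitution is pushed under $\lambda x.\mtm$ or under a recursor branch $\bind{x}{\mtm_\cC}$, one must know that the bound variable is fresh for $\msub$, for $\mtm_1$ and $\mtm_2$, and for $x$, $x_1$, $x_2$, so that no capture occurs and the substitutions genuinely commute with the binder; and, as noted above, one must track the freshness that makes the two iterated single substitutions behave like a single simultaneous one. This is exactly the routine-but-error-prone bookkeeping that the variable convention is designed to suppress, and once it is fixed every case reduces to the definition of substitution plus the induction hypotheses. I therefore expect no difficulty beyond a careful statement of the freshness assumptions; this is a standard substitution-composition result whose proof mirrors that of Lemma~\ref{lem:source-subst-composition}.
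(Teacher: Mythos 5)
Your proof is correct and is the standard structural-induction argument; the paper gives no proof of this lemma at all, simply asserting it as a standard composition property of substitution, so your write-up supplies exactly the argument the paper takes for granted. Your observation that the second bullet additionally needs $x_2$ to be fresh for $\mtm_1$ (so that the two iterated single substitutions coincide with the simultaneous one) is the single genuinely non-vacuous point, and you identify and discharge it correctly via the variable convention, which matches how the lemma is actually invoked in the paper (the substituted terms there are closed).
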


\subsection{The Complexity Translation}

A notion of complexity that considers only cost is insufficient
for handling higher-order functions such as
\[
\begin{split}
\slistmap = \lambda (f,xs).\sreckw(&xs, \\
&\snil\mapsto\snil\\
\sconstrsep&\scons\mapsto{\bind {\spair{y}{\spair{ys}{r}}}{\scons(f\,y, \sforce r)}})
\end{split}
\]
The cost of $\slistmap(f,xs)$ depends on the cost of evaluating $f$ on
the elements of~$xs$, and hence (indirectly) on the sizes of the elements
of~$xs$.
And since $\slistmap(f,xs)$ might itself be an argument to another
function (e.g. another $\slistmap$), we also need to predict the
sizes of the elements of $\slistmap(f,xs)$, which depends on the size of
the output of $f$.  
Thus, to analyze $\slistmap$, we should be given a
recurrence for the cost and size of $f(x)$ in terms of the size of $x$,
from which we produce a recurrence that gives the cost and size of
$\slistmap(f,xs)$ in terms of the size of $xs$.  We call the size of
the value of an expression that expression's \emph{potential}, because
the size of the value determines what future uses of that value will
cost.\footnote{\emph{Use cost} would be another reasonable term for
potential.}

This discussion motivates translations~$\typot\cdot$ from source language
types to complexity types and
$\trans\cdot$ from source
language terms to complexity language terms so that if $e\oftype\tau$,
then $\trans e\oftype\C\cross\typot\tau$.  In the complexity language,
we call an 
an expression of type $\typot{\tau}$ a \emph{potential},
an expression of type $\C$ a \emph{cost}, and 
expression of type $\C\cross\typot\tau$ a \emph{complexity}.
We abbreviate $\C \times \typot
\tau$ by $\cpxy \tau$. The first component of $\trans e$ is the cost of
evaluating~$e$, and the second component of $\trans e$ is the potential
of $e$.

To gain some intuition for the full definition of potential, we 
first consider the
type-level~$0$ and~$1$ cases.  At type-level~$0$, the potential cost
of an expression is a measure of the size of that expression's value;
it is the size of the value that determines the cost the expression
contributes to the cost of future computations.  Now consider a
type-level~$1$ expression~$e_0$.  The \emph{use} of~$e_0$ is its
application to a type-level~$0$ expression~$e_1$.  The cost of such
an application is the sum of (i)~the cost of evaluating~$e_0$ to a
value~$\lambda x.e_0'$; (ii)~the cost of evaluating $e_1$ to a
value~$v_1$; (iii)~the cost of evaluating $\subst{e_0'}{v_1}{x}$; and
(iv)~a possible charge for the $\beta$-reduction.  Since (iii)
depends in part on the size of $v_1$ (i.e., the potential of~$e_1$),
by compositionality complexities must capture both cost and
potential.  Furthermore, (iii) is defined in terms of the potential
of $e_0$ (i.e., the potential of $\lambda x.e_0'$).  Thus the
potential of a type-level~$1$ expression should be a map from
type-level~$0$ potentials to type-level~$0$ complexities.

With this in mind, consider (the type of) $\slistmap$.
Its potential should describe what future uses of
$\slistmap$ will cost, in terms of the potentials of its arguments.  For
the type of $\slistmap$ (uncurried), the above discussion suggests that
$\typot{(\tau \to \sigma) \cross (\tau \: \slist) \to \sigma \: \slist}$
ought to be
$(\typot \tau \to \C \times \typot \sigma) \cross \typot {\tau \: \slist} \to \C \times \typot {\sigma \: \slist}$.
For the argument function, we are provided a recurrence that maps
$\tau$-potentials to $\sigma$-complexities. For the argument
list, we are provided a $\tau \: \slist$-potential.  Using these,
the potential of $\slistmap$ must give the cost for doing the whole
$\mathtt{map}$ and give a $\sigma \: \slist$-potential for the value.
This illustrates how the potential of a higher-order function is itself a
higher-order function.  

\newcommand\bindkw{\:\text{>>=}\:}
\newcommand\bindignorekw{\:\text{>>}\:}
\newcommand\returnkw{\mathsf{return}}

As discussed above, we stage the extraction of a recurrence, and in the
first phase, we do not abstract values as sizes (e.g. we do not replace
a list by its length).  Because of this, the complexity translation has
a succinct description.  For any monoid $(\C,+,0)$, the writer monad~\citep{wadler:popl92:essence} $\C
\times -$ is a monad with
\[
\begin{array}{l}
\mathsf{return}(E) := (0,E) \\
E_1 \bindkw E_2 := (\fst(E_1) + \fst(E_2(\snd(E_1))), \snd(E_2(\snd(E_1)))) \\
\end{array}
\]
The monad laws follow from the monoid laws for $\C$.  Thinking of $\C$
as costs, these say that the cost of $\returnkw(e)$ is zero, and that
the cost of bind is the sum of the cost of $E_1$ and the cost of $E_2$
on the potential of $E_1$.  The complexity translation is then a
call-by-value monadic translation from the source language into the
writer monad in the complexity language, where source expressions that
cost a step have the ``effect'' of
incrementing the cost component, using the monad operation
\[
\mathsf{incr}(E:\C) : \C \times \cunit := (E,\ctriv)
\]

We write this translation out explicitly in Figure~\ref{fig:expr_trans}.
When $E$ is a complexity, we write $E_c$ and $E_p$ for $\fst E$ and
$\snd E$ respectively (for ``cost'' and ``potential'').  We will often
need to ``add cost'' to a complexity; when $E_1$ is a cost and $E_2$ a
complexity, we write $\costpluscpy {E_1} {E_2}$ for the
complexity~$(E_1+(E_2)_c, (E_2)_p)$ (in monadic notation, $\mathsf{incr}(E_1)
\bindignorekw E_2$).  The type translation is extended pointwise to
contexts, so $x : \tau \in \sctx$ iff $x \oftype \typot\tau
\in\typot\sctx$---the translation is call-by-value, so variables
range over potentials, not complexities.  For example, 
$\trans x = (0, x)$, where the $x$ on the left is
a source variable and the $x$ on the right is
a potential variable.  Likewise we assume that for every
datatype~$\D$ in the source signature, we have a corresponding
datatype~$\D$ declared in the complexity language.

\begin{figure*}
\[
\begin{array}{rcl}
\cpxy\tau & = & \C\cross\typot\tau
\\
\typot\sunit & = & \cunit
\\
\typot{\sigma\cross\tau} & = & \typot\sigma\cross\typot{\tau}
\\ 
\typot{\sigma\arrow\tau} & = & \typot\sigma\arrow\cpxy\tau
\\
\typot{\ssusp\tau} & = & \cpxy\tau
\\
\typot\D & = & \D
\\ \\
\cpxy\phi & = & \C \cross \typot \phi
\\
\typot t & = & t
\\
\typot \tau & = & \typot\tau
\\
\typot{\phi_0\cross\phi_1} & = & \typot{\phi_0}\cross\typot{\phi_1}\\
\typot{\tau\to\phi} & = & \typot\tau\to\cpxy\phi
\\
\end{array}
\]
\[
\begin{array}{l}
\text{$\typot{\psi}$ has, for each datatype $\delta$ in $\psi$} \\
\cdatadecl* {\D} {\cconstrdecl{\sC^{\D}_0}{\subst*{\typot{\phi_{\sC_0}}}{\D}},\dots,\cconstrdecl{\sC^{\D}_{\sC_{n-1}}}{\subst*{\typot{\phi_{n-1}}}{\D}}}
\end{array}
\]
\[
\begin{aligned}
\\
\trans x &= \cpair 0 x \\
\trans{\striv} &= \cpair 0 {\ctriv} \\
\trans{\spair{e_0}{e_1}} &= \cpair{\trans{e_0}_c + \trans{e_1}_c}{\cpair{\trans{e_0}_p}{\trans{e_1}_p}} \\
\trans{\ssplit{e_0}{x_0}{x_1}{e_1}} &=
{\cst{\trans{e_0}}}\plusc {\cl{\trans{e_1}}{\sproj_0{\pot{\trans {e_0}}}/x_0,\sproj_1{\pot{\trans {e_1}}}/x_1}} \\
\trans{\lambda x.e} &= \cpair 0 {\lambda x.e} \\
\trans{e_0\,e_1} &= \costpluscpy{(1+(e_0)_c+(e_1)_c)} {(e_0)_p(e_1)_p} \\
\trans{\sdelaykw(e)} &= \cpair 0 {\trans e} \\
\trans{\sforcekw(e)} &= \costpluscpy{\trans e_c}{\trans e_p} \\
\trans{\sC_i^\D e} &= \cpair {\trans e_c}{\cC_i^{\D}\trans{e}_p} \\
\trans{\srec[\D] e \sC {\bind x {e_\sC}}}
  &= {\trans e_c}\plusc {\crecdecl[\D] {{\trans e}_p} \cC {\bind x {\costpluscpy 1 {\trans{e_\cC}}}}} \\
\trans{\smapkw^\phi(\bind x {v_0}, v_1)}
  &= \cpair 0{\cmap{\typot\phi}{\bind x {\trans{v_0}_p}}{{\trans{v_1}}_p}} \\
\trans{\sletkw(e_0, \bind x {e_1})}
  &= \costpluscpy{\trans{e_0}_c }{ \subst {\trans{e_1}} {{\trans{e_0}}_p} {x}}
\end{aligned}
\]
\caption{Translation from source types and expressions 
to complexity types and expressions.  Recall that
$\trans e_c = \fst\trans e$ and $\trans e_p = \snd\trans e$.}
\label{fig:expr_trans}
\hrule
\end{figure*}

We note some basic facts about the translation: the type translation
commutes with the application of a strictly positive functor, which is
used to show that the translation preserves types.

\begin{lemma}[Compositionality] \label{lem:compositionality} \mbox{}
\begin{itemize}
\item \comptr{\sarg{\sspf}{\stp}} = \marg{\comptr{\sspf}}{\pottr{\stp}}
\item \pottr{\sarg{\sspf}{\stp}} = \marg{\pottr{\sspf}}{\pottr{\stp}}
\end{itemize}
\end{lemma}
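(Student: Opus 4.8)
The plan is to prove both clauses simultaneously by structural induction on the strictly positive functor $\sspf$, taking the potential clause as the primary invariant and obtaining the complexity clause from it essentially for free. The reason the complexity clause is almost immediate is that $\cpxy\sspf = \C\cross\typot\sspf$ on functors and the recursion variable $t$ never occurs in $\C$: once I know $\typot{\subst*\sspf\stp} = \subst*{\typot\sspf}{\typot\stp}$, I can compute $\cpxy{\subst*\sspf\stp} = \C\cross\typot{\subst*\sspf\stp} = \C\cross\subst*{\typot\sspf}{\typot\stp} = \subst*{(\C\cross\typot\sspf)}{\typot\stp} = \subst*{\cpxy\sspf}{\typot\stp}$, where the third equality pushes the substitution through $\C\cross-$ because $t$ is absent from $\C$. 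So the real content is the potential clause, which I establish by cases on $\sspf$ following the definition of $\typot{\cdot}$ in Figure~\ref{fig:expr_trans}.

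For $\sspf = t$, the source substitution gives $\subst* t\stp = \stp$, so the left side is $\typot\stp$, while $\typot t = t$ and substituting $\typot\stp$ for $t$ again yields $\typot\stp$. For a constant functor $\sspf = \sigma$, we have $\subst*\sigma\stp = \sigma$ since $t$ does not occur in $\sigma$, so both sides reduce to $\typot\sigma$; here I use that $t$ does not occur in $\typot\sigma$, making the complexity-side substitution vacuous. For $\sspf = \sspf_0\cross\sspf_1$ I apply the induction hypothesis to each factor and use that both $\typot{\cdot}$ and the substitution $\subst*{\cdot}{\typot\stp}$ commute with $\cross$. For $\sspf = \sigma\to\sspf'$ I first note $\subst*{\sigma\to\sspf'}\stp = \sigma\to\subst*{\sspf'}\stp$ (again because $t\notin\sigma$), then unfold $\typot{\sigma\to\sspf'} = \typot\sigma\to\cpxy{\sspf'}$ and appeal to the complexity clause of the induction hypothesis on $\sspf'$ inside the codomain, using $t\notin\typot\sigma$ to pass the substitution past the domain.

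The only genuine subtlety, and the step I would flag, is the treatment of the constant functor and of the domain of the constant exponential: both rest on the fact that the recursion variable $t$ occurs neither in a source type $\tau$ (which by the grammar of Figure~\ref{fig:source_lang} cannot mention $t$) nor in its translation $\typot\tau$, so that substituting $\typot\stp$ for $t$ is vacuous on those components. I would record this as a one-line preliminary observation — that $\typot\tau$ has no free occurrence of $t$ for any source type $\tau$, proved by a trivial induction on $\tau$ — after which the two substitution-commutation cases close mechanically. Everything else is bookkeeping that unwinds the clauses of $\typot{\cdot}$ and $\cpxy{\cdot}$ line by line.
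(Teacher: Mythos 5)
Your proof is correct, and it matches the intended argument: the paper states this lemma without proof precisely because it is the routine simultaneous induction on $\sspf$ that you carry out, with the complexity clause following from the potential clause via $\cpxy\sspf = \C\cross\typot\sspf$. Your flagged observation --- that $t$ does not occur in $\typot\tau$ for a source type $\tau$, so the substitution is vacuous in the constant and exponential-domain cases --- is exactly the one point worth recording.
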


\begin{thm}
If $\oftps{\gamma}{\psi}{e}{\tau}$, then $\oftps{\trans\sctx}{\trans \psi}{\trans e}{\trans\tau}$.
\end{thm}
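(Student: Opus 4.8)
The plan is to prove type preservation by induction on the derivation of $\typingG e \tau$, showing in each case that $\trans e$ has type $\cpxy\tau = \C\cross\typot\tau$ in the translated context $\trans\sctx$ and signature $\trans\psi$. For each source typing rule I would unfold the corresponding clause of the translation in Figure~\ref{fig:expr_trans} and typecheck the resulting complexity term against the rules of Figure~\ref{fig:cpy_lang}, feeding in the induction hypotheses. The base cases are immediate: $\trans x = \cpair 0 x$ typechecks because $x\oftype\typot\tau$ lies in $\trans\sctx$ by definition of the context translation, and $\trans\striv = \cpair 0 \ctriv$ has type $\C\cross\cunit = \cpxy\sunit$. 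The cases for pairs, $\lambda$-abstraction, application, and $\sdelaykw/\sforcekw$ are routine applications of the IH together with the type-translation equations $\typot{\sigma\arrow\tau}=\typot\sigma\arrow\cpxy\tau$ and $\typot{\ssusp\tau}=\cpxy\tau$.

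Before the interesting cases, I would assemble three supporting facts. First, the monadic combinators used in the translation are type-correct as derived rules: $\returnkw$, $\bindkw$, $\mathsf{incr}$, and the ``add cost'' operation $\costpluscpy{E_1}{E_2}$ each preserve the expected types, which is immediate from their definitions since $\C$ is closed under $\costplus$ and $\costzero$. Second, I need a standard substitution lemma for the complexity language---if $\typing[\mctx,x\oftype\mtp']{\mtm}{\mtp}$ and $\typejudgeM{\mtm'}{\mtp'}$ then $\typejudgeM{\subst{\mtm}{\mtm'}{x}}{\mtp}$---which is used in the $\ssplitkw$ and $\sletkw$ cases, where the translation substitutes projected or bound potentials for the pattern variables. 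Third, I would prove by induction on $\mspf$ that the admissible $\cmapkw$ rule stated in Section~\ref{sec:complexity_lang} is type-correct; this discharges the $\smapkw$ case, whose translation is $\cpair 0 {\cmap{\typot\phi}{\bind x {\trans{v_0}_p}}{\trans{v_1}_p}}$.

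The constructor, recursor, and $\smapkw$ cases all rest on Compositionality (Lemma~\ref{lem:compositionality}), namely $\typot{\subst*{\phi}{\tau}}=\subst*{\typot\phi}{\typot\tau}$. For a constructor, $\trans{\sC_i^\D e}=\cpair{\trans e_c}{\cC_i^\D \trans e_p}$, and the IH gives $\trans e_p\oftype\typot{\subst*{\phi_\sC}{\D}}$; Compositionality rewrites this to $\subst*{\typot{\phi_\sC}}{\D}$, which is exactly the argument type the complexity constructor $\cC_i^\D$ demands, so the result has type $\D=\typot\D$.

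The recursor case is the main obstacle and the place where Compositionality does the real work. The translation produces an inner complexity recursor at result type $\cpxy\tau$, whose branch for $\cC$ is $\costpluscpy 1 {\trans{e_\cC}}$. The complexity recursor rule requires each branch variable $x$ to have type $\subst*{\typot{\phi_\sC}}{\D\cross\cpxy\tau}$, whereas the IH applied to the source branch $e_\cC$ supplies $x$ at type $\typot{\subst*{\phi_\sC}{\D\cross\ssusp\tau}}$. I would reconcile these by computing
\[
\typot{\subst*{\phi_\sC}{\D\cross\ssusp\tau}}
 = \subst*{\typot{\phi_\sC}}{\typot{\D\cross\ssusp\tau}}
 = \subst*{\typot{\phi_\sC}}{\D\cross\cpxy\tau},
\]
using Compositionality for the first equality and the type-translation equations $\typot{\D\cross\ssusp\tau}=\typot\D\cross\typot{\ssusp\tau}$ and $\typot{\ssusp\tau}=\cpxy\tau$ for the second. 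This identity is precisely what makes the translated branch typecheck against the complexity recursor rule; that the translation of a suspension is itself a complexity is what allows the recursive results, stored at type $\ssusp\tau$ in the source, to line up with the $\cpxy\tau$ components that the complexity recursor threads through its branches. With this identification the branch $\costpluscpy 1 {\trans{e_\cC}}$ has type $\cpxy\tau$, the scrutinee $\trans e_p$ has type $\D$, and the outer $\trans e_c \plusc (-)$ keeps the whole expression at $\cpxy\tau$, completing the case.
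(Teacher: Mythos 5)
Your proposal is correct and matches the paper's (largely implicit) argument: the paper states this theorem immediately after the Compositionality lemma and notes that commuting the type translation with functor application is exactly what is needed for type preservation, which is the same induction-on-typing-derivations structure you lay out, with the constructor/recursor/$\smapkw$ cases discharged by $\typot{\subst*{\phi}{\tau}}=\subst*{\typot\phi}{\typot\tau}$. The auxiliary facts you assemble (substitution lemma, typing of the monadic combinators, admissibility of the $\cmapkw$ rule) are the standard ones the paper takes for granted, and your computation $\typot{\subst*{\phi_\sC}{\D\cross\ssusp\tau}}=\subst*{\typot{\phi_\sC}}{\D\cross\cpxy\tau}$ in the recursor case is exactly the key identity.
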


\section{A Size-Based Complexity Semantics}
\label{sec:size_semantics}

In the above translation, the potential of a value has just as much
information as that value itself.  Next, we investigate how to
abstract values to sizes, such as replacing a list by its length.  In
this section, we make this replacement by defining a size-based
denotational semantics of the complexity language.  

We need to be able to treat potentials of inductively-defined data in
two different ways.  On the one hand, potentials must reflect intuitions
about sizes.  To that end, we will insist that potentials be partial
orders.  On the other hand, to interpret $\creckw$ expressions, we must be
able to distinguish the datatype constructor that a potential
represents.  In other words, we need the potentials to also be
(something like) inductive data types.  We will have our cake and eat it
too using an approach similar to the work on views
\citep{wadler:popl87:views}.  As hinted above, we interpret each
datatype~$\DD$ in the complexity language as a partial order $\den \DD{}$.  But we will also make use
of the sum type $D^\DD =
\den{\subst*{\mspf_{C_0}}\DD}{}+\dots+\den{\subst*{\mspf_{C_{n-1}}}\DD}{}$
(representing the unfolding of the datatype) and a function
$\semsize_\DD:D^\DD\to\den\DD{}$ (which represents the size of a
constructor, in terms of the size of the argument to the constructor).
When $\mspf_{\cC_i} = t$ (i.e. the argument to the constructor is a
single recursive occurrence of the datatype), 
$\semsize(\seminj_i\,x)$ is intended to represent an upper bound on the
size of the values of the form $\sC\,v$, where $v$ is a value of size at
most~$x$.  To define the semantics of $\crec[\DD] y C {\bind x {E_C}}$,
we consider all values~$z\in D^\DD$ such that $\semsize_\DD(z)\leq y$.  We
can distinguish between such values to (recursively) compute the
possible values of the form~$\subst{E_C}{\ldots}{x}$, and then take a
maximum over all such values.

For example, for the inductive definitions of $\snat$ and $\slist$
(where the list elements have type $\snat$), suppose we want to construe the
size of a $\slist$ to be the number of all $\snat$ and $\slist$
constructors.  We implement this in the complexity semantics as
\[
\begin{array}{rcl}
\den\cnat{} & = & \Z^+ \\
D^{\cnat} &=& \set{*} + \den{\cnat}{} \\
\semsize_{\cnat}(*) &=& 1 \\
\semsize_{\cnat}(m) &=& 1+m \\
\\
\den\clist{} & = & \Z^+ \\
D^{\clist} &=& \set{*} + (\den\cnat{}\cross\den\clist{}) \\
\semsize_{\clist}(*) &=& 1 \\
\semsize_{\clist}((m, n)) &=& 1 + m + n
\end{array}
\]
where $\Z^+$ is the non-negative integers.\footnote{We refer to
$\Z^+$ rather than the natural numbers to emphasize that the
intepretation of $\D$ need not be an inductive datatype.}

We define the size-based complexity semantics as follows.  The
base cases for an inductive definition of $(S^T, \leq_T)$ for
every complexity type~$\mtp$ consist of 
well-founded partial orders $(S^\Delta,\leq_\Delta)$ for every
datatype $\Delta$ in the signature, such that $\leq_\Delta$ is closed
under arbitrary maximums (see below for a discussion).  We define
$\N^\infty = \N\union\set{\infty}$, where $\N$ is the natural numbers
with the usual order and addition.
We extend the order and addition to~$\infty$ by
$n\leq_{\N^\infty}\infty$ and
$n+\infty = \infty+n = \infty+\infty = \infty$ for all $n\in\N$. 
For products and functions we define
$S^{\cunit} = \set{*}$ and $S^{\mprod{T_0}{T_1}} = S^{T_0}\cross
S^{T_1}$ and $S^{\sarr{T_0}{T_1}} = (S^{T_1})^{S^{T_0}}$, with the
trivial, componentwise, and pointwise partial orders, respectively.
Complexity types are interpreted into this 
type structure by setting
$\den{\C}{} = \N^\infty$ and $\den{T}{} = S^T$ for each
complexity type~$\mtp$.

Stating the conditions on programmer-defined size functions requires
some auxiliary notions.  For ~$\cdatadecl\DD\cC{\mspf_\cC}$, set $D^\DD =
\den{\subst*{\mspf_{\cC_0}}{\DD}}{}+\dots+\den{\subst*{\mspf_{\cC_{n-1}}}\DD}{}$,
writing $\seminj_i:\den{\subst*{\mspf_{{\cC_i}}}\DD}{}\to D^\DD$ for the
$i^{th}$ injection.  
Next, we define a function $\semsz^\mspf$ with domain
$\den{\subst*\mspf\DD}{}$ (the semantic analogue of the
argument type of a datatype constructor).
$\semsz^\mspf(a)$
is intended to be the maximum of the values of type~$\den{\DD}{}$ from which~$a$
is built using pairing and function application.  
We want to define $\semsz^\mspf$ by induction on $\mspf$, computing
the maximum at each step.  To ignore values not of type $\den{\DD}{}$
we assume an element $\bot\notin S^\DD$ that serves as an identity
for $\bmax$; that is,
we order $S^\DD \union \set{\bot}$ so that $\bot < a$ for all
$a \in S^\Delta$.  We define $\semsz^\mspf :
\den{\subst*{\mspf}{\DD}}{}\to S^\DD \union \set{\bot}$ by induction on
$\mspf$ as follows:
\[
\begin{array}{l}
 \semsz^t(a) = a \\
 \semsz^\mtp(a) = \bot \\
 \semsz^{\mprod{\mspf_0}{\mspf_1}}(a) = \semsz^{\mspf_0}(a)\bmax\semsz^{\mspf_1}(a) \\
 \semsz^{\mtp\to\mspf}(f) = \bigvee_{a\in\den\mtp{}}\semsz^\mspf(f(a)) 
\end{array}
\]

The key input to the size-based semantics is programmer-supplied
size functions $\semsize_\DD : D^\Delta \to S^\Delta$ such that
\[
\semsz^{\mspf_{\cC_i}}(a) <_{S^\Delta \union \set{\bot}} ({\semsize_\DD}\comp{\seminj_i})(a)
\]
$\semsize_\DD$ represents the programmer's notion of size for
inductively-defined values.  The only condition, which is used to
interpret the recursor, is that the size of a value is strictly greater
than the size of any of its substructures of the same type.
For example, this condition permits interpreting the size of a list as
its length or its total number of constructors, and the size of a tree as
its number of nodes or its height.  Non-examples include defining the size
of a list of natural numbers to be the number of successor constructors,
and defining the size of all natural numbers to be a constant (though
see Section~\ref{sec:inf-costs} for a discussion of this latter
possibility).


\begin{figure*}
\begin{align*}
\semcase^\D &\oftype
D^\D\cross\prod_{\cC}(S^{\den{\subst*{\mspf_\cC}{\D}}{}}\arrow S^\tau) \arrow S^\tau \\
\semcase(\cC x, (\dots,f_\cC,\dots)) &= f_\cC(x) \\ \\
\den{\cC e}\xi &= \semsize(\cC(\den e\xi)) \\
\den{\crecdecl[\D] {\mtm^\D} \cC {\bind {x^{\subst*{\phi_\cC}{\D\cross\tau}}} {\mtm_\cC^\tau}}}\xi & =
\bigvee_{\semsize z\leq \den{\mtm}\xi}\semcase(z, (\dots,f_\cC,\dots)) \\
\intertext{where}
f_\cC(x) &= \den{\mtm_\cC}{\extendenv \xi x {\den{\cmapkw^{\mspf_\cC}(\bind w{\spair{w}{\crecdecl w \cC {\bind x {\mtm_\cC}}}}}{\xi}, x)}} \\
&= \den{\mtm_\cC}{\extendenv\xi x {\semmapkw^{\mspf_\cC}(\semlam a {\sempair{a}{\den{\crecdecl w \cC {\bind x {\mtm_\cC}}}{\extendenv\xi w a}}}, x)}}
\end{align*}
\caption{The interpretation of $\creckw$ in 
the size-based semantics for the complexity language.}
\label{fig:size_sem}
\hrule
\end{figure*}

The interpretation of most terms is standard except for that of
constructors and $\creckw$, which are given in
Figure~\ref{fig:size_sem}.  We write $\semmapkw^{\mspf,\mtp_0,\mtp_1}$
for semantic functions that mirror the definition of $\cmapkw$, and 
we overload the notation~$\cC_i$ to stand for 
$\seminj_i:\den{\subst*{\mspf_{{\cC_i}}}\D}{}\to D^\D$.
The implementation of the recursors requires a bit of explanation, and is motivated by
the goal to have $\trans e$ bound the cost and potential of~$e$.  
We
expect that $\den{\trans{{\srec[\D] e \sC {\bind x {e_\sC}}}}}{}$, which
depends on $\den{\crec [\D]
{{\trans e}_p} \sC {\bind x {\trans {e_\sC}}}}{}$, should branch on $\den{\trans
e_p}{}$, evaluating to the appropriate~$\den{\trans{e_\sC}}{}$.  However, 
$\den{\trans e_p}{}$ will be a semantic value
of type $S^\D$, whereas to branch, we need a semantic value of type~$D^\D$.
Furthermore, $\den{\trans e_p}{}$ is only 
an \emph{upper bound} on the size of~$e$, 
so we cannot
use $\den{\trans e_p}{}$ to predict which branch the evaluation of the
source $\sreckw$ expression will follow.  We solve these problems by
introducing a semantic $\semcase$ function, and define the denotation of
$\creckw$ expressions by taking a maximum over the branches for all
semantic values that are bounded by the upper bound $\den{\trans e_p}{}$.  This
is the source of the requirement that base-type potentials be closed
under arbitrary maximums.
Although this requirement seems rather strong, in most
examples it seems easy to satisfy.
In particular, we think of most
datatype potentials (sizes) as being natural numbers, and so
we satisfy the condition by interpreting them by~$\N^\infty$.

The restriction on $\semsize_\DD$ ensures that the recursion used to
interpret $\creckw$ expressions descends along a well-founded partial
order, and hence is well-defined.  The maximum may end up being a
maximum over all possible values, but this simply indicates that our
interpretation fails to give us precise information.

We illustrate this semantics on some examples.  In order to ease the
notation, we will occasionally write syntactic expressions for the
corresponding semantic values (in effect, dropping $\den\cdot{}$).  We
also write the $\semcase$ function as a branch on constructors; for
example, we write $\semcase(t, \cemp\mapsto\bind x{\cpair 1
  1}\mid\cnode\mapsto\bind{\langle y,t_0,t_1\rangle}e)$ for $\semcase(t,
\lambda x.\cpair 1 1, \lambda\langle y,t_0,t_1\rangle. e)$.

\subsection{Booleans and Conditionals}
In the source language we define booleans and 
their $\scase$ construct:
\begin{align*}
\sdatatype\:\sbool &= \sconstrdecl \strue\sunit\mid\sconstrdecl\sfalse\sunit\\
\scase(e^{\sbool}, e_0^\tau, e_1^\tau) &= \srecm e {\strue\mapsto {e_0}\mid\sfalse\mapsto {e_1}}
\end{align*}
(recall our convention on writing $e_\cC$ for $\bind x {e_\cC}$ when
$\sspf_\cC=\sunit$).
In the semantics of the complexity language, we
interpret $\cbool$ as a one-element set~$\set{1}$, so $\strue$ and
$\sfalse$ are indistinguishable by ``size.''
Our interpretation yields
\begin{align*}
\den{\trans{\scase(e, e_0, e_1)}}{}
  &= {\trans{e}_c}\plusc
     {\crecdeclm {\trans e_p} {\ctrue\mapsto{{\costpluscpy 1 {\trans {e_0}}}}\mid\cfalse\mapsto{{\costplusone{\trans {e_1}}}}}} \\
  &= \trans{e}_c \plusc
     \bigvee_{\semsize b\leq\trans e_p}\semcase(b, {\ctrue\mapsto {1 \plusc \trans{e_0}}\mid\cfalse\mapsto {1 \plusc \trans{e_1}}}) \\
  &= \trans{e}_c \plusc (\semcase(\ctrue, {\ctrue\mapsto {1 \plusc \trans{e_0}}\mid\cfalse\mapsto{1 \plusc \trans{e_1}}}) \\
  &\qquad \vee \semcase(\cfalse, {\ctrue\mapsto {1 \plusc \trans{e_0}}\mid\cfalse\mapsto{1 \plusc \trans{e_1}}})) \\
  &= (1+\trans e_c) \plusc (\trans{e_0}\bmax\trans{e_1}).
\end{align*}
In other words, if we cannot distinguish between $\strue$ and $\sfalse$
by size, then
the interpretation of a conditional is just the maximum of its branches
(with the additional cost of evaluating the test).
This is precisely the interpretation used by
\citet{danner-et-al:plpv13}.

\subsection{Tree Membership}
\label{ex:tree-mem}
Next we consider an example that shows that the ``big'' maximum used
to interpret the recursor can typically be simplified to the recurrence
that one expects to see.
We analyze the cost of checking membership in an $\sint$-labeled tree.  We
write $e_0 \srcconstr{orelse} e_1$ as an abbreviation for 
$\scase{(e_0,\strue \mapsto \strue \mid \sfalse \mapsto e_1)}$.


\[
\begin{array}{l}
\sdatatype\:\stree = \sconstrdecl \semp \sunit \mid \sconstrdecl \snode {\sint\cross\stree\cross\stree}  \\
\smem(t, x) = \sreckw(t, \\
  \quad \semp\mapsto \sfalse \\
  \quad  \snode\mapsto \bind {\stuple{y, \spair{t_0}{r_0}, \spair{t_1}{r_1}}}{} \\
  \qquad y=x \: \srcconstr{orelse} \: (\sforcekw\,r_0 \: \srcconstr{orelse} \:  \sforcekw\,r_1))
\end{array}
\]

For this example, we treat $\sint$ (in the source and complexity
languages) as a datatype with $2^{32}$ constructors where the equality test
$x = y$ is implemented by a rather large case analysis.  Let us
define the size of a tree to be the number of nodes:
\[
\begin{array}{rcl}
\den\ctree{} &=& \N^\infty \\
D^{\ctree} &=& \set{*} + \set{1}\cross\N^\infty\cross\N^\infty \\
\semsize_{\ctree}(\cemp) &=& 0 \\
\semsize_{\ctree}(\cnode(1, n_0, n_1)) &=& 1 + n_0 + n_1
\end{array}
\]
We would like to get the following recurrence for the
cost of the $\sreckw$ expression when~$t$ has size~$n$:
\[
T(0) = 1
\qquad
T(n) = \bigmax_{n_0+n_1+1=n} 6 + T(n_0) + T(n_1)
\]
($x=y$ requires an application and two $\scase$ evaluations; 
each $\srckeyw{orelse}$ evaluation costs $1$; and we charge for the
$\sreckw$ reduction).

Working through the interpretation yields
$\den{\trans{\smem(t, x)}}{}_c = \trans t_c + g(\trans t_p) + 1$
where
\[
g(n) 
= \llbracket\creckw(z, \cemp\mapsto 1
\cnode\mapsto\bind{\ctuple{y, \cpair{t_0}{r_0}, \cpair{t_1}{r_1}}}{6 + (r_0)_c + (r_1)_c}
\rrbracket\unitenv z n.
\]
We can calculate that $g(0) = 1$, and for $n>0$:
\begin{align*}
g(n)
&= \bigmax_{\size t\leq n}\semcase(t, \\
&\qquad\qquad \cemp\mapsto 1 \\
&\qquad\qquad \cnode\mapsto \bind{\ctuple{y, n_0, n_1}}{6+g(n_0)+g(n_1)} \\
&= g(n-1)\bmax \bigmax_{\size t = n}\semcase(t,\dots) \\
&= g(n-1)\bmax \bigmax_{1+n_0+n_1 = n}\semcase(\cnode(1, n_0, n_1),\dots) \\
&= g(n-1)\bmax \bigmax_{1+n_0+n_1 = n}(6 + g(n_0) + g(n_1))
\end{align*}
We now notice that when we take $n_0 = 0$ and $n_1 = n-1$ we have
\[
6 + g(n_0) + g(n_1) = 6 + g(0) + g(n-1) \geq g(n-1)
\]
and hence
\begin{align*}
g(n)
  &= g(n-1)\bmax\bigmax_{1+n_0+n_1}(6+g(n_0)+g(n_1)) \\
  &= \bigmax_{1+n_0+n_1}(6+g(n_0)+g(n_1))
\end{align*}
which is precisely the recurrence we would expect.

\subsection{Tree Map}

Next, we consider an example that illustrates reasoning about higher-order
functions and the benefits of choosing an appropriate notion of size.
We analyze the cost of the map function for $\snat$-labeled binary trees:
\begin{align*}
\streemap(f, t) &= \sreckw(t, \\
&\qquad \semp\mapsto\semp \\
&\qquad \snode\mapsto\bind{\stuple{y,\spair{t_0}{r_0}, \spair{t_1}{r_1}}}{} \\
&\qquad \qquad{\snode(f(y), \sforcekw r_0,\sforcekw r_1)}.
\end{align*}
Suppose the cost of evaluating~$f$
is monotone with respect to the size of its argument, where 
we define the size of a natural number $n$ to be $1+n$ (to count the
zero constructor).  The cost of evaluating
$\streemap(f, t)$ should be bounded by $1+n\cdot (1+f(s)_c)$, 
where $n$ is the number of nodes in~$t$,
$s$ is the maximum size of all labels in~$t$,
and we write $f(s)_c$ for the cost of evaluating
$f$ on a natural number of size $s$ (the map runs $f$ on an input
of size at most $s$ for each of the $n$ nodes, and takes an additional $n$
steps to traverse the tree).  

We take $\den\ctree{} = \N^\infty\cross\N^\infty$, where we think of the
pair $(n, s)$ as (number of nodes, maximum size of label), and
use the mutual ordering on pairs ($(n, s) < (n', s')$ iff $n\leq n'$ and
$s< s'$ or $n<n'$ and $s\leq s'$).  The size function is defined as follows:
\begin{align*}
\semsize(\cemp) &= (0, 0) \\
\semsize(\cnode(n, (n_0, s_0), (n_1,s_1))) &= (1+n_0+n_1, \max\set{n,s_0,s_1}).
\end{align*}
Let us write $g(m, s) = \den{\trans{\sreckw(\dots)}}{\unitenv t{(m, s)}}$, 
so that $(\den{\trans{\streemap}}{}(f, (m, s)))_c = g(m, s) + 1$.
We now show that
$g(m, s) \leq m(1+ f(s)_c)$ by induction:
\[
\begin{split}
g&(m, s) \\
&=\bigmax_{\semsize z\leq(m, s)}\semcase(z, \\
&\qquad\qquad \cemp\mapsto 1{} \\
&\qquad\qquad \cnode\mapsto\bind{\langle{n,(n_0,s_0),(n_1,s_0)}\rangle}{} \\
&\qquad\qquad\qquad\bigl(1+(f(n))_c + (g(n_0, s_0))_c + (g(n_1, s_1))_c\bigr) \\
&= 1 \bmax {}
   \bigmax_{\substack{1+n_0+n_1\leq m\\ \max\set{n,s_0,s_1}\leq s}}
   \bigl(1+f(n)_c + (g(n_0, s_0))_c + (g(n_1, s_1))_c\bigr) \\
&\leq \bigmax_{\substack{1+n_0+n_1\leq m\\ \max\set{n,s_0,s_1}\leq s}}
      (1 + f(n)_c + n_0\cdot (1+f(s_0)_c) + n_1\cdot (1+f(s_1)_c)) \\
&\leq \bigmax_{\substack{1+n_0+n_1\leq m\\ \max\set{n,s_0,s_1}\leq s}}
(1+n_0+n_1)(1+f(\max\set{n,s_0,s_1})_c) \\
&\leq m\cdot (1+f(s)_c).
\end{split}
\]

\subsection{The Bounding Theorem for the Size-Based Semantics}
\label{sec:bounding-theorem}

The most basic correctness criterion for our technique is that a closed
source program's operational cost is bounded by the cost component of
the denotation of its complexity translation.  However, to know that
extracted \emph{recurrences} are correct, it is not enough to consider
closed programs; we also need to know that the potential of a function
bounds that function's operational cost on all arguments, and so on at
higher type.  Thus, we use a logical relation. We first show
a simplified case of the logical relation, where for this
  subsection only we do not allow datatype constructors to take
functions as arguments (i.e., drop the $\tau\to\phi$ clause from constructor
argument types~$\phi$).  In Section~\ref{sec:monotonic}, we consider the
general case, which requires some non-trivial technical additions
to the main definition.

\begin{defn}[Bounding relation]
\label{def:bounding-reln-prelim}~
\begin{enumerate}
\item 
Let $e$ be a closed source language expression and $a$ a 
semantic value.
We write $e\bounded_\tau a$ to mean:
if $\evalin{e}{v}{n}$, then
\begin{enumerate}
\item $n\leq a_c$; and
\item $v \vbounded_\tau a_p$.
\end{enumerate}

\item Let $v$ be a source language value and $a$ a semantic value.
We define $v\vbounded_\tau a$ by:
\begin{enumerate}
\item $()\vbounded_{|unit|} 1$.
\item $\spair{v_0}{v_1}\vbounded_{\tau_0\cross\tau_1}\cpair{a_0}{a_1}$ if
$v_i\vbounded_{\tau_i}a_i$ for $i=0,1$.
\item $\sdelaykw(e)\vbounded_{\ssusp\tau} a$ if $e\bounded_\tau a$.
\item \label{item:vbounded_D}
$\sC(v)\vbounded_\D a$ if there is $a'$ such that
$v\vbounded_{\subst*{\phi_\sC}{\D}} a'$ and
$\semsize({\sC(a')})\leq {a}$.\footnote{
Our restriction on the form of~$\sspf_\sC$ allows us to conclude that
this definition is well-founded, even though the type gets bigger in clause~(\ref{item:vbounded_D}), because we can treat the definition of $\vbounded_\D$
as an inner induction on the values.  
Allowing
datatype constructors to take function arguments complicates the situation,
and in Section~\ref{sec:monotonic} we must define a more general
relation.}
\item $\lambda x.e\vbounded_{\sigma\arrow\tau} a$ if whenever
$v\vbounded_\sigma a'$, $\subst e v x\bounded_\tau a(a')$.
\end{enumerate}


\end{enumerate}
\end{defn}

\begin{thm}[Bounding theorem]
\label{thm:bounding}
If $\stm\oftype \stp$ in the source language, then $\expbound\stm{\den{\trans e}{}}{\stp}$.
\end{thm}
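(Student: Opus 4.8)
The plan is to prove the bounding theorem (Theorem~\ref{thm:bounding}) by generalizing the statement to open terms and proving it by induction on the typing derivation of~$e$, using the bounding relation of Definition~\ref{def:bounding-reln-prelim} as the induction hypothesis. A closed statement cannot be proved directly by induction, since subterms live under binders and carry free variables; so first I would formulate a \emph{fundamental lemma} asserting that for any typing derivation $\typingG{e}{\tau}$ and any source substitution $\theta\oftype\Gamma$ together with a semantic environment~$\xi$ that is pointwise related to~$\theta$ (i.e.\ $\theta(x)\vbounded_{\Gamma(x)}\xi(x)$ for each $x\in\dom\Gamma$), one has $\tsubst{e}{\theta}\bounded_\tau \den{\trans e}{\xi}$. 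The theorem then follows by instantiating with empty $\Gamma$, the empty substitution, and the empty environment.

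The core of the argument is the induction over the typing rules. For each syntactic form I would unfold the operational cost semantics of Figure~\ref{fig:source_semantics}, unfold the complexity translation of Figure~\ref{fig:expr_trans}, and unfold the size-based denotation of Figure~\ref{fig:size_sem}, then check that the cost and potential bounds of the relation are satisfied. The cases for unit, pairs, $\ssplitkw$, $\lambda$, application, $\sdelaykw$, $\sforcekw$, and $\sletkw$ are essentially the first-order calculations familiar from \citet{danner-et-al:plpv13}: for application, for instance, I would use the three evaluation premises, charge the extra $\costone$ to account for the $+1$ in $\trans{e_0\,e_1}$, and invoke the induction hypotheses at the function and argument together with the definition of $\vbounded$ at arrow type, relying on Lemma~\ref{lem:source-subst-composition} and Lemma~\ref{lem:mono-subst-composition} to reconcile substitution with translation. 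For the constructor case I would use clause~(\ref{item:vbounded_D}) of the relation directly, taking $a'=\den{\trans e}{\xi}_p$ and appealing to the premise $\semsize(\sC(a'))\le a$ built into the definition.

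The genuinely hard case, and the one I expect to be the main obstacle, is the recursor $\srec[\D]{e}{\sC}{\bind x {e_\sC}}$. Here the operational rule runs $\smapkw$ to build the predecessor-and-recursive-result structure and then evaluates the branch $e_\sC$, whereas the denotation takes a \emph{maximum} $\bigvee_{\semsize z\le \den{\trans e}{\xi}}\semcase(z,\dots)$ over all semantic values~$z$ bounded by the potential of the scrutinee. To connect them I would argue by an inner induction along the well-founded order on $S^\D$, which is legitimate precisely because the programmer-supplied size condition $\semsz^{\mspf_{\cC_i}}(a) < (\semsize_\D\comp\seminj_i)(a)$ forces every recursive call to descend strictly. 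The key steps are: (i)~the scrutinee $e$ evaluates to some $\sC\,v_0$ with $\sC(\den{\cdots}{})$ bounded by $\den{\trans e}{\xi}_p$, so the particular constructor~$z$ chosen by the operational semantics is among those ranged over by the maximum, and the $\bigvee$ therefore dominates the single branch actually taken; (ii)~the $\smapkw$ step has zero cost (Lemma~\ref{lem:map_no_cost} and Lemma~\ref{lem:value-eval-inv}) and its output is related, via the Compositionality Lemma~\ref{lem:compositionality}, to $\semmapkw$ applied to the pairing of identity with the recursive denotation; and (iii)~the recursive results are related to their denotations by the inner induction hypothesis, so the branch body $e_\sC$ is related to $f_\cC$ by the outer induction hypothesis applied to the extended substitution and environment. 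The bookkeeping that ties the $\smapkw$-produced tuple of delayed recursive calls to the semantic $\semmapkw$ structure, and the verification that the single operational branch is bounded by the semantic maximum, are where all the delicacy lies.
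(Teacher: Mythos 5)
Your strategy---generalize to open terms with a related substitution/environment pair, induct on typing, and handle the recursor by an inner induction justified by the strict-decrease condition on $\semsize$---is a sensible direct attack, and for the fragment in which datatype constructors take no function arguments it would go through roughly as you describe. But as written it has a genuine gap: you take Definition~\ref{def:bounding-reln-prelim} as your induction hypothesis, and the paper introduces that relation only under the standing restriction (stated at the start of Section~\ref{sec:bounding-theorem}) that the clause $\tau\to\phi$ is dropped from constructor argument types. The footnote to clause~(\ref{item:vbounded_D}) is explicit that the inner induction defining $\vbounded_\D$ is well-founded only because of that restriction; once a constructor argument may be a function whose codomain again contains recursive occurrences of $\D$, the clause for $\sC(v)\vbounded_\D a$ must quantify over all related inputs to that function, and the definition is no longer a plain inductive definition on values. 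This is exactly why the paper replaces $\vbounded$ at datatypes by the relation transformer $\spfvalbound{-}{-}{R}{\sspf}$ (with $R$ strictly positive) and why the Map lemma (Lemma~\ref{lem:lr-map}) needs its hypothesis~(\ref{item:lr-map-subders}) about \emph{subderivations} of $\spfvalbound{v_0}{E_0}{\valbound{-}{-}{\tau_0}}{\phi}$, so that relatedness is available at every recursive position buried under a $\lambda$. Your step~(ii)---relating the $\smapkw$-produced structure of delayed recursive calls to the semantic $\semmapkw$ structure---is precisely where this machinery is required and where your sketch is silent; note also that the lemma doing that work is the Map lemma, not the type-level Compositionality Lemma~\ref{lem:compositionality} you cite.

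Separately, your route differs from the paper's. The paper does not prove Theorem~\ref{thm:bounding} by a direct semantic induction: it proves a purely syntactic bounding theorem (Theorem~\ref{thm:mon-bounding}) relating source terms to complexity-language \emph{terms} under the axiomatized preorder of Figure~\ref{fig:monotonic_type_theory}, then checks that the size-based semantics validates those axioms (the step rule for $\creckw$ being discharged by the $\bigvee$ over all $z$ with $\semsize z\leq\den{\cC\mtm_0}{\xi}$), and obtains Theorem~\ref{thm:bounding} as a corollary. The payoff is that one logical-relations proof covers every model of the preorder rules; your direct proof, even once repaired with the relation-transformer machinery, would establish the result only for the size-based model and would have to be redone for each of the other semantics in Section~\ref{sec:mon-interp-examples}.
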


Rather than proving this bounding theorem directly, in
Section~\ref{sec:monotonic} we identify syntactic constraints on the
complexity language which allow the proof to be carried through
(Theorem~\ref{thm:mon-bounding}).  Because the size-based semantics
satisfies these syntactic constraints (see
Section~\ref{sec:size-based-is-mon}), we can prove that the logical
relation defined in Section~\ref{sec:monotonic} implies the one defined
above, giving Theorem~\ref{thm:bounding} as a corollary.

\section{The Syntactic Bounding Theorem}
\label{sec:monotonic}

Rather than proving the bounding theorem for a particular model,
such as the one from the previous section, we use a syntactic judgement
$\lejudgeMP\mtp{\mtm_0}{\mtm_1}$ to axiomatize the properties that are necessary
to prove the theorem.  The rules are in
Figure~\ref{fig:monotonic_type_theory}; we omit typing
premises from the figure, but formally each rule has sufficient premises to make the two
terms have the indicated type.  The first two rules state reflexivity
and transitivity.  The next rule (congruence) says that term contexts 
of a certain form 
(in the sequel, \emph{congruence contexts}) 
are monotonic.  
The next three rules state the monoid laws for $\C$; we write $E_0 = E_1$ to
abbreviate two rules $\homj{}{E_0}{E_1}$ and $\homj{}{E_1}{E_0}$.  The
final three rules (which we call ``step rules'') say that a
$\beta$-redex is bigger than or equal to its reduct.
The first five congruence contexts are the standard
head elimination contexts used in logical relations arguments (principal
arguments of elimination forms) and the next two say that $\costplus$ is
monotone.

These preorder rules are sufficient to prove the bounding theorem, and
permit a variety of interpretations and extensions.  If we
impose no further rules, then $\homj{}{E_0}{E_1}$ is basically weak head
reduction from $E_1$ to $E_0$ (plus the monoid laws for $\C$).
We can also add rules that identify elements of datatypes, in
order to make those elements behave like sizes.  For example, for
lists of $\cint$s, we can say
\[
\ndAXC{\homj{}{E}{\ccons (\_,E)}}
\DisplayProof
\qquad
\ndAXC{\homj{}{\ccons(E_1, E)}{\ccons(E_2 , E)}}
\DisplayProof
\]
and extend the congruence contexts with $\ccons(x,\congctx)$.  Then the
second rule equates any two lists with the same number of elements,
quotienting them to natural numbers, and the first rule orders these
natural numbers by the usual less-than.  Thus, considered up to
$\lejudgekw$, lists are lengths.

Combining these rules with the ones used to prove the bounding theorem,
the recursor for lists behaves like a monotonization of the original
recursion (like the $\bigvee$ in the size-based complexity semantics).
For example, for any specific list value $\ccons(x,xs)$, by the usual
step rule, we have
\[
{\subst{E_1} {(x,xs,{\clistrec{xs}{E_0}{p}{E_1}})} {p}} \le
{\clistrec{\ccons(x,xs)}{E_0}{p}{E_1}}
\]
But we can derive $\homj{}{\cnil}{\ccons(x,xs)}$, so we also have
\[
\begin{array}{ll}
\homj{}{\creckw(\cnil,\ldots)}{\creckw(\ccons(x,xs),\ldots)}
& \text{by congruence}\\  

\homj{}{E_0}{\clistrec{\cnil}{E_0}{p}{E_1}} 
& \text{by the step rule} \\

\homj{}{E_0}{\clistrec{\ccons(x,xs)}{E_0}{p}{E_1}} &
\text{by transitivity}
\end{array}
\]
and similarly for non-empty lists that are $\le \ccons(x,xs)$.  Thus, when we quotient lists
to their lengths, the congruence and step rules for $\creckw$ (used to
prove the bounding theorem) imply that the recursor is bigger than all
of the branches for all smaller lists.  This is in contrast to
the interpretation of the recursor-like construct given by
\citet{danner-et-al:plpv13}, which includes a explicit maximization
that includes the base case.

In Section~\ref{sec:size_semantics}, we used reasoning in the size-based
semantics to massage the recurrence extracted from a program into a
recognizable and solvable form.  In future work, we plan to investigate
how to do this massaging within the syntax of complexity language, using
the rules we have just discussed and others.  For example, while a
recursion bounds what it steps to on all smaller values, we do not yet
have a rule stating that it is a least upper bound.  Here, we lay a
foundation for this by proving the bounding theorem for the small set of
rules in Figure~\ref{fig:monotonic_type_theory}.

\begin{figure*}
\begin{align*}
\congctx ::= &\hole \mid \fst\congctx \mid \snd\congctx \mid  
\app\congctx\mtm \mid
\crec[]{\congctx}{\cC}{\bind x {\mtm_{\cC}}}
\mid \congctx+\mtm\mid \mtm+\congctx 
\end{align*}
\begin{gather*}
\RightLabel{(reflexivity)}
\ndAXC{$\lejudgeMP\mtp\mtm\mtm$}
\DisplayProof
\qquad
  \AXC{$\lejudgeMP\mtp{\mtm_0}{\mtm_1}$}
  \AXC{$\lejudgeMP\mtp{\mtm_1}{\mtm_2}$}
\RightLabel{(transitivity)}
\ndBIC{$\lejudgeMP\mtp{\mtm_0}{\mtm_2}$}
\DisplayProof
\\
\AXC{$\typejudge[]{\mctx,x\oftype\mtp'}{\congctx[x]}{\mtp}$}
\AXC{$\lejudgeMP {\mtp'}{\mtm_0}{\mtm_1}$}
\RightLabel{(congruence)}
\ndBIC{$\lejudgeMP{\mtp}{\congctx[\mtm_0]}{\congctx[\mtm_1]}$}
\DisplayProof
\\
\ndAXC{$\Gamma \vdash \costzero \costplus \mtm =_{\C} \mtm$} 
\DisplayProof
\qquad
\ndAXC{$\Gamma \vdash \mtm \costplus \costzero =_{\C} \mtm$} 
\DisplayProof
\qquad
\ndAXC{$\Gamma \vdash (\mtm_0\costplus\mtm_1) \costplus \mtm_2 =_{\C} \mtm_0 \costplus(\mtm_1\costplus\mtm_2)$} 
\DisplayProof
\\
\ndAXC{$\lejudgeMP\mtp{\subst{\mtm_0}{\mtm_1}{x}}{(\lambda x.\mtm_0)\mtm_1}$}
\DisplayProof
\qquad
\ndAXC{$\lejudgeMP{\mtp_i}{\mtm_i}{\cproj_i\cpair{\mtm_0}{\mtm_1}}$}
\DisplayProof
\\
  \AXC{$\cC\oftype (\sarr\mspf\DD)\in\msig$}
\ndUIC{$\lejudgeMP\mtp{\subst{\mtm_\cC}{\cmap\mspf{\bind y{\cpair{y}{\crecdecl{y}{\cC}{\bind x {\mtm_{\cC}}}}}}{\mtm_0}}{x}}{\crecdecl[\DD]{\cC\mtm_0}{\cC}{\bind x {\mtm_\cC}}}$}
\DisplayProof
\end{gather*}
\caption{Congruence contexts and the preorder judgement}
\label{fig:monotonic_type_theory}
\hrule
\end{figure*}

\subsection{The Bounding Relation}

First, we extend Definition~\ref{def:bounding-reln-prelim} to arbitrary datatypes.
Fix a signature~$\ssig$. We will mutually define the following
relations in definition~\ref{def:bounding-reln}:

\begin{enumerate}

\item \expbound{\stm}{\mtm}{\stp}, where \oftps{\emptyset}{\ssig}{\stm}{\stp} and
\oftps{\emptyset}{\comptr{\ssig}}{\mtm}{\comptr{\stp}}.

\item \valbound{\sv}{\mtm}{\tau}, where \oftps{\emptyset}{\ssig}{\sv}{\stp} and
\oftps{\emptyset}{\comptr{\ssig}}{\mtm}{\pottr{\stp}}.

\item \label{item:spfvalbound}
\spfvalbound{\sv}{\mtm}{R}{\sspf}, where $\oftps{\emptyset}{\ssig}{\sv}{\sarg{\sspf}{\D}}$
  and $\oftps{\emptyset}{\comptr{\ssig}}{\mtm}{\marg {\pottr{\sspf}} {\D}}$.  

\item \label{item:spfexpbound}
\spfexpbound{\stm}{\mtm}{R}{\sspf}, where
$\oftps{\emptyset}{\ssig}{\stm}{ \sarg{\sspf}{\D}}$ and
$\oftps{\emptyset}{\comptr \ssig}{\mtm}{ \marg {\comptr{\sspf}} {\D}}$
\end{enumerate}
In (\ref{item:spfvalbound}) and (\ref{item:spfexpbound}),
$R(\oftps{\emptyset}{\ssig}{\sv}{\D},\oftps{\emptyset}{\comptr{\ssig}}{\mtm}{\D})$,
is any relation; these parts interpret strictly positive functors as
relation transformers.

The definition is by induction on $\stp$ and $\sspf$.  For datatypes,
the signature well-formedness relation~\swfsig{\psi} ensures that
datatypes are ordered, where later ones can refer to earlier ones, but
not vice versa.  Therefore, we could ``inline'' all datatype
declarations: rather than naming datatypes, we could replace each
datatype name \D\/ by an inductive type
$\mu[\overline{\sconstrdecl{\DC}{\phi}}]$.  The logical relation is defined using the
subterm ordering for this ``inlined'' syntax.  In addition to the usual
subterm ordering for types $\tau$ and functors $\phi$, we have that
datatypes that occur earlier in $\psi$ are smaller than later ones, and
if $\DC:(\phi\to\D)\in\psi$, then $\phi$ is smaller than~$\D$.

\begin{defi}[Bounding relation]~
\label{def:bounding-reln}
\begin{enumerate}
\item 
We write $\expbound\stm\mtm\stp$ to mean:  if $\evalin e v n$, then
\begin{itemize}
\item $n\leq \cst\mtm$; and
\item $v\vbounded_\tau\pot\mtm$.
\end{itemize}

\item  We write $\valbound v E \tau$ to mean:
\begin{itemize}
\item $\valbound v E {\sunit}$ is always true.  
\item \valbound{\spair{\sv_1}{\sv_2}}{\mtm}{\sprd {\stp_1}{\stp_2}} iff 
  \valbound{\sv_1}{\fst \mtm}{\stp_1} 
  and \valbound{\sv_2}{\snd \mtm}{\stp_2}.

\item \valbound{\sdelay{\stm}}{\mtm}{\ssusp \stp} iff 
  \expbound{\stm}{\mtm}{\stp}.

\item \valbound{\sv}{\mtm}{\D} is inductively defined by
\[
\infer{\valbound{\sdcon{\DC}{\sv}}{\mtm}{\D}}
      {\sdconbind{\DC}{\sspf}{\D} \in \ssig &
       \spfvalbound{\sv}{\mtm'}{\valbound{-}{-}{\D}}{\sspf} &
       \homj{\D}{\dcon{\DC}{\mtm'}}{\mtm}}
\]
  
\item \valbound{\slam{x}{\stm}}{\mtm}{\sarr {\stp_1}{\stp_2}} iff 
  (for all $\sv_1$ and $\mtm_1$, if
   \valbound{\sv_1}{\mtm_1}{\stp_1} then 
   \expbound{\subst{\stm}{\sv_1}{x}}{\app {\mtm}{\mtm_1}}{\stp_2}).  

\end{itemize}

\item We write
$\spfvalbound\sv{\pot\mtm} R\sspf$ to mean:
%
%
\begin{itemize}
\item \spfvalbound{\sv}{\mtm}{R}{\reccall} if $R(\sv,\mtm)$.

\item \spfvalbound{\sv}{\mtm}{R}{\stp} if \valbound{\sv}{\mtm}{\stp}
($t$ not free in~$\tau$).

\item \spfvalbound{\spair{\sv}{\sv'}}{\mtm}{R}{\sprd{\sspf}{\sspf'}}
  if \spfvalbound{\sv}{\fst \mtm}{R}{\sspf}
  and \spfvalbound{\sv'}{\snd \mtm}{R}{\sspf'}.

\item
  \spfvalbound{\slam{x}{\stm_1}}{\mtm_1}{R}{\sarr{\tau}{\sspf}} if
  for all $\sv$ and $\mtm$, if \valbound{\sv}{\mtm}{\tau}, 
  then \spfexpbound{\subst{\stm_1}{\sv}{x}}{(\app{\mtm_1}{\mtm})}{R}{\sspf}.
\end{itemize}

\item 
We write $\spfexpbound\stm\mtm{R}\sspf$ to mean:  if
$\evalin e v n$, then 
\begin{itemize}
\item $n\leq\cst E$; and
\item $\spfvalbound\sv{\pot\mtm} R\sspf$.
\end{itemize}
\end{enumerate}
The inner inductive definition of $\valbound{v}{E}{\D}$
makes sense because $R$ occurs strictly
positively in \spfvalbound{-}{-}{R}{\sspf}, and because (by
signature formation) $\D$ cannot occur in $\sspf$, so
{\valbound{-}{-}{\D}} does not occur elsewhere in
\spfvalbound{-}{-}{R}{\sspf}.
The relation on open terms considers all
closed instances:

\begin{enumerate} \setcounter{enumi}{4}
\item For a source substitution $\ssub\oftype\sctx$ and
complexity substitution $\msub\oftype\mctx$, we write
$\subbound\ssub\msub\sctx$ to mean
that for all $(x\oftype\tau)\in\sctx$,
$\valbound{\ssub(x)}{\msub(x)}{\tau}$.

\item For $\typejudge\sctx\stm\tau$ and
$\typejudge\mctx\mtm{\trans\tau}$, we write
$\expbound\stm\mtm\tau$ to mean that for all
$\ssub\oftype\sctx$ and $\msub\oftype\mctx$, if
$\subbound\ssub\msub\sctx$, then
$\expbound{\cl\stm\ssub}{\cl\mtm\msub}{\tau}$.
\end{enumerate}
\end{defi}

We write $\isder{\mathcal E}{\mathcal{J}}$ to mean that $\mathcal
E$ is a derivation of any of the judgements just described.  Because the
relation for function types is a function between relations, derivations
are infinitely-branching trees.  A \emph{subderivation} of such
an~$\mathcal E$ is any subtree of $\mathcal E$, which includes any
application of an $\to$-type judgement.  For example, if $\isder{\mathcal
  E_1}{\spfvalbound{\lambda x.e_1}{\mtm_1}{R}{\sarr\stp\sspf}}$ and
$\isder{\mathcal E}{\valbound v E\stp}$, then the derivation of
$\spfvalbound{\subst{e_1}{v}{x}}{\mtm_1\,\mtm}{R}{\sspf}$ is a
subderivation of~$\mathcal E_1$.

Next, we establish some basic properties of the relation:

\begin{lemma}[Weakening] \label{lem:weakening}~
\begin{enumerate}
\item \label{item:exp-weak}
If \expbound{\stm}{\mtm}{\stp}
and \homj{\comptr{\stp}}{\mtm}{\mtm'} then 
\expbound{\stm}{\mtm'}{\stp}.

\item \label{item:val-weak}
If \valbound{\sv}{\mtm}{\stp}
and \homj{\pottr{\stp}}{\mtm}{\mtm'} then 
\valbound{\sv}{\mtm'}{\stp}.
\end{enumerate}
\end{lemma}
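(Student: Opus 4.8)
The plan is to prove the two parts simultaneously by induction on the type $\stp$, arranging the dependencies so that part~(2) at $\stp$ appeals only to parts~(1) and~(2) at proper subtypes of $\stp$, while part~(1) at $\stp$ follows directly from part~(2) at the same $\stp$. This ordering is well-founded: I would establish part~(2) at $\stp$ first and then derive part~(1) at $\stp$ from it, and every use of part~(1) inside the proof of part~(2) is at a strictly smaller type (the body type of a suspension or the codomain of an arrow). The only rules of the preorder judgement the argument needs are transitivity and the congruence rule instantiated at the head-elimination contexts $\fst\hole$, $\snd\hole$, and $\app\hole{\mtm_1}$.

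For part~(1), assume $\expbound\stm\mtm\stp$, $\homj{\comptr\stp}{\mtm}{\mtm'}$, and $\evalin\stm\sv n$. Since $\comptr\stp = \C\cross\pottr\stp$, congruence at $\fst\hole$ and $\snd\hole$ turns $\homj{\comptr\stp}{\mtm}{\mtm'}$ into $\homj{\C}{\cst\mtm}{\cst{\mtm'}}$ and $\homj{\pottr\stp}{\pot\mtm}{\pot{\mtm'}}$. From $\expbound\stm\mtm\stp$ we already have $\homj{\C}{n}{\cst\mtm}$ and $\valbound\sv{\pot\mtm}\stp$; transitivity gives $\homj{\C}{n}{\cst{\mtm'}}$, and part~(2) at $\stp$ (applied to $\valbound\sv{\pot\mtm}\stp$ and the second preorder fact) gives $\valbound\sv{\pot{\mtm'}}\stp$. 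Hence $\expbound\stm{\mtm'}\stp$.

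For part~(2) I would case on the clause defining $\valbound\sv\mtm\stp$. The $\sunit$ clause is vacuously preserved; for $\sprd{\stp_0}{\stp_1}$ I split $\homj{\pottr\stp}{\mtm}{\mtm'}$ with $\fst\hole$/$\snd\hole$ congruence and re-apply the inductive hypothesis at $\stp_0,\stp_1$; for $\ssusp\sigma$ a value $\sdelay\stm$ has $\valbound{\sdelay\stm}{\mtm}{\ssusp\sigma}$ iff $\expbound\stm\mtm\sigma$, and since $\pottr{\ssusp\sigma}=\comptr\sigma$, part~(1) at the smaller type $\sigma$ applies; and for $\sarr\sigma\rho$, given a bounding argument $\valbound{\sv_1}{\mtm_1}\sigma$, congruence at $\app\hole{\mtm_1}$ yields $\homj{\comptr\rho}{\app\mtm{\mtm_1}}{\app{\mtm'}{\mtm_1}}$, and part~(1) at the codomain $\rho$ weakens $\expbound{\subst\stm{\sv_1}x}{\app\mtm{\mtm_1}}\rho$ to the required bound. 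The clause worth highlighting is $\stp=\D$: here $\valbound{\sdcon\DC\sv}\mtm\D$ is witnessed by some $\mtm_0$ with $\spfvalbound\sv{\mtm_0}{\valbound{-}{-}{\D}}\sspf$ and $\homj\D{\dcon\DC{\mtm_0}}\mtm$, and to weaken along $\homj\D\mtm{\mtm'}$ (recall $\pottr\D=\D$) I would simply reuse the same witness $\mtm_0$ and the same functor-bounding derivation, replacing $\homj\D{\dcon\DC{\mtm_0}}\mtm$ by its composite $\homj\D{\dcon\DC{\mtm_0}}{\mtm'}$ via transitivity.

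This last observation is the crux, and it is where the design of the relation pays off: because the datatype clause only asserts $\dcon\DC{\mtm_0}\le\mtm$ rather than an equation or an explicit maximization, upward closure at datatypes is a one-line consequence of transitivity, with no induction over the functor $\sspf$ required. I therefore expect no serious obstacle; the only step needing genuine care is the bookkeeping that keeps the simultaneous induction well-founded, namely checking that every appeal to part~(1) from within part~(2) is discharged at a strictly smaller type.
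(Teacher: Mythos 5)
Your proposal is correct and follows essentially the same route as the paper's proof: a simultaneous induction on $\stp$ using congruence at $\fst\hole$, $\snd\hole$, and $\app\hole{\mtm_1}$ to push the preorder through the head-elimination contexts, with the datatype case discharged by transitivity because the clause for $\valbound{-}{-}{\D}$ bakes weakening into its definition. The only difference is that you make explicit the dependency ordering (part~(1) at $\stp$ via part~(2) at $\stp$; part~(2) at $\stp$ via part~(1) at proper subtypes), which the paper leaves implicit.
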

\begin{proof}
We prove both clauses simultaneously by induction on~$\stp$, using
congruence for $\fst{\hole}$, $\snd{\hole}$ and $\app{\hole}{\mtm}$.

\begin{enumerate}

\item 

Suppose \expbound{\stm}{\mtm}{\stp} and
and \homj{\prd{\C}{\pottr{\stp}}}{\mtm}{\mtm'}. 
We need to show \expbound{\stm}{\mtm'}{\stp}, so assume 
\seval{\stm}{\sc}{\sv}.  Because \expbound{\stm}{\mtm}{\stp} we have that
\homj{\C}{\sc}{\cst{\mtm}}
and
\valbound{\sv}{\pot{\mtm}}{\stp} so it suffices to show
\homj{\C}{\cst{\mtm}}{\cst{\mtm'}}
and 
\homj{\pottr{\stp}}{\pot{\mtm}}{\pot{\mtm'}}.  
Recalling that \cst{-} and \pot{-} are really just \fst{-} and \snd{-},
these are true using the congruence rule with $x.\fst{x}$ and
$x.\snd{x}$ on \homj{\prd{\C}{\pottr{\stp}}}{\mtm}{\mtm'}.

\item 

\proofcase{$\sprd{\stp_1}{\stp_2}$}
By the induction hypotheses, it
suffices to show that the assumption 
\homj{\prd{\pottr{\stp_1}}{\pottr{\stp_2}}}{\mtm}{\mtm'}
implies 
\homj{\stp_1}{\fst \mtm}{\fst {\mtm'}}
and similarly for \snd{}.  Apply the congruence rule with $x.\fst{x}$.

\proofcase{$\ssusp{\stp}$}
Immediate by the induction hypothesis~(\ref{item:exp-weak}).

\proofcase{$\sarr{\stp_1}{\stp_2}$}
Using the induction hypothesis~(\ref{item:exp-weak}) on $\stp_2$, it
suffices to show that the assumption
\homj{\arr{\pottr{\stp_1}}{\comptr{\stp_2}}}{\mtm}{\mtm'} implies
\homj{\comptr{\stp_2}}{\app{\mtm}{\mtm_1}}{\app{\mtm'}{\mtm_1}}.  
Use the congruence rule with $f.\app{f}{\mtm_1}$.  

\proofcase{$\D$}
Because weakening is built into the definition, this is
immediate by transitivity.  
\end{enumerate}
\end{proof}

\begin{lemma}[Compositionality] \label{lem:lr-compositionality} \mbox{} 

\begin{enumerate}
\item 
\label{item:comp-exp}
\spfexpbound{\stm}{\mtm}{\valbound{-}{-}{\stp}}{\sspf} 
iff
\expbound{\stm}{\mtm}{\sarg{\sspf}{\stp}}.

\item 
\label{item:comp-val}
\spfvalbound{\sv}{\mtm}{\valbound{-}{-}{\stp}}{\sspf} 
iff
\valbound{\sv}{\mtm}{\sarg{\sspf}{\stp}}.
\end{enumerate}
\end{lemma}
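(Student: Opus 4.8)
The plan is to prove both clauses simultaneously by induction on the structure of the strictly positive functor $\sspf$, since the two clauses are interdefined: clause~(\ref{item:comp-exp}) at $\sspf$ reduces immediately to clause~(\ref{item:comp-val}) at the same $\sspf$, whereas clause~(\ref{item:comp-val}) at a functor of the form $\sarr{\tau'}{\sspf'}$ appeals to clause~(\ref{item:comp-exp}) at the strictly smaller functor $\sspf'$. This dependency is acyclic --- clause~(\ref{item:comp-exp})@$\sspf$ needs only clause~(\ref{item:comp-val})@$\sspf$, and clause~(\ref{item:comp-val}) at a compound functor needs the clauses only at proper subfunctors --- so the induction on $\sspf$ is well-founded. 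Throughout, I would invoke Lemma~\ref{lem:compositionality} to confirm that the complexity term carries the matching type $\pottr{\sarg{\sspf}{\stp}} = \marg{\pottr{\sspf}}{\pottr{\stp}}$ in every case.

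First I would dispatch clause~(\ref{item:comp-exp}) from clause~(\ref{item:comp-val}). Unfolding the definitions, both $\spfexpbound{\stm}{\mtm}{\valbound{-}{-}{\stp}}{\sspf}$ and $\expbound{\stm}{\mtm}{\sarg{\sspf}{\stp}}$ quantify over evaluations $\evalin e v n$ and impose the identical cost condition $n \le \cst{\mtm}$; they differ only in the value condition, which is $\spfvalbound{\sv}{\pot{\mtm}}{\valbound{-}{-}{\stp}}{\sspf}$ on one side and $\valbound{\sv}{\pot{\mtm}}{\sarg{\sspf}{\stp}}$ on the other. These are equivalent precisely by clause~(\ref{item:comp-val}) at $\sspf$, so clause~(\ref{item:comp-exp}) follows with no further work.

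The substance is therefore clause~(\ref{item:comp-val}), which I would prove by case analysis on $\sspf$. \proofcase{$\reccall$} Both sides unfold to $\valbound{\sv}{\mtm}{\stp}$, since the functor relation at $\reccall$ is just $R(\sv,\mtm)$ and $\sarg{\reccall}{\stp} = \stp$. \proofcase{$\tau'$ (with $t$ not free)} Both sides unfold to $\valbound{\sv}{\mtm}{\tau'}$, using $\sarg{\tau'}{\stp} = \tau'$. \proofcase{$\sprd{\sspf_0}{\sspf_1}$} The source value is a pair $\spair{\sv_0}{\sv_1}$ by canonicity; both definitions decompose through $\fst{\mtm}$ and $\snd{\mtm}$, and the two resulting conjuncts match by the induction hypothesis for clause~(\ref{item:comp-val}) at $\sspf_0$ and $\sspf_1$ (using $\sarg{\sprd{\sspf_0}{\sspf_1}}{\stp} = \sprd{\sarg{\sspf_0}{\stp}}{\sarg{\sspf_1}{\stp}}$). \proofcase{$\sarr{\tau'}{\sspf'}$} The source value is a $\lambda$-abstraction $\slam{x}{\stm_1}$ by canonicity. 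Both definitions quantify over a pair of related arguments, i.e. over $\sv'$ and $\mtm'$ with $\valbound{\sv'}{\mtm'}{\tau'}$, and then require the instantiated bodies to be related: the functor side demands $\spfexpbound{\subst{\stm_1}{\sv'}{x}}{\app{\mtm}{\mtm'}}{\valbound{-}{-}{\stp}}{\sspf'}$, while the type side demands $\expbound{\subst{\stm_1}{\sv'}{x}}{\app{\mtm}{\mtm'}}{\sarg{\sspf'}{\stp}}$ (using $\sarg{\sarr{\tau'}{\sspf'}}{\stp} = \sarr{\tau'}{\sarg{\sspf'}{\stp}}$). These coincide by clause~(\ref{item:comp-exp}) at the strictly smaller functor $\sspf'$, which the simultaneous induction has already supplied.

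The main thing to get right is the bookkeeping of the simultaneous induction rather than any genuine difficulty: one must confirm that the dependency between the two clauses bottoms out at proper subfunctors, as traced above, and that the type-translation compositionality lemma keeps the complexity-side types aligned in every case. The verifications within each case are mechanical unfoldings of the two families of definitions, so I do not expect any substantive obstacle beyond setting up the induction cleanly.
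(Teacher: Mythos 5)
Your proposal is correct and follows essentially the same route as the paper: clause~(\ref{item:comp-exp}) is reduced to clause~(\ref{item:comp-val}) by post-composition, and clause~(\ref{item:comp-val}) is proved by induction on $\sspf$ with the same four cases, the arrow case invoking clause~(\ref{item:comp-exp}) at the smaller functor. Your explicit accounting of the well-foundedness of the mutual dependency is a slightly more careful presentation of the same induction.
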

\begin{proof}
\begin{enumerate}
\item Post-compose with part (\ref{item:comp-val}).

\item By induction on $\sspf$:

\proofcase{$\phi=\reccall$}
$\sarg{\reccall}{\stp} = \stp$, so we need to show
that \spfvalbound{\sv}{\mtm}{\valbound{-}{-}{\stp}}{\reccall} iff
\valbound{\sv}{\mtm}{\stp}, which is true by definition.

\proofcase{$t$ not free in $\phi$}
We need to show 
\spfvalbound{\sv}{\mtm}{\valbound{-}{-}{\stp}}{\tau} iff
\valbound{\sv}{\mtm}{\tau}, which is true by definition.

\proofcase{$\phi = \phi_0\cross\phi_1$} \\
\begin{tabular}{rcl}
$\spfvalbound{\sv}{\mtm}{\valbound{-}{-}{\stp}}{\sprd{\sspf_0}{\sspf_1}}$
& iff & $\sv=\spair{\sv_0}{\sv_1}$
where 
\spfvalbound{\sv_0}{\fst \mtm}{\valbound{-}{-}{\stp}}{\sspf} and \\
& & \spfvalbound{\sv_1}{\snd \mtm}{\valbound{-}{-}{\stp}}{\sspf'} (by
definition) \\
& iff & 
\valbound{\sv_0}{\fst \mtm}{\sarg{\sspf}{\stp}}
  and \valbound{\sv_1}{\snd \mtm}{\sarg{\sspf'}{\stp}} (by IH) \\
& iff &
\valbound{\sv}{\mtm}{\sprd{\sarg{\sspf}{\stp}}{\sarg{\sspf'}{\stp}}}
(by definition)   \\
& iff &
\valbound{\sv}{\mtm}{\sarg{\sprd{\sspf}{\sspf'}}{\stp}}
(by definition).   \\
\end{tabular}

\proofcase{$\phi = \sarr{\stp}{\sspf_0}$} \\
\begin{tabular}{rcl}
\spfvalbound{\sv}{\mtm_1}{\valbound{-}{-}{\stp}}{\sarr{\stp}{\sspf_0}} 
& iff & $\sv$ is {\slam{x}{\stm_1}}
  where 
  for all \valbound{\sv}{\mtm}{\stp}, \\
& & \spfexpbound{\subst{\stm_1}{\sv}{x}}{(\app{\mtm_1}{\mtm})}{\valbound{-}{-}{\stp}}{\sspf_0}
(by definition) \\
& iff  &
  for all \valbound{\sv}{\mtm}{\tau}, 
  \expbound{\subst{\stm_1}{\sv}{x}}{(\app{\mtm_1}{\mtm})}{\sarg{\sspf_0}{\stp}}
(by IH~(\ref{item:comp-exp})) \\
& iff &
\valbound{\slam{x}{\stm_1}}{\mtm_1}{\sarg{\sarr{\stp}{\sspf_0}}{\stp}}
(by definition).  \\
& iff &
\valbound{\sv}{\mtm_1}{\sarg{(\sarr{\stp}{\sspf_0})}{\stp}}
(by definition).  
\end{tabular}
\end{enumerate}
\end{proof}

\begin{lemma}
\label{lem:vbound-build}
If $\valbound{\sv_i}{\mtm_i}{\stp_i}$ for $i=0, 1$, then
$\valbound{\spair{\sv_0}{\sv_1}}{\cpair{\mtm_0}{\mtm_1}}{\sprod{\stp_0}{\stp_1}}$.
\end{lemma}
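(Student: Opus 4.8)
The plan is to unfold the definition of the value-bounding relation at product type and reduce the goal to two subgoals, one for each projection, which are then dispatched by combining the step rule for projections with Weakening. By clause~(2) of Definition~\ref{def:bounding-reln}, the conclusion $\valbound{\spair{\sv_0}{\sv_1}}{\cpair{\mtm_0}{\mtm_1}}{\sprod{\stp_0}{\stp_1}}$ holds exactly when $\valbound{\sv_0}{\fst\cpair{\mtm_0}{\mtm_1}}{\stp_0}$ and $\valbound{\sv_1}{\snd\cpair{\mtm_0}{\mtm_1}}{\stp_1}$. So the only gap between the hypotheses $\valbound{\sv_i}{\mtm_i}{\stp_i}$ and these subgoals is that the bounding term must be $\cproj_i\cpair{\mtm_0}{\mtm_1}$ rather than $\mtm_i$ itself. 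Bridging this gap is precisely what the preorder rules are designed to allow.

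Concretely, I would invoke the step rule $\lejudgeMP{\mtp_i}{\mtm_i}{\cproj_i\cpair{\mtm_0}{\mtm_1}}$ from Figure~\ref{fig:monotonic_type_theory}, which (at the appropriate potential type) gives $\homj{\pottr{\stp_i}}{\mtm_i}{\cproj_i\cpair{\mtm_0}{\mtm_1}}$; intuitively, the projection of the pair is a $\beta$-redex that is $\ge$ its reduct $\mtm_i$, so $\cproj_i\cpair{\mtm_0}{\mtm_1}$ is a legitimate weakening of $\mtm_i$. Applying Weakening (Lemma~\ref{lem:weakening}, clause~(\ref{item:val-weak})) to $\valbound{\sv_i}{\mtm_i}{\stp_i}$ along this inequality then yields $\valbound{\sv_i}{\cproj_i\cpair{\mtm_0}{\mtm_1}}{\stp_i}$ for $i = 0, 1$. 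Since $\cproj_0$ and $\cproj_1$ are just $\fst$ and $\snd$, these are exactly the two subgoals above, and reassembling them through the product clause of the definition gives the conclusion.

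There is no genuine obstacle here: the result is a routine structural building-block that simply re-expresses the hypotheses through the projection step rule. The only point requiring care is to check that the type annotations line up, so that the instance of the step rule and of Weakening are both well-typed at $\pottr{\stp_i}$; the type premises elided from Figure~\ref{fig:monotonic_type_theory} supply this, since $\mtm_i \oftype \pottr{\stp_i}$ forces $\cpair{\mtm_0}{\mtm_1} \oftype \pottr{\stp_0}\cross\pottr{\stp_1} = \pottr{\sprod{\stp_0}{\stp_1}}$. This lemma will be used downstream as the canonical way to construct a product bound from component bounds, mirroring on the value side the pairing behavior already present in the translation of $\spair{e_0}{e_1}$.
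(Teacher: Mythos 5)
Your proof is correct and matches the paper's argument exactly: unfold the product clause of the bounding relation, then bridge from $\mtm_i$ to $\cproj_i\cpair{\mtm_0}{\mtm_1}$ via the step rule for pairs together with Weakening (Lemma~\ref{lem:weakening}).
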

\begin{proof}
We need to show that 
$\valbound{\sv_i}{\mproj_i\mpair{\mtm_0}{\mtm_1}}{\sprod{\stp_0}{\stp_1}}$.
By the step rule for pairs we have that
$\mtm_i\mle \mproj_i\mpair{\mtm_0}{\mtm_1}$, and so by weakening
it suffices to show that $\valbound{\sv_i}{\mtm_i}{\stp_i}$, which is
given.
\end{proof}

\subsection{The Fundamental Theorem}

First we state two lemmas which say that, when applied to related
arguments, source-language $\smapkw$ is bounded by complexity-language
$\cmapkw$, and that source-language $\sreckw$ is bounded by
complexity-language $\creckw$.

\begin{lemma}[Map] \label{lem:lr-map}
Suppose:
\begin{enumerate}
\item \label{item:lr-map-styping}
$\typejudge{\tmoftype x {\tau_0}}{v_1}{\tau_1}$ and
$\typejudge\emptyset{v_0}{\subst*{\phi}{\tau_0}}$;
\item \label{item:lr-map-ctyping}
$\typejudge{x\oftype\typot{\tau_0}}{E_1}{\typot{\tau_1}}$ and
$\typejudge\emptyset{E_0}{\subst*{\typot{\phi}}{\typot{\tau_0}}}$;
\item \label{item:lr-map-v_0-bound}
$\isder{\mathcal E}{\spfvalbound{v_0}{E_0}{\valbound--{\tau_0}}{\phi}}$;
\item \label{item:lr-map-subders}
Whenever $\mathcal E'$ is a subderivation of $\mathcal E$ such that
$\isder{\mathcal E'}{\valbound{v_0'}{E_0'}{\tau_0}}$, 
$\valbound{\subst{v_1}{v_0'}{x}}{\subst{E_1}{E_0'}{x}}{\tau_0}$; and
\item \label{item:lr-map-eval}
$\evalin{\smap\phi{\bind x {v_1}}{v_0}}{v}{n}$.
\end{enumerate}
Then $n = 0$
and $\valbound{v}{\mmap{\typot\phi}{\bind x E_1}{E_0}}{\subst*\phi{\tau_0}}$.
%
\footnote{We could have said 
\expbound{\smap{\sspf}{x.\sv_1}{\sv_0}}{\pair{0}{\mmap{\pottr{\sspf}}{x.\mtm_1}{\mtm_0}}}{\sarg{\sspf}{\stp_0}}
but this version of the lemma avoids needing the symmetric copy of the step rule for pairs.}
\end{lemma}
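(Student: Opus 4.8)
The plan is to induct on the strictly positive functor $\phi$, following the common recursion structure shared by the $\smapkw$ operational rules (Figure~\ref{fig:source_semantics}) and the defining clauses of the $\mmap$ macro. In each case I first invert hypothesis~(5), the evaluation $\evalin{\smap\phi{\bind x{v_1}}{v_0}}{v}{n}$, using the unique map rule whose conclusion matches the head of $\phi$; this pins down $v$ and $n$. In parallel I unfold $\mmap{\typot\phi}{\bind x{E_1}}{E_0}$ by the matching clause of the macro. The totality claim $n=0$ then falls out of every case: the $t$, constant, and function rules assign cost $0$ directly, while the product rule gives $n=n_0+n_1$, which the induction hypotheses force to $0$.

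The base and product cases are routine. For $\phi=t$ the map rule gives $v=\subst{v_1}{v_0}{x}$ and $\mmap{t}{\bind x{E_1}}{E_0}=\subst{E_1}{E_0}{x}$; hypothesis~(3) at $t$ unfolds to $\valbound{v_0}{E_0}{\tau_0}$, witnessed by $\mathcal E$ itself, so hypothesis~(4) applied to this subderivation yields $\valbound{\subst{v_1}{v_0}{x}}{\subst{E_1}{E_0}{x}}{\tau_1}$, as required. For a constant functor $\tau$ we have $v=v_0$ and $\mmap{\typot\tau}{\bind x{E_1}}{E_0}=E_0$, and hypothesis~(3) at $\tau$ is literally $\valbound{v_0}{E_0}{\tau}$. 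For $\phi=\phi_0\cross\phi_1$, inversion gives $v_0=\spair{a_0}{a_1}$, $v=\spair{b_0}{b_1}$, and evaluations of the two component maps, while hypothesis~(3) splits into $\spfvalbound{a_0}{\fst E_0}{\valbound{-}{-}{\tau_0}}{\phi_0}$ and $\spfvalbound{a_1}{\snd E_0}{\valbound{-}{-}{\tau_0}}{\phi_1}$ via subderivations $\mathcal E_0,\mathcal E_1$ of $\mathcal E$. I then apply the induction hypothesis to each $\phi_i$; its subderivation hypothesis~(4) transfers because any subderivation of $\mathcal E_i$ is a subderivation of $\mathcal E$ (typing premises propagate by inversion, preservation, and Lemma~\ref{lem:compositionality}). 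This gives $n_i=0$ and bounds for $b_0,b_1$, which I recombine with Lemma~\ref{lem:vbound-build} after unfolding $\mmap{\typot{\phi_0\cross\phi_1}}{\bind x{E_1}}{E_0}$ into a pair.

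The function case is the crux. Here $v_0=\lambda y.e$ and the map rule gives, at cost $0$, $v=\lambda y.\sletkw(e,\bind z{\smap\phi{\bind x{v_1}}{z}})$. To establish the arrow clause of $\valbound{v}{\mmap{\typot{\tau\to\phi}}{\bind x{E_1}}{E_0}}{\subst*{\tau\to\phi}{\tau_1}}$ I fix a related pair $\valbound{v'}{\mtm'}{\tau}$ and must bound $\sletkw(\subst{e}{v'}{y},\bind z{\smap\phi{\bind x{v_1}}{z}})$ against $(\mmap{\typot{\tau\to\phi}}{\bind x{E_1}}{E_0})\,\mtm'$. Using the complexity $\beta$ step rule (together with $\mmap{\C}{\cdot}{\cdot}$ being the identity and the product clause of the macro), this redex reduces to $\cpair{\fst(E_0\,\mtm')}{\mmap{\typot\phi}{\bind x{E_1}}{\snd(E_0\,\mtm')}}$, which is therefore $\mle$ the redex, so by Weakening (Lemma~\ref{lem:weakening}) it suffices to bound the let against this reduct. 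Now I use the \emph{expression} clause of hypothesis~(3): instantiating the arrow-functor relation at $\valbound{v'}{\mtm'}{\tau}$ yields a subderivation of $\mathcal E$ proving $\spfexpbound{\subst{e}{v'}{y}}{E_0\,\mtm'}{\valbound{-}{-}{\tau_0}}{\phi}$; evaluating the let's first premise $\evalin{\subst{e}{v'}{y}}{w}{m}$ against this gives $m\le\fst(E_0\,\mtm')$ together with a subderivation $\mathcal E'$ of $\mathcal E$ proving $\spfvalbound{w}{\snd(E_0\,\mtm')}{\valbound{-}{-}{\tau_0}}{\phi}$. I then apply the induction hypothesis at $\phi$ to the inner map $\smap\phi{\bind x{v_1}}{w}$, using $\mathcal E'$ for hypothesis~(3); its subderivation hypothesis~(4) holds because subderivations of $\mathcal E'$ are subderivations of $\mathcal E$. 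This yields inner cost $0$ and $\valbound{u}{\mmap{\typot\phi}{\bind x{E_1}}{\snd(E_0\,\mtm')}}{\subst*\phi{\tau_1}}$ for the let's value $u$. Hence the let costs $m+0\le\fst(E_0\,\mtm')$ and its value is suitably bounded; since $\cst$ and $\pot$ of the pair dominate its two components by the product step rule, one more appeal to Weakening discharges the goal.

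The delicate point is entirely in the function case, and it is exactly the reason the lemma carries the subderivation hypothesis~(4) rather than a plain ``the bodies send related inputs to related outputs'' assumption: the recursive application at $\phi$ is to the value $w$ produced \emph{during} evaluation of the function body, not to $v_0$, so I cannot reuse a fixed bound on $\subst{v_1}{v_0}{x}$. Instead I must know that hypothesis~(4) is available for whatever sub-value $w$ arises, and the only uniform way to guarantee this is the bookkeeping that $\mathcal E'$, the derivation witnessing $w$'s bound, is a subderivation of the ambient $\mathcal E$. Getting this subderivation discipline right, and correctly threading the complexity-language $\beta$- and projection-redexes through Weakening, is the whole content of the proof; the remaining cases are immediate from the definitions.
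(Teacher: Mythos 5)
Your proposal is correct and follows essentially the same route as the paper's proof: induction on $\phi$, with the product case recombined via Lemma~\ref{lem:vbound-build} and the function case handled by instantiating the arrow clause of hypothesis~(3) at the given related pair, extracting a subderivation for the intermediate value $w$, and invoking the induction hypothesis on the inner map before closing with weakening and $\beta$ for pairs. The only cosmetic difference is that you establish $n=0$ case by case where the paper cites Lemma~\ref{lem:map_no_cost}, which amounts to the same argument.
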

\begin{proof}


The proof is by induction on~$\sspf$.  Lemma~\ref{lem:map_no_cost} shows
that $n=0$.


\proofcase{$\phi = \reccall$}
Then $\evalin{\smap\phi{\bind x {v_1}}{v_0}}{v}{n}$ implies that
$\evalin{\subst{v_1}{v_0}{x}}{v}{n}$, and
$\mmap\phi{\bind x {E_1}}{E_0} = \subst{E_1}{E_0}{x}$.
By~(\ref{item:lr-map-v_0-bound}),
$\spfvalbound{v_0}{E_0}{\valbound--{\tau_0}}{\reccall}$, and so by definition
$\valbound{v_0}{E_0}{\tau_0}$.  Hence
by~(\ref{item:lr-map-subders}),
$v = \valbound{\subst{v_1}{v_0}{x}}{\subst{E_1}{E_0}{x}}{\tau_1}$.

%

\proofcase{$\phi = \tau$ ($\reccall\notin\fv\tau$)}
This follows directly from the assumptions and definitions.

\proofcase{$\phi = \phi_0\cross\phi_1$}
Then $v_0=\spair{v_{00}}{v_{01}}$
and by inversion we have
\begin{prooftree}
  \AXC{$\evalin{\smapkw^{\phi_0}(\bind x {v_1}, v_{00})}{w_0}{n_0}$}
  \AXC{$\evalin{\smapkw^{\phi_1}(\bind x {v_1}, v_{01})}{w_1}{n_1}$}
\ndBIC{$\evalin{\smapkw^\phi(\bind x {v_1}, \spair{v_{00}}{v_{01}})}{\spair{w_0'}{w_1'}}{n_0+n_1}$}
\end{prooftree}
We also have
$\mathcal E$-subderivations
$\isder{\mathcal E_{0i}}{\spfvalbound{v_{0i}}{\pi_i(E_{0})}{\valbound--{\tau_0}}{\phi_i}}$.
Any subderivation of $\mathcal E_{0i}$ is a subderivation of~$\mathcal E$,
and so the induction hypothesis applies to~$v_{0i}$ and $\cproj_iE_{0}$, from
which we conclude that
$\valbound{w_i}{\mmap{\typot{\phi_i}}{\bind x {E_1}}{\cproj_iE_{0}}}{\subst*{\phi_i}{\tau_0}}$.
Thus we have that
\begin{align*}
v = \spair{w_0}{w_1}
& \vbounded \pair{\mmap{\typot{\phi_0}}{\bind x {E_1}}{\cproj_0E_{0}}}{\mmap{\typot{\phi_1}}{\bind x {E_1}}{\cproj_1E_{0}}} & & \text{(Lemma~\ref{lem:vbound-build})} \\
& = \mmap{\typot{\phi_0\cross\phi_1}}{\bind x {E_1}}{\mtm_0} \\
& = \mmap{\typot\phi}{\bind x {E_1}}{E_0}.
\end{align*}

\proofcase{$\phi = \tau\to\phi_0$}
Then $v_0 = \lambda y.e_0$ and $\mathcal E$ proves that
for all $\valbound {v'}{E'}{\tau}$,
$\spfexpbound{\subst{e_0}{v'}{y}}{E_0(E')}{\valbound--{\tau_0}}{\phi_0}$.
Since $v_0 = \lambda y.e_0$, 
$v = \lambda y.\slet{e_0}z{\smap{\phi}{\bind x {v_1}}{z}}$, and
so we must show that
$\valbound{\lambda y.\slet{e_0}z{\smap{\phi}{\bind x {v_1}}{z}}}{\mmap{\typot{\tau\to\phi_0}}{\bind x {E_1}}{E_0}}{\tau\to\subst*{\phi_0}{\tau_0}}$.
To do so, suppose $\valbound w F \tau$; we must show that
\[
\tag{*}
\expbound{\slet{\subst{e_0}{w}{y}}z{\smap{\phi}{\bind x {v_1}}{z}}}
{\mmap{\trans{\phi_0}}{\bind x {E_1}}{E_0(F)}}{\subst*{\phi_0}{\tau_0}}.
\]

Suppose
\begin{prooftree}
  \AXC{$\evalin{\subst{e_0}{w}{y}}{w_0}{n_0}$}
  \AXC{$\evalin{\smap{\phi_0}{\bind x {v_1}}{w_0}}{v'}{n_1}$}
\ndBIC{$\evalin{\slet{\subst{e_0}{w}{y}}z{\smap{\phi_0}{\bind x {v_1}}{z}}}{v'}{n_0+n_1}$}
\end{prooftree}
Since $\valbound w F\tau$, we have that $\mathcal E$ derives
$\spfexpbound{\subst{e_0}{w}{y}}{E_0(F)}{\valbound--{\tau_0}}{\phi_0}$ and
hence we have a subderivation~$\mathcal E_0$ of~$\mathcal E$ such that
$\isder{\mathcal E_0}
{\spfvalbound{w_0}{\pot{(E_0(F))}}{\valbound--{\tau_0}}{\phi_0}}$.
We now verify that (\ref{item:lr-map-subders}) holds for~$\mathcal E_0$
so that we can apply the induction hypothesis to
to $\smap\phi{\bind x {v_1}}{w_0}$.
So suppose that $\mathcal E_0'$ is a subderivation of $\mathcal E_0$
such that $\isder{\mathcal E_0'}
{\valbound{w_0'}{F_0'}{\tau_0}}$.  We need to show that
$\valbound{\subst{v_1}{w_0'}{x}}{\subst{E_1}{F_0'}{x}}{\tau_0}$, and to do
so it suffices to note that $\mathcal E_0'$ is a subderivation
of $\mathcal E_0$, which in turn is a subderivation
of~$\mathcal E$.%

We can now apply the induction hypothesis to conclude that $n_1=0$ and so:
\begin{align*}
n_0+n_1 = n_0 \leq \cst{(E_0\,F)}=
&\cst{(\mmap{\trans\phi}{\bind x {E_1}}{E_0\,F})} \\
{v'}\vbounded_{\subst*\phi{\tau_0}}{\mmap{\typot\phi}{\bind x {E_1}}{\pot{(E_0\,F)}}} = &\pot{(\mmap{\trans\phi}{\bind x {E_1}}{E_0\,F})}.
\end{align*}
Using $\beta$ for pairs,
these are the two conditions that must be verified to show~(*), so this
completes the proof.
\end{proof}

\begin{lemma}[Recursor]
\label{lem:recursor}
Fix a datatype declaration
$\sdatadecl\D\sC\phi$.
If \valbound{\sv}{\mtm}{\D}
and for all~$\sC$, 
$\expbound{e_\sC}{E_\sC}{\subst*{\phi_\sC}{\D\cross\ssusp\tau}}$,
then 
\expbound{\srec*{\sv}{C}{x}{}{\stm_C}}{\prec{\mtm}{C}{x}{}{\costone
    \costplusfst \mtm_C}}{}
\end{lemma}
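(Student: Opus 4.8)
The plan is to argue by induction on the derivation $\mathcal D$ of $\valbound{\sv}{\mtm}{\D}$. For datatypes the value relation is generated by an inner induction on the value, so this induction is well founded along the subterm ordering, and the induction hypothesis will supply exactly the recursor lemma at the strict substructures of~$\sv$. First I would invert~$\mathcal D$: since $\sv$ is a value of datatype~$\D$, $\mathcal D$ ends with the datatype clause, so $\sv = \sdcon{\sC}{\sv_0}$ for some constructor~$\sC$, and there are a complexity term~$\mtm'$, a subderivation of $\spfvalbound{\sv_0}{\mtm'}{\valbound{-}{-}{\D}}{\phi_\sC}$, and a proof of $\homj{\D}{\dcon{\sC}{\mtm'}}{\mtm}$.

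To prove $\expbound{\srec{\sv}{\sC}{\bind x {e_\sC}}}{\crec{\mtm}{\sC}{\bind x {\costpluscpy 1 {E_\sC}}}}{\tau}$, I assume $\evalin{\srec{\sv}{\sC}{\bind x {e_\sC}}}{w}{n}$ and invert the recursor rule of the operational semantics. Because $\sv = \sdcon{\sC}{\sv_0}$ is already a value, Value Evaluation (Lemma~\ref{lem:value-eval-inv}) forces the scrutinee to cost~$0$ and identifies the constructor and its argument; the inserted functorial map evaluates at cost~$0$ by Totality of $\smapkw$ (Lemma~\ref{lem:map_no_cost}), say to~$v_1$; and the selected branch $\subst{e_\sC}{v_1}{x}$ evaluates to~$w$ at some cost~$n_2$. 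Hence $n = 1 + n_2$.

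The central step is to apply the Map lemma (Lemma~\ref{lem:lr-map}) to this source map together with its complexity counterpart $M := \cmap{\typot{\phi_\sC}}{\bind y{\cpair y{\crec{y}{\sC}{\bind x{\costpluscpy 1 {E_\sC}}}}}}{\mtm'}$. Hypothesis~(3) of that lemma is the functorial bound on~$\sv_0$ obtained by inversion, and the work is to discharge the subderivation hypothesis~(4). So let $\mathcal E'$ be a subderivation witnessing $\valbound{v_0'}{E_0'}{\D}$ (a recursive substructure of~$\sv_0$); I must show that the two map bodies, instantiated at $v_0'$ and~$E_0'$, are related at $\D\cross\ssusp\tau$. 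By Lemma~\ref{lem:vbound-build} this reduces to the two components: the first component bound is the given $\valbound{v_0'}{E_0'}{\D}$, and the second, by the suspension clause, is $\expbound{\srec{v_0'}{\sC}{\bind x {e_\sC}}}{\crec{E_0'}{\sC}{\bind x{\costpluscpy 1 {E_\sC}}}}{\tau}$ --- which is precisely the statement being proved, now for the strictly smaller derivation~$\mathcal E'$, so it follows from the induction hypothesis. The Map lemma then yields $\valbound{v_1}{M}{\subst*{\phi_\sC}{\D\cross\ssusp\tau}}$.

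Finally I would assemble the bound. Since $v_1$ is bounded by~$M$ at the type of the branch variable~$x$, instantiating the branch hypothesis $\expbound{e_\sC}{E_\sC}{\subst*{\phi_\sC}{\D\cross\ssusp\tau}}$ at the related substitution $v_1/x,\,M/x$ gives $\expbound{\subst{e_\sC}{v_1}{x}}{\subst{E_\sC}{M}{x}}{\tau}$, whence $n_2 \le \cst{(\subst{E_\sC}{M}{x})}$ and $w$ is value-bounded by $\pot{(\subst{E_\sC}{M}{x})}$. On the complexity side, the step rule for recursors gives $\homj{}{\subst{(\costpluscpy 1 {E_\sC})}{M}{x}}{\crec{\dcon{\sC}{\mtm'}}{\sC}{\bind x{\costpluscpy 1 {E_\sC}}}}$, and combining $\homj{\D}{\dcon{\sC}{\mtm'}}{\mtm}$ with congruence for the recursor scrutinee and transitivity upgrades the right-hand side to $\crec{\mtm}{\sC}{\bind x{\costpluscpy 1 {E_\sC}}}$. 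Projecting with the congruence contexts $\fst{\hole}$ and $\snd{\hole}$, and using that substitution commutes with the added $\costone$, yields $n = 1 + n_2 \le \cst{(\crec{\mtm}{\sC}{\bind x{\costpluscpy 1 {E_\sC}}})}$ and, via Weakening (Lemma~\ref{lem:weakening}), that $w$ is bounded by $\pot{(\crec{\mtm}{\sC}{\bind x{\costpluscpy 1 {E_\sC}}})}$ --- the two conditions for the desired bound. I expect the main obstacle to be the derivation bookkeeping behind hypothesis~(4): one must be certain that every $v_0'$ arising there really is a subderivation of~$\mathcal D$, so that invoking the induction hypothesis (the recursor lemma itself, at a smaller value) is legitimate; this is exactly where strict positivity of~$\phi_\sC$ and the inner induction on datatype values are essential.
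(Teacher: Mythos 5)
Your proposal is correct and follows essentially the same route as the paper's proof: induction on the derivation of $\valbound{\sv}{\mtm}{\D}$, inversion of the single datatype clause and of the operational recursor rule, an application of the Map lemma whose subderivation hypothesis is discharged componentwise (value bound at $\D$ plus the suspension clause reducing to the induction hypothesis at the strictly smaller subderivation), and a final assembly via the branch hypothesis, the step rule for the recursor, congruence, transitivity, weakening, and $\beta$ for pairs. The one point you flag as a worry --- that every $v_0'$ in the Map lemma's hypothesis really arises as a subderivation of the original derivation --- is exactly the bookkeeping the paper carries out, so nothing is missing.
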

\begin{proof}

By induction on \valbound{\sv}{\mtm}{\D}. The only case is
\[
\infer{\valbound{\sdcon{\DC}{\sv'}}{\mtm}{\D}}
      {\sdconbind{\DC}{\sspf}{\D} \in \ssig &
       \spfvalbound{\sv'}{\mtm'}{\valbound{-}{-}{\D}}{\sspf} &
       \homj{\D}{\dcon{\DC}{\mtm'}}{\mtm}}
\tag{$\dagger$}
\]

Assume $\srec*{\dcon{\DC}{\sv'}}{C}{x}{}{\stm_C}$ evaluates.  Then
by inversion and Lemma~\ref{lem:value-eval-inv} it was by
\[
    \AXC{$\seval{\dcon{\DC}{\sv'}}{\costzero}{\sdcon{\DC}{\sv'}}$}
    \AXC{$\seval{\smap{\sspf}{y.\spair{y}{\sdelay{\srec*{y}{C}{x}{}{\stm_C}}}}{\sv'}}{\costzero}{\sv''}$}
    \AXC{$\seval{\subst{\stm_\dsd{\DC}}{\sv''}{x}}{\sc_2}{\sv}$}
\ndTIC{$\seval{\srec*{\dcon{\DC}{\sv'}}{C}{x}{}{\stm_C}}{\costzero \costplus \costone \costplus \sc_2}{\sv}$}
\DisplayProof
\tag{*}
\]
Using the premise that $\homj{\D}{\sC E'}{E}$ from~($\dagger$), 
$\beta$ for datatypes, and
congruence, we note that
\begin{align*}
\creckw(\mtm, \cC\mapsto\bind x{\costone\costplusfst\mtm_{\cC}})
&\mge \prec{\sC\,E'}{\sC}{x}{}{\costone\costplusfst \mtm_C} \\
&\mge {\costone \costplusfst \subst{\mtm_{\DC}}{\mmap{{\pottr{\sspf}}}{y.\pair{y}{\prec{y}{C}{x}{}{\costone \costplusfst \mtm_C}}}{\mtm'}}{x}}
\end{align*}
Let us write~$E^*$ for
$\mmap{{\pottr{\sspf}}}{y.\pair{y}{\prec{y}{C}{x}{}{\costone \costplusfst \mtm_C}}}{\mtm'}$.
Thus by congruence, transitivity, weakening, and $\beta$ for pairs, it
suffices to show
\begin{align*}
{\costone \costplus \sc_2}&\mle{}{\costone \costplus \cst{\subst{\mtm_{\DC}}{E^*}{x}}}\\
{\sv}&\vbounded{\pot{(\subst{\mtm_{\DC}}{E^*}{x})}}{}
\end{align*}

By congruence for \costplus, for the first goal it suffices to show
$\homj{}{\sc_2}{\cst{\subst{\mtm_{\DC}}{E^*}{x}}}$.
Thus, if we can show 
$\expbound{{\subst{\stm_\dsd{\DC}}{\sv''}{x}}}{\subst{\mtm_{\DC}}{E^*}{x}}{}$,
then applying it to the third evaluation premise of~(*) gives the
result.  We can use our assumption 
that \expbound{\stm_\DC}{\mtm_\DC}{}, as long as we show 
$\valbound{\sv''}{E^*}{}$.
To do so, we use Lemma~\ref{lem:lr-map} applied to the second
evaluation premise of~(*) with
\[
\begin{aligned}
\sv_1 &= \sv' \\
\mtm_1 &= \mtm'
\end{aligned}
\quad
\begin{aligned}
\sv &= y.\spair{y}{\sdelay{\srec*{y}{C}{x}{}{\stm_C}}} \\
\mtm &= {y.\pair{y}{\prec{y}{C}{x}{}{\costone \costplusfst \mtm_C}}}
\end{aligned}
\]
We have $\mathcal{E} ::
\spfvalbound{\sv'}{\mtm'}{\valbound{-}{-}{\D}}{\sspf}$ from the second
premise of~($\dagger$).
Thus, to finish calling the theorem, we need to show that for all
$R$-position subderivations of $\mathcal{E}$ deriving
\valbound{\sv_1'}{\mtm_1'}{\D}, 
\[
{\spair{\sv_1'}{\sdelay{\srec*{\sv_1'}{C}{x}{}{\stm_C}}}}\vbounded_{\sprd{\D}{\ssusp{\stp}}} \\
{\pair{\mtm_1'}{\prec{\mtm_1'}{C}{x}{}{\costone \costplusfst \mtm_C}}}
\]
By definition of value bounding at product types, weakening and $\beta$
for pairs, it suffices to show
\begin{align*}
{{\sv_1'}}&\vbounded_\D{\mtm_1'} \\
{\sdelay{\srec*{\sv_1'}{C}{x}{}{\stm_C}}}&\vbounded_{\ssusp\tau}{\prec{\mtm_1'}{C}{x}{}{\costone \costplusfst \mtm_C}}
\end{align*}
The former we have, and for the latter by definition it suffices to show
\[
\expbound{\srec*{\sv_1'}{C}{x}{}{\stm_C}}{\prec{\mtm_1'}{C}{x}{}{\costone \costplusfst \mtm_C}}{\stp}
\]
Because \valbound{{\sv_1'}}{\mtm_1'}{\D} is an $R$-subderivation of
\spfvalbound{\sv'}{\mtm'}{\valbound{-}{-}{\D}}{\sspf}, and therefore a
strict subderivation of {\valbound{\sdcon{\DC}{\sv'}}{\mtm}{\D}}, we can
use the inductive hypothesis on it, which gives exactly what we needed
to show.  
\end{proof}

%
%
%
\begin{thm}[Bounding Theorem]
\label{thm:mon-bounding}
If \softps{\sctx}{}{\stm}{\stp}, then $\expbound e {\trans e} \tau$.
\end{thm}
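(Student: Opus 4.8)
The plan is to prove the Bounding Theorem as the fundamental theorem of the logical relation, by induction on the typing derivation of $e$. Unfolding the definition of the open relation, I fix related substitutions $\ssub\oftype\sctx$ and $\msub\oftype\mctx$ with $\subbound\ssub\msub\sctx$, and in each case establish the closed judgement $\expbound{\cl e\ssub}{\cl{\trans e}\msub}\tau$. Throughout, the bookkeeping that commutes the ambient substitution past the term-level substitutions created during evaluation is handled by Lemmas~\ref{lem:source-subst-composition} and~\ref{lem:mono-subst-composition}.

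The structural cases are routine. For a variable $x$, relatedness of the substitutions gives $\valbound{\ssub(x)}{\msub(x)}\tau$ directly; since $\ssub(x)$ is a value it evaluates to itself at cost~$0$ (Lemma~\ref{lem:value-eval-inv}) and $\trans x=\cpair 0 x$ has cost~$0$, so both clauses hold. For pairs, $\ssplitkw$, application, $\sdelaykw$/$\sforcekw$, and $\sletkw$, I invert the evaluation derivation, apply the induction hypotheses to the immediate subterms, and reassemble the cost and potential bounds, using the monoid laws and the $\beta$/step rules of the preorder (via Lemma~\ref{lem:weakening}) to reconcile the cost arithmetic with the translation. The $\lambda$-case unfolds the function clause of $\valbound{-}{-}{\sarr\sigma\tau}$: given related $\valbound{v_1}{E_1}\sigma$, I extend the substitution pair by $v_1/x$ and $E_1/x$ and invoke the induction hypothesis on the body. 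The constructor case $\sC\,e$ combines the induction hypothesis on $e$ with the inductive clause defining $\valbound{-}{-}\D$; its residual premise $\homj\D{\dcon\sC{\mtm'}}\mtm$ is discharged by reflexivity, with $\mtm'=\pot{\trans e}$ supplied through Lemma~\ref{lem:lr-compositionality}.

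The two substantial cases, the recursor and $\smapkw$, have had their real content factored out into Lemmas~\ref{lem:recursor} and~\ref{lem:lr-map}. For the recursor, I invert the evaluation so that $e$ reduces to a constructor value $\sC\,v_0$ at some cost $n_0$; the induction hypothesis on $e$ yields $n_0\le\cst{\trans e}$ and $\valbound{\sC\,v_0}{\pot{\trans e}}\D$, and the induction hypotheses on the branches supply the premises $\expbound{e_\sC}{\trans{e_\sC}}{\subst*{\phi_\sC}{\D\cross\ssusp\tau}}$ of Lemma~\ref{lem:recursor}. That lemma bounds the one-step recursion on the \emph{value} $\sC\,v_0$ (scrutinee cost~$0$, plus the recursor step and the branch), and the residual cost $n_0$ of reducing $e$ is absorbed into the leading summand $\cst{\trans e}$ by monotonicity of $+$ and weakening. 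For $\smapkw$, I apply Lemma~\ref{lem:lr-map} to the two value arguments; the one genuinely new obligation is its hypothesis~(4), the uniformity condition on subderivations, which I discharge by applying the induction hypothesis to the \emph{open} body $v_1$ under the substitution pair extended at $x$ by each related pair $\valbound{v_0'}{E_0'}{\tau_0}$ named by a subderivation of the value-bounding derivation for $v_0$.

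I expect the principal obstacle to lie exactly in this interface between the induction hypothesis and the auxiliary lemmas' side conditions: verifying hypothesis~(4) of the Map lemma, and threading the cost of the scrutinee through the recursor case so that the leading $\cst{\trans e}$ summand exactly covers it. Neither is conceptually deep given the preceding results, but both are where the substitution and cost bookkeeping are most delicate; compositionality, weakening, and the $\beta$/step rules of the preorder are precisely the tools that make these obligations discharge cleanly, which is the whole point of having isolated the preorder axioms in Figure~\ref{fig:monotonic_type_theory}.
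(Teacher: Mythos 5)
Your proposal matches the paper's proof in both structure and detail: induction on the typing derivation over closed instances of related substitutions, routine structural cases discharged by evaluation inversion plus weakening and the $\beta$/step and monoid rules, the constructor case closed by reflexivity of $\le$ together with Lemma~\ref{lem:lr-compositionality}, and the two hard cases delegated to Lemmas~\ref{lem:lr-map} and~\ref{lem:recursor} exactly as the paper does (including re-deriving the recursor evaluation on the value scrutinee via Lemma~\ref{lem:value-eval-inv} and absorbing $n_0$ into $\cst{\trans e}$, and discharging hypothesis~(4) of the Map lemma by the induction hypothesis on the open body). This is essentially the paper's argument.
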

\begin{proof}
The proof is by induction on the derivation of
$\typejudge\sctx\stm\tau$.  In each case we state the last line of the
derivation, taking as given the premises of the typing rules in
Figure~\ref{fig:source_lang}.

\proofcase{$\typejudge{\sctx,x\oftype\tau}{x}{\tau}$}
By definition \tsubst{\comptr{x}}{\msub} = \pair{0}{x}.  
$\tsubst{x}{\ssub} = \sv$ and $\tsubst{\pair{\costzero}{x}}{\msub} =
\pair{\costzero}{\mtm}$ where by assumption \valbound{\sv}{\mtm}{\stp}.
We must show that \expbound{\sv}{\pair{\costzero}{\mtm}}{\tau}.  
Assume \seval{\sv}{\sc}{\sv'}.  Then by 
inversion (Lemma~\ref{lem:value-eval-inv})
\homj{}{\sc}{\costzero} and $\sv' = \sv$.  Thus, by transitivity and $\beta$
for pairs, 
{\homj{}{\sc}{\cst{\pair{\costzero}{\mtm}}}}
and by weakening and $\beta$ for pairs
\valbound{\sv}{\pot{{\pair{\costzero}{\mtm}}}}{}.

\proofcase{$\typejudge\sctx{\spair{\stm_0}{\stm_1}}{\stp_0\cross\stp_1}$}
%
Expanding the definitions, we need to show
\[
\expbound{\spair{\tsubst{\stm_0}{\ssub}}{\tsubst{\stm_1}{\ssub}}}
         {\pair{\cst*{{\mtm_0}} \costplus \cst*{{\mtm_1}}}{\pair{\pot*{{\mtm_0}}}{\pot*{{\mtm_1}}}}}{}
\]
where $\mtm_0 = \tsubst{\comptr{\stm_0}}{\msub}$
and $\mtm_1 = \tsubst{\comptr{\stm_1}}{\msub}$.  By the IH, 
\expbound{\tsubst{\stm_0}{\ssub}}{\mtm_0}{\stp_0}
and 
\expbound{\tsubst{\stm_1}{\ssub}}{\mtm_1}{\stp_1}.  

Suppose 
\[
\infer{\seval{\spair{\tsubst{\stm_0}{\ssub}}{\tsubst{\stm_1}{\ssub}}}{\sc_0 \costplus \sc_1}{\spair{\sv_0}{\sv_1}}}
{
\seval{\tsubst{\stm_0}{\ssub}}{\sc_0}{\sv_0} &
\seval{\tsubst{\stm_1}{\ssub}}{\sc_1}{\sv_1}
}
\]
By the IH we have that $\expbound{\cl{\stm_i}{\ssub}}{E_i}{}$ and hence
$n_i\mle\cst*{E_i}$ and $\valbound{\sv_i}{\pot*{E_i}}{}$ for $i=0,1$.
%
Thus we conclude that
\[
\sc_0 \costplus \sc_1 \le {\cst{{\mtm_0}}} \costplus {\cst{{\mtm_1}}}
\qquad
\spair{\sv_0}{\sv_1} \vbounded \pair{\pot*{E_0}}{\pot*{E_1}}
\]
and the result follows by weakening and $\beta$ for pairs.

\proofcase{$\typejudge\sctx{\ssplit{\stm_0}{x_0}{x_1}{\stm_1}}{\stp}$}
Expanding definitions, we need to show
\[
{\ssplit{\cl{\stm_0}{\ssub}}{x_0}{x_1}{\cl{\stm_1}{\ssub,x_0/x_0,x_1/x_1}}}\bounded \\
{\costpluscpy{\cst*{{E_0}}}{{E_1}}} =
\pair{\cst*{E_0} + \cst*{E_1}}{\pot*{E_1}}
\]
where $\mtm_0 = \tsubst{\comptr{\stm_0}}{\msub}$ and
$\mtm_1 = \cl{\trans{e_1}}{\msub,\fst{\pot*{E_1}}/x_0,\snd{\pot*{E_1}}/x_1}$.

Suppose
\[
\infer{\seval{\ssplit{\tsubst{\stm_0}{\ssub}}{x_0}{x_1}{\stm_1}}{\sc_0 \costplus \sc_1}{\sv}}
      {\seval{\tsubst{\stm_0}{\ssub}}{\sc_0}{\spair{\sv_0}{\sv_1}} &
        \seval{{\tsubst{\stm_1}{\ssub,\sv_0/x_0,\sv_1/x_1}}}{\sc_1}{\sv}}
\]
We apply the induction hypothesis as follows:
\begin{enumerate}
\item From $\expbound{\cl{\stm_0}{\ssub}}{\mtm_0}{}$:
\begin{enumerate}
\item $n_0\leq \cst*{E_0}$;
\item $\valbound{\spair{\sv_0}{\sv_1}}{\pot*{E_0}}{}$ and hence
$\valbound{\sv_i}{\sproj_i(\pot*{E_0})}{}$ for $i=0, 1$.
\end{enumerate}
\item From $\expbound{\stm_1}{\trans{e_1}}{}$, $\subbound\ssub\msub{}$, and
$\valbound{\sv_i}{\sproj_i(\pot*{E_0})}{}$ for $i=0, 1$,
\begin{enumerate}
\item $\subbound{\ssub,\sv_0/x_0,\sv_1/x_1}{\msub,\sproj_0(\pot*{E_0})/x_0,\sproj_1(\pot*{E_1})/x_1}{}$, and hence
\item $n_1\leq\cst*{E_1}$;
\item $\valbound{\sv}{\pot*{E_1}}{}$.
\end{enumerate}
\end{enumerate}
Thus we conclude that
\[
n_0 + n_1 \leq \cst*{E_0} \costplus \cst*{E_1}
\qquad
\sv \vbounded \pot*{E_1}
\]
and the result follows by monotoncity of~$\costplus$,
weakening, and $\beta$ for pairs.


\proofcase{$\typejudge\sctx{\lambda x.e}{\sigma\to\tau}$}
%
%
Expanding the definitions, 
\begin{align*}
\tsubst{(\slam{x}{e})}{\ssub} &= \slam{x}{\tsubst{e}{\ssub,x/x}} \\
\tsubst{\comptr{\slam{x}{\stm}}}{\msub} &= \pair{\costzero}{\ulam{x}{\tsubst{\comptr{e}}{\msub,x/x}}}
\end{align*}
\sloppypar Assume \slam{x}{\tsubst{\stm}{\ssub,x/x}} evaluates.
By inversion we have
$
\seval{\slam{x}{\tsubst{\stm}{\ssub,x/x}}}{\costzero}{\slam{x}{\stm}}
$.
Applying transitivity/weakening and $\beta$ for pairs we need to show that
\homj{}{\costzero}{\costzero} (trivial) and 
$
\valbound{\slam{x}{\tsubst{\stm}{\ssub,x/x}}}{\ulam{x}{\tsubst{\comptr{e}}{\msub,x/x}}}{\sarr{\sigma}{\stp}}
$.
Assume \valbound{\sv_1}{\mtm_1}{\stp_1}; we need to show
$
\expbound{\subst{\tsubst{\stm}{\ssub,x/x}}{\sv_1}{x}}{\app{(\ulam{x}{\tsubst{\comptr{e}}{\msub,x/x}})}{\mtm_1}}{\stp}
$.
By weakening, $\beta$ for functions, and Lemmas
~\ref{lem:source-subst-composition} and
~\ref{lem:mono-subst-composition}, it suffices to show
$
\expbound{\tsubst{\stm}{\ssub,\sv_1/x}}{\tsubst{\comptr{e}}{\msub,\mtm_1/x}}{\stp}
$.
Because \subbound{\ssub}{\msub}{} and \valbound{\sv_1}{\mtm_1}{\stp_1}, we
have \subbound{\ssub,\sv_1/x}{\msub,\mtm_1/x}{}, so the IH gives the
result.  

\proofcase{$\typejudge\sctx{\sapp{e_0}{e_1}}{\tau}$}
%
By definition, 
$
\tsubst{(\sapp{\stm_0}{\stm_1})}{\ssub} = 
\sapp{\tsubst{\stm_0}{\ssub}}{\tsubst{\stm_1}{\ssub}}
$
and 
$\cl{\trans{\sapp{e_0}{e_1}}}{\msub} =
\pair{\cst*{E_0} + \cst*{E_1} + \cst E}{\pot E}
$
where $\mtm_i = \tsubst{\comptr{\stm_i}}{\msub}$ for $i=0,1$
and $E = \app{\pot*{E_0}}{\pot*{E_1}}$.
Suppose that
\[
\infer{\seval{\sapp{\tsubst{\stm_0}{\ssub}}{\tsubst{\stm_1}{\ssub}}}{\sc_0 \costplus \sc_1 \costplus \sc}{\sv}}
      {\seval{\tsubst{\stm_0}{\ssub}}{\sc_0}{\ulam{x}{\stm_0'}} &
        \seval{\tsubst{\stm_1}{\ssub}}{\sc_1}{\sv_1} &
        \seval{\subst{\stm_0'}{\sv_1}{x}}{\sc}{\sv}}
\]
We have the following facts from the induction hypothesis:
\begin{inparaenum}[(1)]
\item From $\expbound{\cl{\stm_0}{\ssub}}{E_0}{}$:
\begin{inparaenum}[(a)]
\item $n_0\leq \cst*{E_0}$ and
\item $\valbound{\ulam x {\stm_0'}}{\pot*{E_0}}{}$;
\end{inparaenum}
\item From $\expbound{\cl{\stm_1}{\ssub}}{E_1}{}$:
\begin{inparaenum}[(a)]
\item $n_1\leq\cst*{E_1}$ and
\item $\valbound{\sv_1}{\pot*{E_1}}{}$;
\end{inparaenum}
\item From (1b) and (2b) and
the definition of $\vbounded$, $\expbound{\subst{\stm_0'}{\sv_1}{x}}E{}$, so
\begin{inparaenum}[(a)]
\item $n\leq \cst E$ and
\item $\valbound{\sv}{\pot E}{}$.
\end{inparaenum}
\end{inparaenum}
Thus we conclude
\[
n_0+n_1+n \mle \cst*{E_0} + \cst*{E_1} + \cst E
\qquad
v \vbounded \pot E
\]
and the result follows from weakening and $\beta$ for pairs.

\proofcase{$\typejudge\sctx{\sdelay\stm}{\ssusp\tau}$}
Expanding definitions, we need to show
$\expbound{\sdelay{\tsubst{\stm}{\ssub}}}{\pair{\costzero}{\tsubst{\comptr{\stm}}{\msub}}}{}$,
so suppose
$\seval{\sdelay{\tsubst{\stm}{\ssub}}}{\costzero}{\sdelay{\tsubst{\stm}{\ssub}}}$.
We have $\homj{}{\costzero}{\pot{\pair{\costzero}{\ldots}}}$ by $\beta$ 
for pairs.
For the potential goal, we must show that
\valbound{{\sdelay{\tsubst{\stm}{\ssub}}}}{{\tsubst{\comptr{\stm}}{\msub}}}{\ssusp{\tau}}.
By definition, this means showing
\expbound{\tsubst{\stm}{\ssub}}{{\tsubst{\comptr{\stm}}{\msub}}}{\tau},
which is exactly the IH.  The result follows from weakening and
$\beta$ for pairs.

\proofcase{$\typejudge\sctx{\sforce\stm}{\stp}$}
Expanding definitions, we need to show
$\expbound{\sforce{\tsubst{\stm}{\ssub}}}{\pair{\cst{\mtm} \costplus \cst*{\pot{\mtm}}}{\pot*{\pot{\mtm}}}}{\stp}$
where $\mtm = \tsubst{\comptr{\stm}}{\msub}$.  
Suppose
\[
\infer{\seval{\sforce{\tsubst{\stm}{\ssub}}}{\sc_0 \costplus \sc_1}{\sv}}
      {\seval{\cl\stm\ssub}{\sc_0}{\sdelay{\stm'}} &
        \seval{\stm'}{\sc_1}{\sv}}
\]
Since $\expbound{\cl\stm\ssub}{E}{}$, we have that
$n_0\leq \cst E$ and $\valbound{\sdelay{e'}}{\pot E}{}$.  From
the definition of $\vbounded$, 
$\expbound{e'}{\pot E}{}$, and hence
$n'\leq\cst*{\pot E}$ and
$\valbound v {\pot*{\pot E}}{}$.  The result follows from
monotonicity of~$\costplus$ and $\beta$ for pairs.

\proofcase{$\typejudge\sctx{\DC\stm}\D$}
We must show that $\expbound{\cl{\DC\stm}{\ssub}}{\pair{\cst\mtm}{\DC(\pot\mtm)}}{}$, where
$\mtm = \cl{\trans e}{\msub}$.  Suppose
\[
\infer{\seval{{\dcon{\DC}{\tsubst{\stm}{\ssub}}}}{\sc}{\dcon{\DC}{\sv}}}
      {\seval{\tsubst{\stm}{\ssub}}{\sc}{\sv}}.
\]
Since $\expbound{\cl e\ssub}{E}{}$,
$n\leq\cst E$ (satisfying the cost goal) and 
$\valbound v {\pot E}{\subst*\phi\D}$.
By Lemma~\ref{lem:lr-compositionality},
$\spfvalbound v {\pot\mtm}{\valbound--\D}{\phi}$.
Since $\DC(\pot\mtm)\leq\DC(\pot\mtm)$ by reflexivity, we have that
$\valbound{\DC v}{\DC(\pot\mtm)}{}$ by definition of $\vbounded$.

\proofcase{$\typejudge\sctx{\srec\stm\DC {\bind x {\stm_C}}}{\stp}$}
We need to show
\[
\expbound{\srec*{\tsubst{\stm}{\ssub}}{C}{x}{}{\tsubst{\stm_C}{\ssub,x/x}}}
         {\pair{\cst{\mtm} \costplus \cst*{{\mtm_r}}} {\pot*{{\mtm_r}}}}
         {}
\]
where $\mtm = \tsubst{\comptr{\stm}}{\msub}$ and
$\mtm_r = {\prec{\pot\mtm}{C}{x}{}{(\costone \costplusfst \tsubst{\comptr{\stm_C}}{\msub,x/x})}}$.
Suppose
\begin{prooftree}
      \AXC{$\evalin{\cl e\ssub}{\DC v_0}{n_0}$}
      \AXC{$\evalin{\smap{\phi_C}{\bind y{\spair{y}{\sdelay{\srec y\DC {\bind x{\cl{\stm_C}{\extend\ssub x x}}}}}}}{v_0}}{v_1}{0}$}
      \AXC{$\evalin{\cl{e_C}{\extend\ssub x {v_1}}}{v}{n_2}$}
\ndTIC{$\evalin{\srec{\cl\stm\ssub}\DC {\bind x {\cl{\stm_C}{\extend\ssub x x}}}}{v}{1+n_0+n_2}$}
\end{prooftree}

By the induction hypothesis $\expbound{\cl e\ssub}{E}{}$, so
$n_0\leq\cst E$ and $\valbound{\DC v_0}{\pot E}{}$.
By Lemma~\ref{lem:value-eval-inv} we can derive
\begin{prooftree}
      \AXC{$\evalin{\DC v_0}{\DC v_0}{_0}$}
      \AXC{$\evalin{\smap{\phi_C}{\bind y{\spair{y}{\sdelay{\srec y\DC {\bind x{\cl{\stm_C}{\extend\ssub x x}}}}}}}{v_0}}{v_1}{0}$}
      \AXC{$\evalin{\cl{e_C}{\extend\ssub x {v_1}}}{v}{n_2}$}
\ndTIC{$\evalin{\srec{\DC v_0}\DC {\bind x {\cl{\stm_C}{\extend\ssub x x}}}}{v}{1+n_2}$}
\end{prooftree}
So by Lemma~\ref{lem:recursor} we have that
$1+n_2\leq\cst*{E_r}$ and $\valbound v {\pot*{E_r}}{}$.  Putting these
together, we have what we needed to show:
\[
1 + n_0+n_2 \leq \cst E + \cst*{E_r}
\qquad
v \vbounded \pot*{E_r}
\]

\proofcase{$\typejudge\sctx{\smap\phi{\bind x {v_1}}{v_0}}{\subst*{\phi}{\tau_1}}$}
Because $\sv_1$ is a sub-syntactic-class of $\stm$, we can upcast it and apply
\comptr{\sv_1} to it, producing a complexity expression.  
We must show that
\[
{\smap{\sspf}{x.\tsubst{\sv_1}{\ssub,x/x}}{\tsubst{\sv_0}{\ssub}}}\bounded
{\pair{\costzero}{\mmap{\pottr{\sspf}}{x.\pot{\tsubst{\comptr{\sv_1}}{\msub,x/x}}}{\pot{\tsubst{\comptr{\sv_0}}{\msub}}}}},
\]
so suppose
\seval{{\smap{\sspf}{x.\tsubst{\sv_1}{\ssub,x/x}}{\tsubst{\sv_0}{\ssub}}}}{\sc}{\sv}.  
By transitivity/weakening with $\beta$ for pairs, it suffices to show:
\[
{\sc} \mle  {\costzero}
\qquad
{\sv} \vbounded {\mmap{\pottr{\sspf}}{\pot{\tsubst{\comptr{\sv_1}}{\msub,x/x}}}{\pot{\tsubst{\comptr{\sv_0}}{\msub}}}} {}
\tag{*}
\]
We will apply Lemma~\ref{lem:lr-map} with
\[
\begin{aligned}
\sv_0 &= {\tsubst{\sv_0}{\ssub}}\\
\mtm_0 &= {\pot{\tsubst{\comptr{\sv_0}}{\msub}}}\\
\end{aligned}
\qquad
\begin{aligned}
\sv_1 &= \tsubst{\sv_1}{\ssub,x/x} \\
\mtm_1 &= \pot{\tsubst{\comptr{\sv_1}}{\msub,x/x}}
\end{aligned}
\]

To establish condition~(\ref{item:lr-map-v_0-bound}) we apply the IH
to~$v_0$ to conclude that
$
\expbound{\tsubst{\sv_0}{\ssub}}{\tsubst{\comptr{\sv_0}}{\msub}}{\sarg{\sspf}{\stp_0}}
$.
Since {\tsubst{\sv_0}{\ssub}} is a value, by
Lemma~\ref{lem:value-eval-inv}, it evaluates to itself.  Therefore
$
\valbound{\tsubst{\sv_0}{\ssub}}{\pot{\tsubst{\comptr{\sv_0}}{\msub}}}{\sarg{\sspf}{\stp_0}}
$ and so by
Lemma~\ref{lem:lr-compositionality},
$
\spfvalbound{\tsubst{\sv_0}{\ssub}}{\pot{\tsubst{\comptr{\sv_0}}{\msub}}}{\valbound{-}{-}{\stp_0}}{\sspf}
$.

To establish condition~(\ref{item:lr-map-subders}), 
assume \valbound{\sv_0'}{\mtm_0'}{\tau_0}
(which is an $R$-subderivation of the above, but we won't use this
fact).  Using the substitution lemmas we need to show
$
\valbound {\tsubst{\sv_1}{\ssub,\sv_0'/x}} {\pot{\tsubst{\comptr{\sv_1}}{\msub,\mtm_0'/x}}} {}
$.
Since \subbound{\ssub,\sv_0'/x}{\msub,\mtm_0'/x}{}, the IH on $\sv_1$ gives 
$
\expbound{\tsubst{\sv_1}{\ssub,\sv_0'/x}}{\tsubst{\comptr{\sv_1}}{\msub,\mtm_0'/x}} {}
$
and since {\tsubst{\sv_1}{\ssub,\sv_0'/x}} is a value, it evaluates to
itself, so
$
\valbound{\tsubst{\sv_1}{\ssub,\sv_0'/x}}{\pot{\tsubst{\comptr{\sv_1}}{\msub,\mtm_0'/x}}} {}
$
as we needed to show.  

Now we apply Lemma~\ref{lem:lr-map} to
\seval{{\smap{\sspf}{x.\tsubst{\sv_1}{\ssub,x/x}}{\tsubst{\sv_0}{\ssub}}}}{\sc}{\sv}
to conclude~(*).

\proofcase{$\typejudge\sctx{\slet{e_0} x {e_1}}{\tau}$}
%
Applying the substitution lemmas, we need to show
\[
{\slet{\tsubst{\stm_0}{\ssub}}{x}{\tsubst{\stm_1}{\ssub,x/x}}}\bounded
{\pair{\cst{{\mtm_0}}
  \costplus \cst{{\tsubst{\comptr{\stm_1}}{\msub,\pot{{\mtm_0}}/x}}}}{\pot{{\tsubst{\comptr{\stm_1}}{\msub,\pot{{\mtm_0}}/x}}}}}
\]
where $\mtm_0 = \tsubst{\comptr{\stm_0}}{\msub}$.  

Assume let evaluates, then by inversion and applying the substitution
lemma, 
\[
\infer{\seval{\slet{\tsubst{\stm_0}{\ssub}}{x}{\tsubst{\stm_1}{\ssub,x/x}}}{\sc_0 \costplus \sc_1}{\sv_1}}
      {\seval{\tsubst{\stm_0}{\ssub}}{\sc_0}{\sv_0} &
        \seval{\tsubst{\stm_1}{\ssub,\sv_0/x}}{\sc_1}{\sv_1}}
\]
Applying the IH to~$e_0$ gives
$\homj{} {\sc_0}{\cst{{\mtm_0}}}$
and
$\valbound {\sv_0} {\pot{{\mtm_0}}} {}$.
Therefore \subbound{\ssub,\sv_0/x}{\msub,{\pot{{\mtm_0}}}/x}, so applying
the IH to the evaluation of~$\cl{\stm_1}{\extend\ssub{v_0}x}$ gives
\[
\homj {} {\sc_1} {\cst{\tsubst{\comptr{\stm_1}}{\msub,{\pot{{\mtm_0}}}/x}}}
\qquad
\valbound {\sv_1} {\pot{\tsubst{\comptr{\stm_1}}{\msub,{\pot{{\mtm_0}}}/x}}} {}.
\]
Monotonicity of $\costplus$ gives
$
\homj {} {\sc_0 \costplus \sc_1} {{\cst{{\mtm_0}}} \costplus  {\cst{\tsubst{\comptr{\stm_1}}{\msub,{\pot{{\mtm_0}}}/x}}}}
$
so transitivity/weakening and $\beta$ for pairs gives the results.  
\end{proof}

\section{Models of the Complexity Language}
\label{sec:mon-interp-examples}

A model of the complexity language consists of an interpretation of 
types as preorders, and of terms as maps between elements of those preorders, validating
the rules of Figure~\ref{fig:monotonic_type_theory}.  The congruence
contexts $\congctx$, but not all terms, need to be monotone maps.  

\subsection{The Size-Based Complexity Semantics}
\label{sec:size-based-is-mon}

We showed in Section~\ref{sec:size_semantics} that the size-based
semantics interpets the syntax of the complexity language; it is also a
model of the preorder rules of Figure~\ref{fig:monotonic_type_theory}.
Congruence is established by induction on $\congctx$; we do
not need programmer-defined size functions to be monotonic, because
there is no congruence context for datatype constructors.
The step rule for the recursor is verified as follows:
\begin{align*}
\llbracket\creckw&(\cC\mtm_0,\overline{x\mapsto\mtm_{\cC}})\rrbracket\xi \\
  &= \bigmax_{\semsize z\leq \den{\cC\mtm_0}\xi}\semcase(z,(\dots,f_\cC,\dots)) \\
  &= \bigmax_{\semsize z\leq \semsize(\cC\den{\mtm_0}\xi)}\semcase(z,(\dots,f_\cC,\dots)) \\
  &\geq \semcase(\cC\den{\mtm_0}\xi,(\dots,f_\cC,\dots)) \\
  &= \den{\mtm_{\cC}}{\extendenv\xi x {\den{\cmap{\Phi_{\cC}}{\bind w {\cpair{w}{\crec{w}{x}{\mtm_{\cC}}}}}{\mtm_0}}{\xi}}}.
\end{align*}
Therefore, Theorem~\ref{thm:bounding} is a corollary of Theorem~\ref{thm:mon-bounding}.

\subsection{Infinite-Width Trees}

Infinite-width trees can be defined by a
datatype declaration with a function argument, such as
\[
\sdatatypekw\:\stree = \sconstrdecl E \sunit \sconstrsep \sconstrdecl N {\sprd{\sint}{(\sarr{\snat}{\stree})}}
\]
Though every branch in such a tree is of finite length, the height of a
tree is in general not a finite natural number.%
\footnote{Because we can only construct values using
$\sreckw$, we cannot define infinite-length branches (i.e.,
coinductively-defined data) in our source language.}  
However, the size-based semantics adapts easily to interpret $\stree$ by
a suitably large infinite successor ordinal, and then defining
$\semsize(N(x, f)) = \bigmax_{y\in\den\snat{}}f(y) + 1$.

\subsection{A Semantics Without Arbitrary Maximums}

The language studied in \citet{danner-et-al:plpv13} can be viewed as a
specific signature in the present language.  Their language has a
type of booleans, a type $\sint$ of fixed-size integers, and a type $\slist$
of integer lists.  As in Example~\ref{ex:tree-mem}, we can treat $\sint$
and $\sbool$ as enumerated datatypes with unit-cost operations.
The $\slist$ type is defined as a datatype
and its case and fold operators are easily defined
using $\sreckw$.

For this specific signature, we can give a semantics of the complexity
language that does not require arbitrary maximums in the semantics of
each type, and where we interpret $\slist$ by $\N$, the natural numbers.
Set $\den{\cnil}{\xi} = 0$ and $\den{\ccons(\mtm_0,\mtm_1)}{\xi} =
\den{\mtm_1}\xi+1$.   
Define a semantic primitive recursion
operator
$\semrec[\sigma]\oftype\N\cross\sigma\cross(\N\cross\sigma\to\sigma)\to\sigma$
by
\[
\semrec(0, a, f) = a 
\qquad
\semrec(n+1, a, f) = a\bmax f(n, \semrec(n, a, f)).
\]
Finally, set
\[
\den{\creckw(E)}\xi =  \\
\semrec(\den\mtm\xi, \den{\mtm_{\cnil}}{\xi}, \llambda n,w.\den{\mtm_{\ccons}}{\extendenv\xi{x,xs,r}{1,n,w}}).
\]
where
$\creckw(\mtm) =
{\crecm{\mtm}{\cnil\mapsto {\mtm_{\cnil}},\ccons\mapsto\bind{\cpair{x}{\spair{xs}{r}}}{\mtm_{\ccons}}}}$.
Verifying the preorder rules from Figure~\ref{fig:monotonic_type_theory}
is straightforward in
all cases except the last, which we verify as follows:
\begin{align*}
\den{\creckw(\cnil)}\xi
&= \den{\mtm_{\cnil}}{\extendenv\xi x 1} \\
&= \den{\subst{\mtm_{\cnil}}{\ctriv}{x}}\xi \\
&= \den{\subst{\mtm_{\cnil}}{\cmapkw^{\cunit}(\bind y {\cpair{y}{\creckw(y)}}, \ctriv)}{x}}\xi
\end{align*}
and
\begin{align*}
\llbracket\creckw(\ccons(\mtm_0,\mtm_1))\rrbracket\xi
&= (\den{\mtm_{\cnil}}{\extendenv\xi{x}{1}})\bmax \\
&\qquad(\den{\mtm_{\ccons}}{\extendenv\xi{x,xs,r}{1,\den{\mtm_1}\xi,\semrec(\den{\mtm_1}\xi,\dots)}}) \\
&\geq \den{\mtm_{\ccons}}{\extendenv\xi{x,xs,r}{1,\den{\mtm_1}\xi,\semrec(\den{\mtm_1}\xi,\dots)}} \\
&=\den{\subst{\mtm_{\ccons}}{\mtm_0, \cpair{\mtm_1}{\creckw(\mtm_1)}}{x,\cpair{xs}{r}}}\xi \\
&= \den{\subst{\mtm_{\ccons}}{\mtm_0,\cmapkw(\bind y {\spair{y}{\creckw(y)}},\cpair{\mtm_0}{\mtm_1})}{x,xs,r}}\xi.
\end{align*}
A natural question is why we must take 
$\semrec(n+1,a,f) = a\bmax f(\semrec(n, a, f))$, since the above proof
seems to carry through with $\semrec(n+1,a,f) = f(\semrec(n, a, f))$.
The problem is that if we use this latter definition, then the resulting
interpretation fails to satisfy the congruence axiom for
contexts of the form $\creckw(\hole,\dots)$.

\subsection{Exact Costs}

If we wish to reason about exact costs, we can symmetrize the
inequalities in Figure~\ref{fig:monotonic_type_theory} into equalities,
and add congruence for all contexts, which makes the $\homj{}{E_0}{E_1}$
judgement into a standard notion of definitional equality.  Then we can
take the term model in the usual way, interpreting each type as a set of
terms quotiented by this definitional equality.
The preorder judgement is interpreted as equality.  In this
interpretation ${\trans e}_c$ is a recurrence that gives the exact cost
of evaluating~$e$, but reasoning about such a recurrence involves
reasoning about all of the details of the program.


\subsection{Infinite Costs}
\label{sec:inf-costs}

Next, we consider a size-based model in which we drop the ``increasing'' 
requirement on the $\semsize$ functions
from Section~\ref{sec:size_semantics}.  
Rather than requiring a well-founded partial order for each datatype,
we require an arbitrary partial order $(S^\tau,\leq_\tau)$ which we
also interpret as a flat CPO (we do not require the interpretation of
non-datatypes to be CPOs).
The interpretation of $\creckw$ expressions is then
in terms of a general fixpoint operator.  Define
$\infty = \bigvee S^\DD$ and identify~$\infty$ with the 
bottom element of the CPO ordering.
In this setting it may be that the interpretation of a $\creckw$ expression
does not terminate and hence, by our identification, evaluates to~$\infty$.
This turns out to be exactly the right behavior, as we can see in the
following example.

Take the standard
inductive definition of $\snat$ and interpret $\cnat$ as some
one-element set~$\set{1}$
in the complexity language, so $\semsize_{\cnat}$ is a constant
function---that is, declare that all $\snat$ values have the
same size.  Now compute the interpretation of the identity function:
\[
\begin{split}
&\den{\trans{\sreckw(y, \sZero\mapsto \sZero, \sSucc\mapsto x.\sSucc\,x)}}{} \\
&\quad= \creckw(1, \cZero\mapsto (0, 1) \mid \cSucc\mapsto \bind {\ctuple{x, r}} {(1+r_c, 1)}) \\
&\quad= \bigvee_{\semsize z\leq 1}\semcase( z, {\cZero\mapsto {(0, 1)} \mid \cSucc\mapsto \bind {\ctuple{x, r}} {1+e_c(x)}})
\end{split}
\] where
\[
e(x) = \crecdeclm x {\cZero\mapsto {(0, 1)}\mid \cSucc\mapsto\bind {\ctuple{x, r}} {(1+r_c, 1)}}
\]

Since $\semsize(\cSucc(1)) = 1 \leq 1$, one of the $\semcase$ expressions
in the maximum is~$e_c(1)$.
In other words, we have a non-terminating recursion in computing the
complexity.
We conclude $\den{\cst{\trans{\sreckw(\dots)}}}{} = \infty$; in other words,
we can draw no useful conclusion about the cost of this expression.
This a feature of our approach rather than a bug.
What we have done in this example
is to declare that we cannot distinguish values of type $\snat$ by
size (they all have the same size), and then we attempt to compute the cost
of a recursive function on |nat|s in terms of the size of the recursion 
argument.  The bounding theorem still applies in this setting, and hence the interpretation gives us a 
bound on the cost of the computation.  In this case, the bound is just
not a useful one; it does not even tell us that the computation terminates.

\section{Related Work}
\label{sec:related_work}

There is a reasonably extensive literature over the last several
decades on (semi-)automatically constructing resource bounds
from source code.
The first work concerns itself with first-order
programs.
\citet{wegbreit:cacm75} describes a system for analyzing
simple Lisp programs that produces closed forms that bound
running time.  An interesting aspect of this system is that it
is possible to describe probability distributions on the input
domain
and the generated bounds incorporate
this information.
\citet{rosendahl:auto_complexity_analysis} proposes a system
based on step-counting functions and abstract interpretation
for a first-order subset of Lisp.  More recently the
COSTA project (see, e.g., 
\citet{albert-et-al:tcs12:cost-analysis})
has focused on automatically computing cost relations for
imperative languages (actually, bytecode) and solving them
(more on that in the next section).
\citet{debray-lin:toplas93:cost-analysis-logic-programs} develop
a system for analyzing logic programs and
\citet{navas-et-al:iclp07:user-definable-resource-bounds} extend it
to handle user-defined resources.

The Resource Aware ML project (RAML) takes a different approach to the
one we have described here, one based on type assignment.
\citet{jost-et-al:popl10} describe a formalism that automatically infers
linear resource bounds for higher-order programs, provided that the
input program does in fact have a linear resource cost.
\citet{hoffmann-hofmann:esop10} and
\citet{hoffmann-et-al:toplas12:multivariate-amortized} extend this work
to handle polynomial bounds, though for first-order programs only, and
\citet{hoffmann-shao:esop15:parallel} extend it to parallel programs.
RAML uses a source language that is similar to ours, but in which the
types are annotated with variables corresponding to resource usage.
Type inference in the annotated system comes down to solving a set of
constraints among these variables.  A very nice feature of this work is
that it handles cases in which amortized analysis is typically employed
to establish tight bounds, while our approach can only conclude (worst-case) 
bounds.  

\citet{danielsson:popl08} uses an annotated monad (similar to $\C \times
-$, but dependent on the cost) to track running time in a dependently
typed language, where size reasoning can be done via types. He
emphasizes reasoning about amortized cost of lazy programs.  However, he
relies on explicit annotation of the program, which our complexity
translation inserts automatically, and his correctness theorem is only
for closed programs, whereas we use a logical relation 
to validate extracted recurrences.

We now turn to work that is closest in spirit to ours, focusing on
those aspects related to analysis of higher-order languages.
\citeauthor{lematayer:toplas88}'s \citeyearpar{lematayer:toplas88}
ACE system is a two-stage system
that first converts FP
programs \citep{backus:fp} to recursive FP programs describing 
the number of recursive calls of the source program, then
attempts to transform the result using various program-transformation
techniques to obtain a closed form.
\citet{shultis:complexity} defines a denotational semantics for
a simple higher-order language that models both the value and the
cost of an expression.  As a part of the cost model, he develops a
system of ``tolls,'' which play a role similar to the potentials
we define in our work.  The tolls and the semantics are not used
directly in calculations, but rather as components in a logic for
reasoning about them.
\citet{sands:thesis} puts forward a translation scheme in which
programs in a source language are translated into programs in
the same language that
incorporate cost information; several source languages are discussed,
including a higher-order call-by-value language.  Each identifier~$f$
in the source language is associated to a \emph{cost closure} that incorporates
information about the value $f$ takes
on its arguments; the cost of applying~$f$ to arguments; and arity.
Cost closures are intended to address the same issue our higher-type
potentials do:  recording information about the future cost
of a partially-applied function.
\citet{van-stone:thesis} annotates the operational semantics for
a higher-order language with cost information.  She then
defines a category-theoretic denotational
semantics that uses ``cost structures'' 
to capture cost information and shows
that the latter is sound with respect to the former.
\citet{benzinger:tcs04} annotates NuPRL's call-by-name operational semantics
with complexity estimates.  The language for the annotations is left
somewhat open so as to allow greater flexibility.  The analysis of
the costs is then completed using a combination of NuPRL's proof
generation and Mathematica.  
In all of these approaches the cost domain incorporates information
about values in the source language so as to provide exact costs.  
Our approach provides a uniform framework that can be more or less
precise about the source language values that are represented.
While we can implement a version that
handles exact costs, we can also implement a version in which we focus
just on upper bounds, which we might hope leads to simpler recurrences.

\section{Conclusions and Further Work}
\label{sec:concl_further_work}

We have described a denotational complexity analysis for a higher-order
language with a general form of inductive datatypes that yields an upper
bound on the cost of any well-typed program in terms of the size of the input.  The two steps are to
translate each source-language program~$e$ into a program~$\trans e$ in
a complexity language, which makes costs explicit, and then to abstract
values to sizes.  We prove a bounding theorem for the translation, a
consequence of which is that the cost component of~$\trans e$ is an
upper bound on the evaluation cost of~$e$.  The proof the
bounding theorem is purely syntactic, and therefore applies in all
models of the complexity language.  By varying the semantics of
the complexity language (and in particular, the notion of size), we can
perform analyses at different levels of granularity.  We give several
different choices for the notion of size, but ultimately this is too
important a decision to take out of the hands of the user through
automation.

The complexity translation of Section~\ref{sec:complexity_lang} can
easily be adapted to other cost models.  For example, we could charge
different amounts for different steps.  Or, we could analyze
the work and span of parallel programs by taking $\C$ to be
series-parallel cost graphs, something we plan to investigate
in future work.


Another direction for future work is to handle
different evaluation strategies.
Compositionality is
a thorny issue when considering call-by-need evaluation and lazy
datatypes, and as noted by 
\citet{okasaki:purely-functional-data-structures},
it may be that amortized cost is at least
as interesting as worst-case cost.
\citet{sands:thesis}, \citet{van-stone:thesis},
and \citet{danielsson:popl08} address laziness in their work, and
as we already noted, RAML already performs
amortized analyses.

We plan to extend the source language to handle general
recursion.  Part of the difficulty here is that the bounding relation
presupposes termination of the source
program (so that the derivation of $\evalin e v n$, and
hence cost, is well-defined).
One approach would be to require the user to supply
a proof of termination of the program to be analyzed.  Or,
one could define
the operational semantics of the source language co-inductively
(as done by, e.g., \cite{leroy-grall:ic09-coind-big-step}), thereby
allowing explicitly for non-terminating computations.  
Another approach is to adapt the
partial big-step operational
semantics described by
\citet{hoffmann-et-al:toplas12:multivariate-amortized}.
Since our source language supports
inductive datatype definitions of the form
$\sdatatypekw~{\srckeyw{strm}}=\sconstrdecl{\scons}{\sarr\sunit{\snat\cross\srckeyw{strm}}}$,
adding general recursion will force us to understand how our
complexity semantics plays out in the presence of what are essentially
coinductively defined values.
One could
also hope to prove termination in the source language
by first extracting complexity bounds and
then proving that these bounds in fact define total functions.
Another interesting idea along these lines would be to define
a complexity semantics in which the cost domain is two-valued, with
one value representing termination and the other non-termination
(or maybe more accurately, known termination and not-known-termination);
such an approach might be akin to an abstract interpretation based
approach for termination analysis.


The programs $\trans \stm$ are complex higher-order recurrences that
call out for solution techniques.
\citet{benzinger:tcs04} addresses this idea, as do
\citet{albert-et-al:jar11,
albert-et-al:tocl13:inference} of the COSTA project.
Another relevant
aspect of the COSTA work is that their cost relations use non-determinism;
it would be very interesting to see if we could employ a similar approach
instead of the maximization operators that we used in our examples.
Ultimately we should have
a library of tactics for transforming the recurrences produced by
the translation function to closed (possibly asymptotic) forms when possible.

\bibliographystyle{abbrvnat}
\bibliography{icfp2015}

\end{document}